\documentclass[a4paper,fleqn]{cas-sc}
\usepackage[numbers, sort&compress]{natbib} 
\usepackage{subcaption}
\usepackage{algorithm}
\usepackage{amsthm}
\usepackage{algpseudocode}
\usepackage{tablefootnote}
\newtheorem{theorem}{Theorem}
\newdefinition{rmk}{Remark}
\newtheorem{lemma}[theorem]{Lemma}
\newproof{pf}{Proof}
\newdefinition{df}{Definition}

\begin{document}
\let\WriteBookmarks\relax
\def\floatpagepagefraction{1}
\def\textpagefraction{.001}

% Main title of the paper
\title [mode = title]{A fast improved quasi-physical dynamic algorithm for efficient wireless coverage in convex polygonal regions}  

\author[1]{Zeping Yi}%[<options>]

% Footnote of the first author
\fnmark[1]
% Email id of the first author
\ead{yzping@buaa.edu.cn}

% Address/affiliation
\affiliation[1]{organization={School of Mathematical Sciences, Beihang University},
            city={Beijing},
            postcode={100191}, 
            country={China}}

\author[1]{Yongjun Wang}%[]

% Corresponding author indication
\cormark[1]
% Footnote of the second author
\fnmark[2]
\ead{wangyj@buaa.edu.cn}

\author[1]{Baoshan Wang}%[]

% Footnote of the second author
\fnmark[3]
\ead{bwang@buaa.edu.cn}

\author[1]{Jian Zhang}%[]
\fnmark[4]
\ead{sy2509220@buaa.edu.cn}
% Footnote of the second author

\author[1]{Songyi Liu}%[]

% Footnote of the second author
\fnmark[5]
\ead{liusongyi@buaa.edu.cn}

\begin{abstract}
Deploying wireless nodes to maximize coverage area within a given region is an important challenge in wireless sensor networks, UAV path planning, base station placement and other industrial fields. This practical problem can be mathematically equivalent to an optimal circle covering problem. Although theoretical optimal configurations exist for simple cases in mathematics, the NP-hard nature of this problem makes it computationally prohibitive for complex polygons with numerous nodes. Existing approaches are usually designed for regular domains, while those applicable to irregular polygons often suffer from poor initialization, excessive coverage overlap and failure to constrain nodes within the boundary, leading to low coverage efficiency and long runtime. To address these issues, we propose an improved quasi-physical dynamic algorithm (IQPD) for wireless node deployment in arbitrary convex polygons. Our contributions are threefold: (1) proposing a structure-preserving initialization that maps a hexagonal close packing pattern into the target polygon via scaling and affine transformation, ensuring near-optimal initial node distribution; (2) constructing a refined virtual force model by incorporating friction and a radius-expansion optimization mechanism to reduce coverage area overlap; (3) developing a boundary encircling strategy leveraging normal and tangential gradients to reposition nodes deployed outside boundaries after initial optimization. Extensive experimental results demonstrate that our method consistently outperforms other new metaheuristic algorithms across diverse convex polygon shapes, including randomly generated data and real-world scenarios. Our method achieves the highest coverage rate and node utilization rate among all compared algorithms, greatly improving wireless coverage efficiency. 
\end{abstract}

\begin{keywords}
  Hexagonal close packing \sep Boundary encircling strategy \sep Improved quasi-physical algorithm \sep Metaheuristic algorithms \sep Coverage optimization \sep Computational geometry \sep Wireless sensor networks 
\end{keywords}
\maketitle
% Main text
\section{Introduction} 

Achieving maximum coverage through optimal deployment of wireless nodes is an important challenge in applications such as wireless sensor networks, UAV path planning, base station placement and other industrial fields. \cite{MOHAMMADI2026117912,LIU2021103019,YANG2025117303,YANG2025117108,Kumar2024,CHOWDHURY2021102660,biomimetics10110750,Qi2025,Shaikh2025,Yang2025,ZHAO2026117922,LI2026117477,S01102024, LI2025111431,WANG2018196,Tong2025}. From the mathematical perspective, the problem of optimizing wireless node deployment can be viewed as a circle covering optimization problem. Each wireless node can be regarded as the center of a circle which represents its coverage area, and the target service space is modeled as a given polygonal region. To formulate this problem precisely, we establish the following mathematical model. Given a polygonal region $P$, a set of congruent circles $\left\{ C_i \right\} _{i=1}^{n}$ of radius $r$, and their center coordinates \((x_i, y_i)\ (i = 1, \ldots, n)\), the coverage rate (CR)  is defined as 
\begin{equation}
CR=\text{Area}\left( P\cap \left( \bigcup_{i=1}^n{C_i} \right) \right) /\text{Area}\left( P \right).
\label{eq:CR}
\end{equation}
The mathematical formulation of the  optimal covering problem is as follows:
\begin{equation}
	\underset{\left( x_1,y_1 \right) ,\cdots ,\left( x_n,y_n \right)}{maximize} CR,
\end{equation}
\begin{equation}
	s.t.\quad C_i \cap P \neq \emptyset, \forall i.
\end{equation}

This optimization problem is closely related to the circle packing and covering problem \cite{Kravitz01031967,2011,2016,RYU2020125076,2021,2023,20241}. The covering problem typically aims to cover a given region completely using a set of congruent circles with overlap permitted. In contrast, the packing problem involves placing a set of circles (the radii are not necessarily equal) within a given container without any overlap. As early as the time of Archimedes, researchers began investigating optimal configuration of circles in the plane. In the 17th century, Kepler discussed the close packing of congruent circles on a plane in his work $On\ the\ Six-Cornered\ Snowflake$, and conjectured that hexagonal packing of spheres would be the densest configuration in three-dimensional space \cite{2011}. According to Kepler's conjecture, Fejes Tóth gave a rigorous proof \cite{toth1949dichteste} that the hexagonal arrangement is the optimal solution for circle packing on a plane (with a density of \(\frac{\pi}{2\sqrt{3}} \approx 0.9069\)) in 1949. This important theorem has provided valuable insights for future research on problems of circle packing and covering in irregular polygons \cite{2016,RYU2020125076,2021,2023,20241}.

Due to the strong correlation between optimal wireless node deployment and the circle packing and covering problem, existing methods for  solving the latter, particularly the quasi-physical approach \cite{WENQI2002195,Huang2011,ZHU2016506,HE201826}, can be adapted to the deployment problem. However, this work remains rarely explored in the existing literature. Furthermore, because of the NP-hard nature of the optimal deployment problem for wireless network nodes, metaheuristic algorithms \cite{MOHAMMADI2026117912,YANG2025117303,YANG2025117108,Kumar2024,CHOWDHURY2021102660,biomimetics10110750,Qi2025,Shaikh2025,Yang2025,ZHAO2026117922,LI2026117477,S01102024, LI2025111431,WANG2018196,Tong2025} have become the mainstream approach for solving it, such as improved bee foraging learning particle swarm optimization algorithm \cite{Qi2025}, improved chaotic grey wolf optimizer \cite{Shaikh2025}, quantum genetic algorithm \cite{LI2025111431}, etc. Researchers often draw inspiration from  the collective behavior of animal swarms to enhance the internal search mechanisms of optimization algorithms or to integrate novel optimization strategies, with the aim of improving convergence speed and coverage efficiency. Although these metaheuristic algorithms can achieve high coverage efficiency in regular polygons, they often encounter issues such as excessive overlap among coverage areas, nodes being deployed outside the target region and long run time when dealing with complex irregular polygons. 

To address these challenges, we propose an improved quasi-physical dynamic algorithm (IQPD) differing from current metaheuristic algorithms.
Our contributions are summarized as follows: 
\begin{itemize}
	\item  We propose a structure-preserving initialization strategy 
	that maps a hexagonal close packing pattern into the target polygon via scaling and affine transformation to obtain a high-quality initial node distribution.
	\item We improve the quasi-physical simulation method proposed by Huang et al. \cite{WENQI2002195,Huang2011,ZHU2016506,HE201826} for solving circle packing problems. We mainly refine the virtual force model by incorporating friction and a radius-expansion optimization mechanism to reduce coverage area overlap. 
	\item We develop a boundary encircling strategy based on normal and tangential gradients to reposition nodes deployed outside the target region after initial optimization.
	\item We analyze computational complexity and convergence of our algorithm theoretically and carry out extensive experiments including ablation studies. Experimental results demonstrate that IQPD algorithm greatly improves coverage rate and node utilization rate in complex convex polygons compared to existing metaheuristic algorithms or typical quasi-physical approaches.
\end{itemize}

 The remainder of this paper is structured as follows: Section 2 reviews work related to this paper. Section 3 introduces fundamental concepts in computational geometry that underpin the proposed algorithm. Section 4 elaborates on the improved quasi-physical dynamic algorithm, including theoretical computational complexity and convergence analysis. Section 5 presents comparative experimental results. Section 6 concludes the paper and suggests future work.

\section{Related Work}
\subsection{Quasi-physical approach}

The quasi-physical approach was first proposed for solving packing problems by Huang et al. \cite{WENQI2002195} in 2002. In this approach, the circles to be packed are modeled as smooth elastic balls, and the enclosing boundary is treated as a rigid container. By hypothetically squeezing all the "balls" into the rigid container, elastic forces arise between balls, or between balls and the container wall, driving a series of movements. The result of these movements is expected to be a configuration in which each ball settles into an appropriate position, allowing all balls to be packed into the rigid container without overlap.
Compared with traditional man-made algorithms, its calculation  can
 execute more quickly and efficiently. Huang et al. \cite{Huang2011} refined the previously proposed quasi-physical algorithm by incorporating a series of new strategies in 2011. Refined approach comprises three key steps: (1) Quasi-physical descent procedure, where elastic forces are employed to gradually eliminate circle overlaps and guide all circles to a locally optimal configuration; (2) Quasi-physical basin-hopping procedure, which introduces attractive and non-contact repulsive force to make circles jump out of the local optimum trap and get a better solution in the entire solution space;
 (3) Container radius $R_0$ adjustment procedure, where a binary search method is applied to adjust circular container radius, ensuring the container is as small as possible while accommodating all circles without overlap. Across test instances with the number of circles ranging from 1 to 150, the algorithm improved the best-known packing solutions for 37 cases and reproduced the best-known results in 113 others. By integrating quasi-physical methods with global optimization, the approach effectively avoids local optima and enhances solution quality, highlighting the promise of quasi-physical algorithms in optimization. However, a limitation is its dependence on initial configuration, which can significantly influence convergence speed and optimal results. Therefore, Zhu \cite{ZHU2016506} introduced a quasi-human seniority-order (QS) algorithm to optimize initial configuration for circle packing. Its core idea is adopting a largest-first strategy: placing larger circles first and filling the interstitial gaps with smaller ones to maximize space utilization. Then the QS algorithm is combined with the quasi-physical approach. Compared with algorithms employing random initial configuration, experimental results demonstrate that the QS algorithm significantly reduces the number of iterations and computation time, while also mitigating the risk of premature convergence or infinite loop. Moreover, He et al. \cite{HE201826} proposed an efficient Quasi-Physical Quasi-Human (QPQH) hybrid algorithm for equal circle packing in 2018. Combining a modified Broyden-Fletcher-Goldfarb-Shanno (BFGS) algorithm with a quasi-human basin-hopping strategy, QPQH leverages neighborhood information of each circle to speed up computation. It achieved state-of-the-art results on multiple benchmarks, demonstrating high efficiency and robustness for large-scale problems. 

\subsection{Metaheuristic algorithms}
In recent years, metaheuristic algorithms have become the mainstream approach for solving the optimal coverage problem in wireless networks. A variety of novel metaheuristic algorithms have been proposed for wireless sensor networks, including improved swarm intelligence algorithms \cite{MOHAMMADI2026117912,LIU2021103019,biomimetics10110750,Qi2025,Shaikh2025,Yang2025,ZHAO2026117922,LI2026117477,S01102024, LI2025111431,WANG2018196,Tong2025} and heuristic algorithms that incorporate classical geometric structures \cite{LIU2021103019,CHOWDHURY2021102660}, machine learning \cite{CHOWDHURY2021102660,WANG2018196,XIA2025116791}, quantum computation \cite{LI2025111431}.

In 2021, Chowdhury et al.\cite{CHOWDHURY2021102660} proposed a Voronoi-Glowworm Swarm Optimization-K-means (VGSOK) algorithm, which integrates Glowworm Swarm Optimization, K-means algorithm, and Voronoi diagram. VGSOK algorithm partitions circle centers (sensor nodes) into dense regions via K-means, regarding the cluster centers as generators to construct a Voronoi diagram. It then models each circle center as a “glowworm” with an associated luciferin intensity, driving them to converge toward their corresponding Voronoi generators.  VGSOK effectively eliminates coverage holes caused by random initial deployment, improves coverage rate by 3–10\%, reduces practical energy consumption, and meets multi‑objective optimization requirements. In the same year, Liu et al.\cite{LIU2021103019} also presented a hybrid algorithm combining Particle Swarm Optimization with Voronoi diagram (PSOVD). It decouples the deployment problem into two dimensions. In the vertical dimension, the optimal altitude for maximizing communication coverage is derived mathematically. In the horizontal dimension, each agent employs a PSO-based local deployment algorithm to find a better position within its Voronoi cell, thereby greatly improving coverage rate of agents (sensor nodes). 

As application scenarios demand greater dynamic adaptability and convergence speed, research gradually focuses on improving the internal search mechanisms of metaheuristics. To address issues such as local optima entrapment (leading to "coverage holes") and slow convergence in large-scale conditions, Wang et al. \cite{biomimetics10110750} proposed a Multi-strategy Improved Flamingo Search Algorithm (MIFSA) in 2025. MIFSA incorporates several key strategies: (1) an elite opposition-based learning strategy during initialization to broaden search scope; (2) a  staged step-size control strategy combined with a cosine variation factor, which helps to escape from local optima entrapment; (3) an adaptive Lévy flight mechanism in the final stage to refine solution quality via stochastic perturbation. Extensive experimental results within a 50×50 $m^2$ square region demonstrated that MIFSA outperformed the original flamingo search algorithm (FSA). It achieved coverage rates 7.48\% and 5.68\% higher than those of the original FSA after 100 and 200 iterations, respectively, along with more uniform sensor node distribution and fewer coverage holes. 

To further tackle inefficiencies and coverage holes caused by uneven circle distribution, Yang et al. \cite{Yang2025} proposed an Improved Cuckoo Search algorithm with Multiple Strategies (ICS-MS) in 2025. ICS-MS employs several strategies: (1) a phased dimension-by-dimension update strategy (full-dimensional early, dimension-wise later) to reduce computational complexity; (2) an adaptive discovery probability mechanism, dynamically adjusting discovery probability based on population fitness variance; (3) bidirectional search and global-best guidance to accelerate convergence while maintaining population diversity; (4) opposition-based learning on elite individuals for subsequent deep local search. Compared with other metaheuristic algorithms under various numbers of sensor nodes (20/30) and iterations (200/1000), ICS-MS improved coverage rate by 2.32\% to 22.21\% and obtained more uniform node distribution.

Meanwhile, in order to improve both convergence speed and stability of metaheuristic algorithms, Shaikh et al.\cite{Shaikh2025} introduced an Improved Chaotic Grey Wolf Optimizer (ICGWO) in 2025. ICGWO integrates Gaussian chaotic mapping into the position-updating process and leverage chaotic sequences to diversify search paths. Unlike comparative algorithms (e.g., Particle Swarm Optimization or Ant Colony Optimization) that require intricate parameter tuning, ICGWO maintains a balance between global exploration and local exploitation with minimal parametric overhead. When compared on the same number of sensor nodes, ICGWO surpasses existing counterparts by 2.18–16.41\% in coverage rate, with pronounced advantages in large-scale deployments. The algorithm demonstrates high scalability and robustness, making it suitable for complex real-world scenarios.

Li et al.~\cite{LI2025111431}  also proposed a dynamically adjusted quantum genetic algorithm (DAQGA) in 2025. This approach has three improvements: (1) proposing a niching population initialization strategy based on a normal distribution to balance search scope and local search precision; (2) introducing a dynamic selection strategy for quantum rotation gates and designing a rotation angle adjustment formula based on fitness differences, thereby improving convergence speed; (3) incorporating a quantum catastrophe adjustment strategy to escape local optima. Compared with four traditional heuristic algorithms, DAQGA achieves a coverage improvement of 5.82\% to 18.24\% under the same conditions. Moreover, it exhibits faster convergence, stronger stability, and fewer coverage holes.

In 2026, a new method called Fuzzy Sign-aware Influence Maximization using an adaptive Improved Grey Wolf Optimizer (FSIMI-GWO) \cite{MOHAMMADI2026117912} was proposed by Mohammadi et al., which is more suitable for complex real-world applications. FSIMI-GWO introduces a fitness function that incorporates fuzzy logic to effectively handle diverse user relationships through appropriate weight assignments, where each user is treated as a sensor node. This approach obtains superior  the influence propagation in social networks compared with several new introduced methods.

\section{Preliminaries}
For simplicity, we model the wireless node deployment as an optimal circle coverage problem in this paper. The circle center represents a sensor node, and the circle itself represents its coverage region. The convex polygon corresponds to the target region to be covered. 

\subsection{Geometric computation}

\begin{df}[Convex Set]
	Given a point set \( S \), if for any \( \forall x, y \in S \), \( (1-\lambda)x + \lambda y \in S \) holds for all \( \forall \lambda \in [0,1] \), then \( S \) is a convex set. Geometrically, this intuitively means that a convex set contains the line segment connecting any two points within it.
\end{df}

\begin{df}[Convex Hull]
	The convex hull \(Conv(S)\) of a set S is the smallest convex set containing $S$, that is, the intersection of all convex sets that contain $S$.
\end{df}

A convex polygon in the plane can be defined as the boundary of the convex hull of a finite set of points~$S$. Such a polygon is typically represented by starting at an arbitrary vertex of the convex hull and connecting all vertices sequentially in clockwise (or counterclockwise) order to form a closed region. Consequently, the problem of constructing a convex polygon reduces to computing the convex hull of its given point set~$S$. Well-known algorithms for computing the convex hull are introduced in \cite{GRAHAM1972132,JARVIS197318,ANDREW1979216} and   polygon convexity can be verified as described in \cite{SCHORN19947}.

\begin{lemma}[\cite{berg2008}]
 Given any planar point set containing n points, its convex hull can be constructed in \(O(n\log n)\) time.
\end{lemma}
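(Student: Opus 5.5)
We establish the bound constructively, using Andrew's monotone chain algorithm \cite{ANDREW1979216}. It is a variant of Graham's scan \cite{GRAHAM1972132} and avoids angular sorting. The plan has three parts: a sorting phase, a linear-time stack-based sweep, and an amortized charging argument for the sweep. Correctness rests on a local characterization of convexity via orientation tests.

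\textbf{Step 1 (sorting).} Sort the $n$ points lexicographically by $x$-coordinate, breaking ties by $y$-coordinate. Any comparison sort such as merge sort does this in $O(n\log n)$ time. Let $p_1,\dots,p_n$ be the sorted order. Then $p_1$ and $p_n$ are extreme points of $Conv(S)$. The boundary of the hull splits into an upper chain and a lower chain, both running from $p_1$ to $p_n$.

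\textbf{Step 2 (sweep).} Build the upper chain by scanning $p_1,\dots,p_n$ while maintaining a stack. When $p_i$ arrives, let $a$ and $b$ be the second-to-top and top elements of the stack. While the stack holds at least two points and $a,b,p_i$ do not make a strict right (clockwise) turn, pop $b$. Then push $p_i$. The turn test is the sign of the determinant $(b-a)\times(p_i-a)$, which takes $O(1)$ arithmetic. The lower chain is built symmetrically by scanning from $p_n$ back to $p_1$. Concatenating the two chains, with the duplicated endpoints removed, gives the vertices of the hull in clockwise order.

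\textbf{Step 3 (correctness).} The main obstacle is proving that the output really is the boundary of $Conv(S)$. We argue by induction on $i$, with the invariant that after processing $p_i$, the stack is the upper hull of $\{p_1,\dots,p_i\}$.
\begin{itemize}
\item \emph{Popped points are not vertices.} A point $b$ is removed only when it lies on or below the segment from $a$ to $p_i$. Such a point lies in the convex hull of $\{a,p_i\}$ together with points below it, so it cannot be a vertex of the upper hull.
\item \emph{The new chain is the upper hull.} After the pops, every consecutive triple on the stack makes a strict right turn, so the chain is locally convex. Its $x$-coordinates are monotone. A locally convex chain with monotone $x$-coordinates is the graph of a concave function. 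Every processed point lies on or below this graph: this holds for earlier points by the inductive hypothesis, and for popped points by the first item.
\end{itemize}
Hence the chain is the upper boundary of the hull of the processed points. Degeneracies (collinear points and equal $x$-coordinates) are absorbed by the non-strict pop condition and the lexicographic order.

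\textbf{Step 4 (complexity).} In each sweep, every point is pushed exactly once and popped at most once. The total number of stack operations and orientation tests is therefore $O(n)$, by an amortized count charged to pushes. The total cost is $O(n\log n)+O(n)=O(n\log n)$, which proves the lemma.
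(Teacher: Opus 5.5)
Your proposal is correct and matches the argument behind the cited result: the paper gives no proof of its own and simply invokes \cite{berg2008}, whose convex-hull theorem is proved by exactly this lexicographic sort, upper/lower monotone-chain sweep, induction on the invariant that the list holds the upper hull of $\{p_1,\dots,p_i\}$, and count of one insertion and at most one deletion per point. The one step you state quickly is that previously processed points stay below the new chain; this needs every popped vertex to lie below the final segment ending at $p_i$, which follows by chaining the successive pop conditions, and the textbook treats it at the same level of detail.
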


\begin{lemma}[\cite{SCHORN19947}]
 For each vertex $P_i = (x_i, y_i)$ of the polygon, compute the cross product of vectors formed by three consecutive vertices $P_i \to P_{i+1} \to P_{i+2}$ sequentially. If all cross products have the same sign (either all positive or all negative), then the polygon is convex. Specifically, let $\overrightarrow{P_iP_{i+1}} = (x_{i+1} - x_i, y_{i+1} - y_i)$
and $\overrightarrow{P_{i+1}P_{i+2}} = (x_{i+2} - x_{i+1}, y_{i+2} - y_{i+1})$, then the cross product $\overrightarrow{P_iP_{i+1}} \times \overrightarrow{P_{i+1}P_{i+2}} \quad (i = 1, \dots, n-2)$
must be consistently signed.
\end{lemma}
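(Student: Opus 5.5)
The plan is to prove the equivalence for a \emph{simple} (non-self-intersecting) polygon $P_1P_2\cdots P_n$ with indices taken cyclically ($P_{n+1}=P_1$, $P_{n+2}=P_2$), so that all $n$ turns are checked, including the wrap-around ones. Both hypotheses are needed and I would state them explicitly. If the index range $i=1,\dots,n-2$ is taken literally, the two turns at $P_n$ and $P_1$ are never tested. If simplicity is dropped, the pentagram is a counterexample: all its turns have the same sign, but it is not convex. Write $c_i=\overrightarrow{P_iP_{i+1}}\times\overrightarrow{P_{i+1}P_{i+2}}$ and let $\theta_i$ be the direction angle of the edge $e_i=\overrightarrow{P_iP_{i+1}}$.

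\textbf{Convex $\Rightarrow$ constant sign.} Let $K=\mathrm{Conv}(\{P_1,\dots,P_n\})$, and assume each $P_i$ is a genuine vertex (no three consecutive vertices collinear). Then each edge $P_iP_{i+1}$ lies on a supporting line of $K$, so every other vertex lies strictly on one side of that line. The sign of $c_i$ says on which side of the line through $e_i$ the vertex $P_{i+2}$ lies. Traversing the boundary of a convex region in a fixed orientation keeps the region always on the same side (on the left for counterclockwise order). Hence all $c_i$ share one sign: positive for counterclockwise order, negative for clockwise.

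\textbf{Constant sign $\Rightarrow$ convex.} Assume without loss of generality that all $c_i>0$. Then each exterior angle $\alpha_i=\theta_{i+1}-\theta_i$, normalized to $(-\pi,\pi)$, lies in $(0,\pi)$.

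The key input is the turning-number theorem for simple closed polygons: $\sum_i\alpha_i=\pm2\pi$. Since every $\alpha_i>0$, the sum is exactly $2\pi$. Consequently, as we go around the polygon starting from $e_i$, the lifted direction angles $\theta_i<\theta_{i+1}<\cdots<\theta_{i+n-1}<\theta_i+2\pi$ increase strictly and cover less than one full turn.

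Now fix the edge $e_i$ and consider the signed distance $h(t)$ from the line through $e_i$ as we walk along the boundary from $P_{i+1}$ back to $P_i$. The derivative of $h$ along edge $e_k$ has the sign of $\sin(\theta_k-\theta_i)$. As $\theta_k-\theta_i$ runs monotonically through $(0,2\pi)$, this sign is first nonnegative and then nonpositive. So $h$ first increases and then decreases, starting and ending at $0$, which forces $h\ge 0$ throughout. Therefore every vertex lies in the closed left half-plane of every edge line. The polygon is then the intersection of these $n$ half-planes, and that intersection is convex.

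\textbf{Main obstacle.} The delicate step is justifying the total turning $\sum\alpha_i=2\pi$; this is exactly where simplicity enters. The first thing I would try is a direct argument that avoids the general Umlaufsatz. Take the lowest vertex (breaking ties by leftmost). At that vertex the boundary is locally convex, so the edge directions must make exactly one net revolution. Any extra revolution would force two edges to cross, by comparing the monotone sweep of the lifted directions against the fact that the boundary is a Jordan curve. If this becomes cumbersome, I would fall back on citing the polygonal Hopf Umlaufsatz, consistent with the paper's practice of citing \cite{SCHORN19947} for this lemma.
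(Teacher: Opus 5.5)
The paper gives no proof of this lemma; it only cites Schorn and Fisher, so your argument necessarily takes a different route.

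\textbf{Your repairs of the statement.} Both repairs are correct and necessary. With the literal range $i=1,\dots,n-2$ only the turns at $P_2,\dots,P_{n-1}$ are tested, so a dart quadrilateral whose reflex vertex is $P_1$ passes. With cyclic indices the pentagram still passes. The printed statement is therefore false as written.

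\textbf{Your argument.} The core argument is sound. The Umlaufsatz fixes the total turning at $2\pi$. The monotone sweep of $\theta_k-\theta_i$ through $(0,2\pi)$ then makes the height above each edge line unimodal, hence nonnegative.

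\textbf{Comparison with the cited source.} Schorn and Fisher handle self-intersecting polygons differently. They do not assume simplicity or invoke Hopf's theorem. Instead they add a second linear-time check: count how often the lexicographic direction of travel between consecutive vertices reverses, and reject if this exceeds two. This certifies directly that the edge directions wind around only once.

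Your route gives a transparent proof and an exact account of which hypotheses matter. Theirs gives a test valid for arbitrary vertex lists, which is what an algorithm needs, since otherwise simplicity would have to be verified separately at higher cost. Your height-function step needs only $\sum_i\alpha_i=2\pi$, which is exactly what their extra check certifies.

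\textbf{Two places to tighten.}

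First, the sentence asserting that the polygon is the intersection of the $n$ half-planes needs a line of support. Each edge now lies on a supporting line of $K=\mathrm{Conv}\{P_1,\dots,P_n\}$, hence in $\partial K$. A simple closed curve contained in the simple closed curve $\partial K$ must coincide with it, so the polygon bounds $K$.

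Second, the lowest-vertex sketch for the turning number is not yet an argument. Citing the polygonal Umlaufsatz is the right choice.
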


To determine whether  a circle center lies inside a convex polygon, it is necessary to introduce the right normal vector and right distance, as shown in Fig.~\ref{Fig:1}. 

\begin{df}[Right Normal Vector] 
	Given a directed segment $\overrightarrow{P_1P_2}$, its right normal vector is the unit normal vector that points to the right-hand side when traversing from $P_1$ to $P_2$. Formally, for the direction vector $\vec{v} = (v_x, v_y) = \overrightarrow{P_1P_2}$, the right normal vector is given by
	$\vec{n}_{\mathrm{right}} = (v_y, -v_x).$
\end{df}

\begin{df}[Right Distance] 
	The right distance from a point $A$ to the directed segment $\overrightarrow{P_1P_2}$ is defined to be positive if $A$ lies on the right-hand side of the segment, negative if on the left-hand side, and zero if on the line containing the segment. It is computed by the formula:
	$d_{\mathrm{P_1P_2}}(A) =  \frac{\overrightarrow{P_1A} \cdot \vec{n}_{\mathrm{right}}}{\|\vec{n}_{right}\|}$
	, where $\vec{n}_{\mathrm{right}}$ is the right normal vector as defined above.
\end{df}

\begin{figure}
	\centering
	 	\begin{subfigure}[b]{0.3\textwidth}
		\centering
		\includegraphics[width=\textwidth]{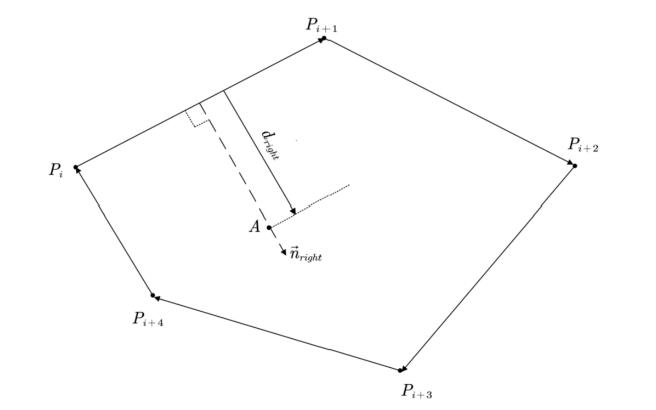}
		\caption{}
	\end{subfigure}
	\begin{subfigure}[b]{0.3\textwidth}
		\includegraphics[width=\textwidth]{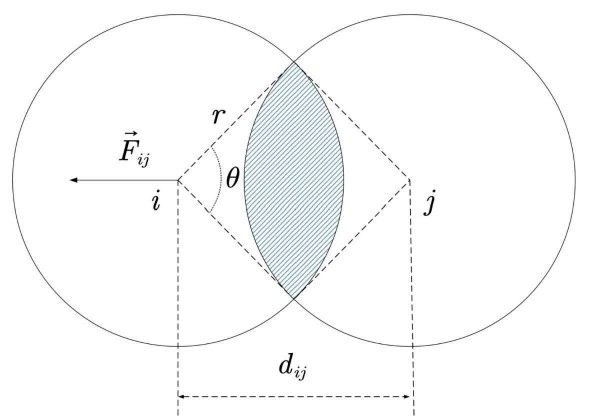}
		\caption{ }
		\label{fig:1b}
	\end{subfigure}
	\begin{subfigure}[b]{0.3\textwidth}
		\includegraphics[width=\textwidth]{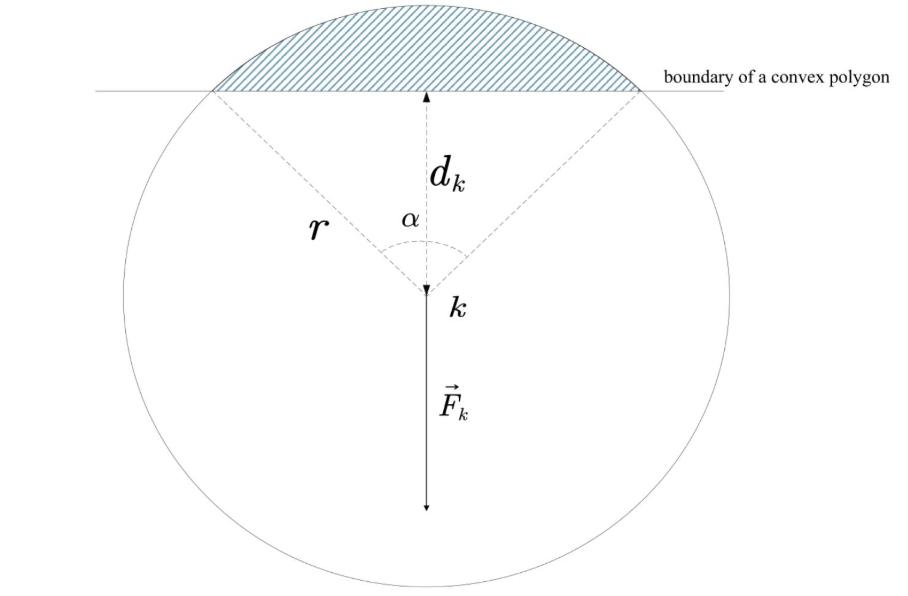}
		\caption{}
		\label{fig:1c}
	\end{subfigure}
	\caption{(a) A convex polygon composed of clockwise-oriented directed line segments; (b) Force between circles; (c) Force between a circle and the boundary}
	\label{Fig:1}
\end{figure}

\subsection{Typical quasi-physical model}

The typical quasi-physical model was initially proposed by Huang et al. \cite{WENQI2002195,Huang2011} for solving the equal-circle packing problem. The core idea is to model the circles as elastic balls confined within the convex polygon boundary, which acts as a container wall. By simulating the collision and squeezing motions among these elastic balls, they gradually spread out within the polygon, leading to an efficient packing configuration. This thought can be adapted to the circle covering optimization problem in convex polygons. 

\begin{df}[Configuration] 
	\label{def:5}
	The vector of circle centers at any time, denoted as $\boldsymbol{X}=\left[x_1,y_1,x_2,y_2,\cdots ,x_n,y_n\right]$, defines the configuration comprising a convex polygon and circles.
\end{df}

The system composed of a convex polygon and circles involves two types of elastic forces: circle–circle and circle–polygon boundary. Let \(\vec{F}_{ij}\) denote the elastic force exerted by circle \(j\) on circle \(i\), directed from \(j\) toward \(i\), and let \(\vec{d}_{ij}\) represent the displacement vector from the center \(j\) to the center \(i\). Similarly, let \(\vec{F}_k\) denote the elastic force exerted by the polygon boundary on circle \(k\), and \(\vec{d}_k\) the displacement vector from the polygon boundary to the center \(k\), as illustrated in Fig.\ref{Fig:1}. The two elastic forces are computed from the overlap depth in the typical model.

\subsection{Voronoi Diagram}

A Voronoi diagram is a partition of a planar region that divides the plane into \( n \) cells. Given a set of points \( S = \{s_1, s_2, \cdots, s_n\} \), these \( n \) distinct points are referred to as generators, with each generator corresponding to a partitioned cell. For example, the cell corresponding to generator \( s_i \) is denoted $\mathcal{V}(s_i)$, and the Voronoi diagram associated with point set \( S \) is denoted \(Vor(S)\).

\begin{lemma}[	\cite{berg2008}]
 A point $p \in \mathcal{V}(s_i)$ if and only if \( \delta(p, s_i) < \delta(p, s_j) \) for all \( s_j \in S (j \neq i) \), where \( \delta(p, s_i) \) represents the Euclidean distance between \( p \) and \( s_i \). 
\end{lemma}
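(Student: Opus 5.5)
The approach is to show that the Voronoi cell, defined in the standard way as $\mathcal{V}(s_i)=\bigcap_{j\neq i} h(s_i,s_j)$, coincides with the set of points strictly closer to $s_i$ than to every other generator. Here $h(s_i,s_j)=\{p:\delta(p,s_i)<\delta(p,s_j)\}$ is the open half-plane bounded by the perpendicular bisector of $s_is_j$ that contains $s_i$. Once the definition is fixed this way, the lemma follows by unfolding it, so the plan has three steps.

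\textbf{Step 1: fix the definition.} The paper only says that each generator corresponds to a cell. I will make this precise by defining $\mathcal{V}(s_i)$ as the region of the plane whose nearest generator is uniquely $s_i$, written as the half-plane intersection above. This is the convention of \cite{berg2008}. The main obstacle in the argument is exactly this definitional point: the lemma is only as meaningful as the definition of $\mathcal{V}(s_i)$. Points on bisectors, which are equidistant from two or more generators, must be excluded from every open cell, and they form the edges and vertices of $Vor(S)$. I would state explicitly that cells are taken open, so that the strict inequality in the lemma is correct.

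\textbf{Step 2: identify each half-plane.} For a fixed $j\neq i$, the condition $\delta(p,s_i)<\delta(p,s_j)$ is equivalent to $\|p-s_i\|^2<\|p-s_j\|^2$, since both sides are nonnegative. Expanding gives the linear inequality
\[
2\,p\cdot(s_j-s_i)<\|s_j\|^2-\|s_i\|^2 .
\]
Because $s_i\neq s_j$, this describes an open half-plane. Its boundary is the perpendicular bisector of $s_is_j$, and it contains $s_i$, since substituting $p=s_i$ gives $-\|s_i-s_j\|^2<0$.

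\textbf{Step 3: conclude.} A point $p$ lies in $\mathcal{V}(s_i)$ if and only if it lies in every $h(s_i,s_j)$, which by Step 2 holds if and only if $\delta(p,s_i)<\delta(p,s_j)$ for all $j\neq i$. This is exactly the biconditional in the lemma. As a by-product, each cell is a convex, possibly unbounded, polygonal region, being a finite intersection of half-planes.
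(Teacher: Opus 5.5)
Your proof is correct, but it runs in the opposite logical direction from the paper. The paper gives no proof of its own. In \cite{berg2008} the Voronoi cell is \emph{defined} as the set of points strictly closer to $s_i$ than to every other site, so the lemma is essentially a definition. The half-plane formula $\mathcal{V}(s_i)=\bigcap_{j\neq i}h(s_i,s_j)$ is then stated as a consequence (``It follows that\dots''). You instead take the half-plane intersection as the definition and derive the distance characterization.

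Both routes rest on the same single fact, your Step 2: the set $\{p:\delta(p,s_i)<\delta(p,s_j)\}$ is exactly the open half-plane bounded by the perpendicular bisector of $s_is_j$ and containing $s_i$. Under the paper's convention, that computation is what justifies the sentence following the lemma. Under yours, Steps 2--3 are the whole proof.

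Two small corrections:
\begin{itemize}
\item The half-plane intersection is not the defining convention of \cite{berg2008}. Your other phrasing, ``nearest generator is uniquely $s_i$,'' is the one that matches it, and under that definition the lemma needs no argument at all.
\item Your side remark about bisectors is too strong. A point on the bisector of $s_i$ and $s_j$ is excluded from $\mathcal{V}(s_i)$ and $\mathcal{V}(s_j)$. But it lies in $\mathcal{V}(s_k)$ whenever it is strictly closer to some third generator $s_k$. The points lying in no open cell, which form the edges and vertices of $Vor(S)$, are those whose distance to $S$ is attained by at least two generators.
\end{itemize}

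Neither issue affects the biconditional you prove.
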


For any two distinct points \( s_i \) and \( s_j \) in the plane, the perpendicular bisector of the line segment \( \overline{s_i s_j} \) partitions the plane into two half-planes. The open half-plane containing \( s_i \) is denoted by \( h(s_i, s_j) \), while the open half-plane containing \( s_j \) is denoted by \( h(s_j, s_i) \). It follows that the Voronoi cell of \( s_i \) is given by the intersection of all such half-planes induced by other generators:
$\mathcal{V}(s_i) = \bigcap_{\substack{1 \le j \le n \\ j \ne i}} h(s_i, s_j).$
The Voronoi diagram of the set \( S \) is then the union of all Voronoi cells:
$Vor(S) = \bigcup_{i=1}^n \mathcal{V}(s_i)$, as shown in Fig.~\ref{Fig:2}.
\begin{figure}
	\centering
	\begin{subfigure}[b]{0.25\textwidth}
		\centering
		\includegraphics[width=\textwidth]{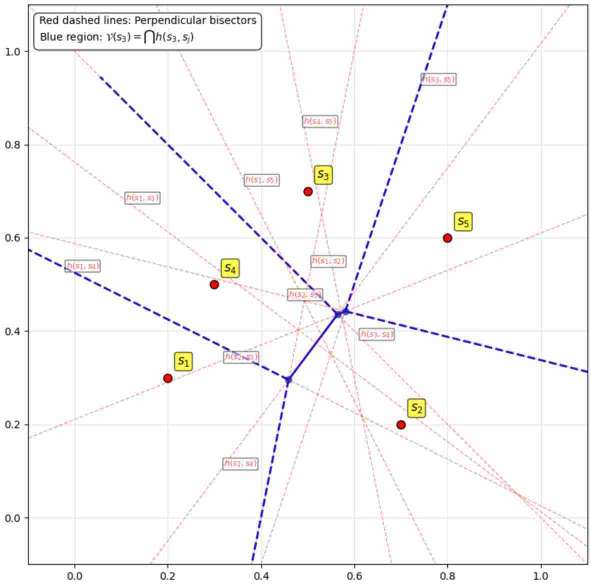}
		\caption{}
	\end{subfigure}
	\begin{subfigure}[b]{0.255\textwidth}
		\centering
		\includegraphics[width=\textwidth]{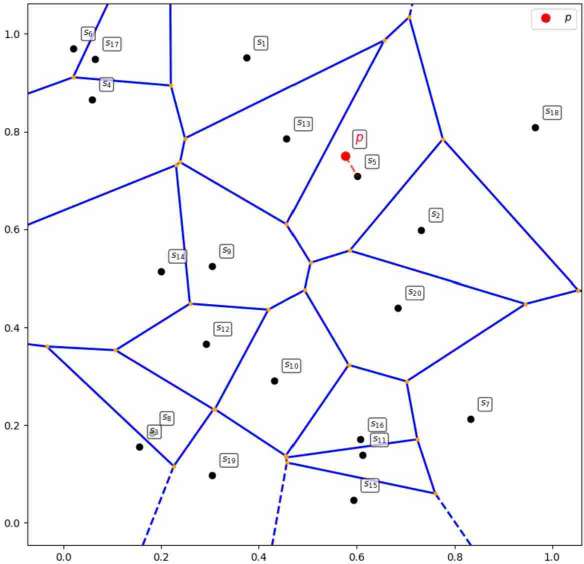}
		\caption{}
	\end{subfigure}
	\caption{(a) Intersection of all half-planes; (b) Voronoi diagram in $\mathbb{R}^2$}
	\label{Fig:2}
\end{figure}

\section{Improved quasi-physical dynamic algorithm }
 \subsection{Structure-preserving initialization algorithm}

The initial configuration of circles significantly influences the quality of the final near-optimal solution and computation time. Therefore, a well-chosen initial configuration is essential for subsequent high-quality optimization iterations. Inspired by the close packing principle of regular hexagons, we propose a structure-preserving initialization strategy via scaling and affine transformation.

First, a series of concentric regular hexagons is generated outward from the center on a base plane. Using appropriate transformation parameters, this configuration is projected into the target convex polygon, thereby transferring the close-packed circle pattern into the polygonal domain as shown in Fig.~\ref{Fig:3}.

This approach ensures that the circles in the interior region of the polygon are already in a near-optimal close-packed arrangement, requiring no further iterative optimization. Consequently, the subsequent optimization process can focus exclusively on adjusting the circles near the boundary. This strategy not only preserves the optimal packing structure across most of the area but also significantly reduces the computational cost of iterative optimization, leading to a substantial improvement in overall algorithmic efficiency.

\begin{figure}
	\centering
	\begin{subfigure}[b]{0.35\textwidth}
		\centering
		\includegraphics[width=\textwidth]{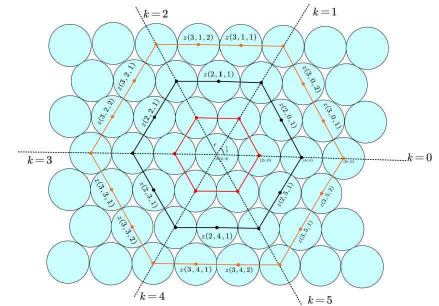}
		\caption{}
		\label{fig:1}
	\end{subfigure}
	\begin{subfigure}[b]{0.6\textwidth}
		\centering
		\includegraphics[width=\textwidth]{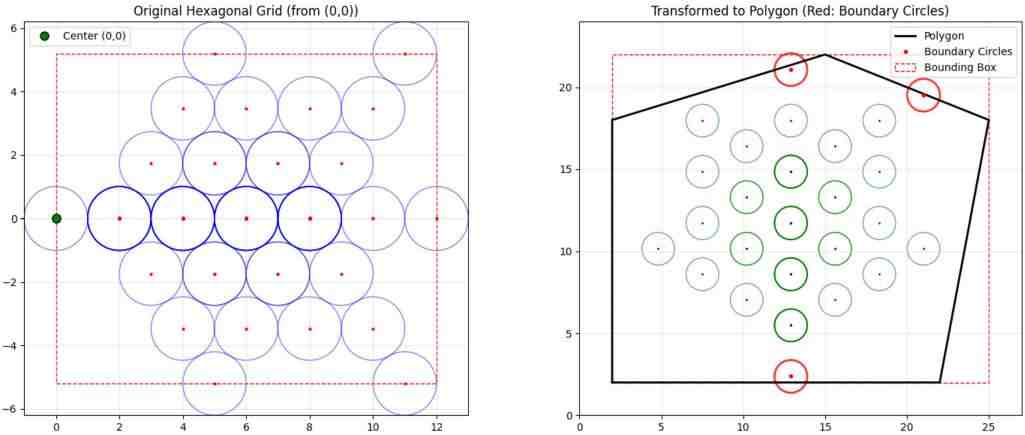}
		\caption{}
		\label{fig:2}
	\end{subfigure}
	\caption{(a) Regular hexagonal close packing of circles generated from the center outward; (b) The projection of densely packed circles into a convex polygonal domain through scaling and affine transformation}
	\label{Fig:3}
\end{figure}

\begin{df} [Circular Sector]
	A circular sector is defined as a planar region bounded by two rays emanating from the circle's center. According to Fig.~\ref{fig:1},We denote the region between $k=i$ and $k=i+1$ as the i-th sector.
\end{df}

Assume that the 0-th layer consists of a single circle with its center at \((0, 0)\). The first layer is composed of six circles generated around the initial circle along six respective directions, as represented in Fig.~\ref{Fig:3}. This pattern continues recursively: for the \(l\)-th layer (\(l \geq 1\)), \(l\) circles are generated along each of the six directions (with an angular interval of \(\frac{\pi}{3}\) radians between adjacent directions). The distance from the center of each circle in this layer to the origin is \(2lr\), and the distance between the centers of any two adjacent circles is \(2r\). To facilitate a concise formulation, complex numbers are employed for the circle center coordinates. Given a rotation angle of \(\theta = k \cdot \frac{\pi}{3}\) (where \(k = 0, 1, \dots, 5\)), the complex coordinate of the \(m\)-th circle in the \(k\)-th sector of the \(l\)-th layer is
\begin{equation}
	z\left( l,k,m \right) =2r\cdot l\cdot e^{i\cdot k\cdot \frac{\pi}{3}}+2r\cdot m\cdot e^{i\cdot \left( k+2 \right) \cdot \frac{\pi}{3}},k=0,\cdots ,5,m=0,\cdots ,l-1.
	\label{eq:11}
\end{equation}

\begin{df}[Minimum Bounding Rectangle, MBR]
	The Minimum Bounding Rectangle of a polygon is defined as the rectangle of the smallest area that can fully contain the polygon. For any convex polygon, the orientation of its MBR is commonly used to determine the polygon's principal direction.
\end{df}

Suppose the vertex coordinates of the minimum bounding rectangle are denoted as \(P_1=(x_1,y_1), P_2=(x_2,y_2), P_3=(x_3,y_3), P_4=(x_4,y_4)\) (arranged in a clockwise order). Without loss of generality, let \(P_1P_2\) be the longest side of the rectangle. Then, the direction vector of the rectangle is given by \(\vec{d} = (x_2 - x_1, y_2 - y_1)\), and the corresponding rotation angle is calculated as \(\theta = \arctan\left(\frac{y_2 - y_1}{x_2 - x_1}\right)\). For instance, consider an arbitrary random convex polygon \(P = \{(0.5, 3), (2, 1), (5, 2), (8, 4), (7, 7), (4, 8), (1, 5)\}\). Its principal direction is illustrated in Fig.\ref{figure:7}. Based on the orientation of the minimum bounding rectangle, we propose a structure-preserving initialization algorithm for hexagonal close-packed circles. The main steps are as follows:

\begin{figure}
	\centering
	\includegraphics[width=0.6\textwidth]{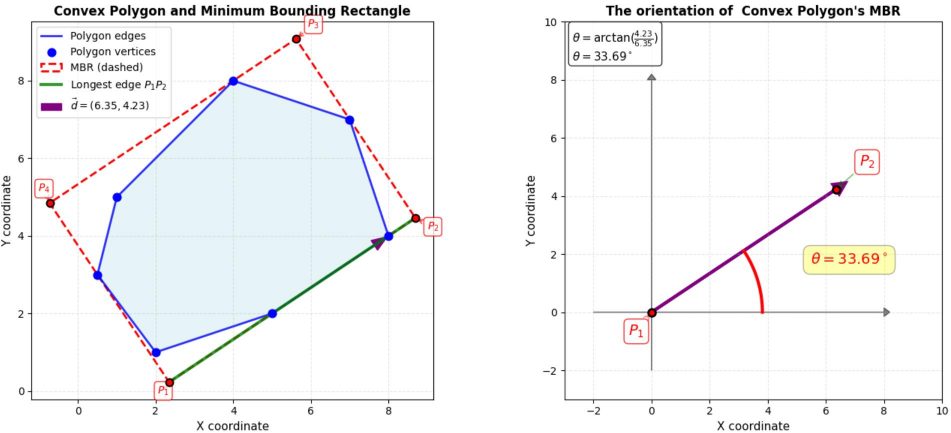} % 图片宽度撑满子图宽度
	\caption{Convex Polygon $P$ and its MBR}
	\label{figure:7}
\end{figure}

\textbf{Step 1:} Compute the MBR of an arbitrary convex polygon using the rotating calipers algorithm \cite{carlos_2020};

\textbf{Step 2:} Calculate the direction vector $\vec{d}$ and the rotation angle $\theta$ of the rectangle;

\textbf{Step 3:} Align the orientation of the hexagonal lattice with the principal direction of the polygon, as follows:
\begin{equation}
	z_{rolated}\left( l,k,m \right) =z\left( l,k,m \right) \cdot e^{i\theta};
\end{equation}

\textbf{Step 4:} Calculate the scaling factor $\beta$ to scale the hexagonal lattice to match the polygon's size:
\begin{equation}
	\left\{ \begin{array}{l}
		\beta =\min \left( \frac{w_p-2r}{w_h},\frac{h_p-2r}{h_h} \right) \cdot \alpha\\
		z_{scaled}\left( l,k,m \right) =\beta \cdot z_{rotated}\left( l,k,m \right)\\
	\end{array} \right. ,
	\label{eq:6}
\end{equation}
Here, $w_p$ and $h_p$ denote the length and width of the polygon's minimum bounding rectangle, respectively, while $w_h$ and $h_h$ represent the length and width of the bounding box of the hexagonal lattice as shown in Fig.~\ref{fig:2}. $r$ is the radius of the circles. To prevent the transformed circles from overflowing the polygon boundary, we define $\alpha \in (0.9, 0.95)$ as a safety coefficient. The terms $\frac{w_p-2r}{w_h}$ and $\frac{h_p-2r}{h_h}$ represent the ratios of the available space within the polygon to the original sizes of the hexagonal lattice along the length and width directions, respectively. The safety coefficient $\alpha$ is designed to reserve adjustment space for subsequent optimization iterations.

\textbf{Step 5:} Place the transformed hexagonal lattice within the polygon by aligning the lattice center with the polygon centroid $\boldsymbol{c}_p$, resulting in a roughly centered initialization as illustrated in Fig.~\ref{fig:2}:
\begin{equation}
	\left\{ \begin{array}{l}
		\boldsymbol{c}_p=\frac{1}{N}\sum_{i=1}^N{\boldsymbol{v}_i}=\left( \frac{1}{N}\sum{x_i},\frac{1}{N}\sum{y_i} \right)\\
		z_{final}\left( l,k,m \right) =z_{scaled}\left( l,k,m \right) +\boldsymbol{c}_p\\
	\end{array} \right. ,
\end{equation}
where $\boldsymbol{v}_i = (x_i, y_i)$ denotes the polygon vertices and $N$ is the number of vertices.

\textbf{Step 6:} Perform boundary detection on the affine-transformed circles and filter out those exceeding the polygon boundary using the boundary-aware filtering algorithm \cite{xianguang_2024}. Subsequently, apply the corner-occupancy principle \cite{WU2002341} to insert circles near the convex polygon interior.

The initialization algorithm enables the circles to maintain hexagonal close-packing characteristics within the polygon. The distance between adjacent circle centers becomes \(2\beta r\), while hexagonal symmetry is preserved. When the number of circles is small, the initial distribution corresponds to the optimal configuration, thus transforming the maximum area coverage problem into a traditional packing problem. Even with a larger number of circles, the algorithm establishes a high-quality starting point for subsequent iterative optimization.

\begin{algorithm}[H]
	\caption{Structure-preserving initialization algorithm}
	\label{alg:1}
	\begin{algorithmic}[1]
		\Require Convex polygon $P$, circles'number $N$, radius $r$, safety coefficient $\alpha \in (0.9, 0.95)$
		\Ensure Initial circle positions $\mathcal{L}_{\text{final}}=\boldsymbol{C}^{(0)} = \{\boldsymbol{c}_1^{(0)}, \dots, \boldsymbol{c}_N^{(0)}\}$
		
		\State \textbf{Step 1: Compute minimum bounding rectangle}
		\State $\text{Rect}, \vec{d}, \theta \gets \text{RotatingCalipers}(P)$ 
		\State $w_p, h_p \gets \text{width}(\text{Rect}), \text{height}(\text{Rect})$
		
		\State \textbf{Step 2: Generate base hexagonal lattice}
		\State $\mathcal{L}_{\text{base}} \gets \text{GenerateHexagonalLattice}(N)$ according to Eq.\eqref{eq:11}
		\State $w_h, h_h \gets \text{BoundingBoxDimensions}(\mathcal{L}_{\text{base}})$
		
		\State \textbf{Step 3: Rotate lattice}
		\State $\mathcal{L}_{\text{rot}} \gets \{ z \cdot e^{i\theta} \mid z \in \mathcal{L}_{\text{base}} \}$
		
		\State \textbf{Step 4: Scale lattice}
		\State $\beta \gets \min\left( \frac{w_p - 2r}{w_h}, \frac{h_p - 2r}{h_h} \right) \cdot \alpha$
		\State $\mathcal{L}_{\text{scaled}} \gets \{ \beta \cdot z \mid z \in \mathcal{L}_{\text{rot}} \}$
		\State \textbf{Step 5: Align lattice center with polygon centroid}
		\State $\boldsymbol{c}_p \gets \frac{1}{|V(P)|} \sum_{\boldsymbol{v} \in V(P)} \boldsymbol{v}$ 
		\State $\mathcal{L}_{\text{trans}} \gets \{ z + \boldsymbol{c}_p \mid z \in \mathcal{L}_{\text{scaled}} \}$
		\State \textbf{Step 6: Filter and optimize}
		\State $\mathcal{L}_{\text{filtered}} \gets \text{BoundaryAwareFilter}(\mathcal{L}_{\text{trans}}, P)$ 
		\State $\mathcal{L}_{\text{final}} \gets \text{CornerOccupancyInsertion}(\mathcal{L}_{\text{filtered}}, P)$ 
		\State \Return $\mathcal{L}_{\text{final}}$
	\end{algorithmic}
\end{algorithm}

\subsection{Virtual-force and radius expansion algorithm}
After initializing the coordinates of all circle centers, we perform dynamic modeling of the system consisting of all circles and the convex polygon.  Assume each circle \( C_i \) is a dynamic object with three attributes: position \( \vec{x}_i = (x_i, y_i) \), velocity \( \vec{v}_i = (v_{ix}, v_{iy}) \), and the resultant external force $\vec{F}_{\text{total},i}=(\vec{F}_{\text{total},ix},\vec{F}_{\text{total},iy})$. 

Different from typical quasi-physical models, the magnitude of the virtual force is represented by the intersection area rather than the overlap depth in order to achieve greater accuracy, as shown in Eq.\eqref{eq:8} and Eq.\eqref{eq:9}:
\begin{equation}
\left\{ \begin{array}{l}
	\theta =2\arccos \left( \frac{d_{ij}/2}{r} \right)\\
	S_a=2\left( \frac{1}{2}r^2\theta -2r\sin \frac{\theta}{2}r\cos \frac{\theta}{2} \right) =r^2\left( \theta -\sin \theta \right)\\
	\vec{F}_{ij}=S_a\cdot \frac{\vec{d}_{ij}}{\left| \vec{d}_{ij} \right|}\\
\end{array} \right. ,
\label{eq:8}
\end{equation}
\begin{equation}
\left\{ \begin{array}{l}
	\alpha =2\arccos \left( \frac{d_k}{r} \right)\\
	S_b=\frac{1}{2}r^2\alpha -2\sqrt{r^2-d_{k}^{2}}d_k\cdot \frac{1}{2}=\frac{1}{2}r^2\alpha -d_k\sqrt{r^2-d_{k}^{2}}\\
	\vec{F}_k=S_b\cdot \frac{\vec{d}_k}{\left| \vec{d}_k \right|}\\
\end{array} \right.. 
\label{eq:9}
\end{equation}
\( S_a \) and \( S_b \) denote the overlapping areas in Fig.~\ref{fig:1b} and Fig.~\ref{fig:1c}. The direction of the elastic force between two circles always points from the center of the exerting circle to the center of the receiving circle, while the direction of the elastic force between a circle and the polygon boundary is given by the rightward normal vector of the directed edge. 

 Additionally, friction \( \vec{F}_{\text{friction}} \)  is introduced for each circle to prevent circles from moving indefinitely caused by inertia. Let the viscous coefficient be \( \mu \) and the friction acts opposite to the velocity direction and is proportional to the magnitude of the velocity.  
Thus, the resultant external force $\vec{F}_{\text{total,i}}$ on the circle $C_i$ is
\begin{equation}
	\left\{ \begin{array}{l}
		\vec{F}_{\text{friction}}=-\mu\cdot \vec{v}\\
		\vec{F}_{\text{total,i}}=\sum_{j=1,j\ne i}^n{\vec{F}_{ij}}+\sum_{k=1}^N{\vec{F}_k}+\vec{F}_{\text{friction}}\\
	\end{array} \right. .
	\label{eq:10}
\end{equation}

To simplify the computation, the mass of each circle can be set to 1 and the time step to $\Delta t$. According to Newton's second law, the update equations for the acceleration, velocity, and position of each circle are then given by:
\begin{equation}
	\left\{ \begin{array}{l}
		\vec{a}_i=\vec{F}_{total,i}\\
		\vec{v}_{i}^{\left( t+\varDelta t \right)}=\vec{v}_{i}^{\left( t \right)}+\vec{a}_i\varDelta t\\
		\vec{x}_{i}^{\left( t+\varDelta t \right)}=\vec{x}_{i}^{\left( t \right)}+\vec{v}_{i}^{\left( t \right)}\varDelta t\\
	\end{array} \right. .
	\label{eq:11}
\end{equation}
In the Python program, updates for velocity and position are performed separately along the $x-$ and $y-$ axes, i.e,
\begin{equation}
	\left\{ \begin{array}{l}
		v_{x}^{\left( t+\varDelta t \right)}=v_{x}^{\left( t \right)}+a_x\varDelta t\\
		v_{y}^{\left( t+\varDelta t \right)}=v_{y}^{\left( t \right)}+a_y\varDelta t\\
	\end{array} \right. ,
	\label{eq:12}
\end{equation}

\begin{equation}
	\left\{ \begin{array}{l}
		x^{\left( t+\varDelta t \right)}=x^{\left( t \right)}+v_x\varDelta t\\
		y^{\left( t+\varDelta t \right)}=y^{\left( t \right)}+v_y\varDelta t\\
	\end{array} \right. .
	\label{eq:13}
\end{equation}

After dynamical modeling, we employ a radius expansion algorithm to optimize the distribution of circles. This algorithm alternates between radius expansion and virtual-force balancing to drive the system from an initial relaxed state toward the target configuration. Specifically, the algorithm starts with a set of radii smaller than the target values. In each iteration, it first increases all radii by a predefined step size $\Delta r_k$, followed by a virtual-force simulation that brings the circles into an approximate mechanical equilibrium under the current radii. This “expansion–equilibration” cycle continues until the radii recover the target values, thereby yielding a uniform and minimally overlapping layout. The iteration is terminated based on the utilization rate (UR).  The system is considered to have converged stably when the change in UR between successive iterations falls below a given threshold.

\begin{df}[Utilization rate, UR]
	$UR$ evaluates overall effective utilization of all circles $\left\{ C_i \right\} _{i=1}^{n}$ , defined as follows:
	\begin{equation}
	UR = \frac{Area\left( \bigcup_{i=1}^{n} (C_i \cap P) \right)}{n \pi r^2}.
	\end{equation}
	\label{df:ur}
\end{df}

\begin{df}[Expansion System State]
	For a given radius \(r\), the set of all circles' positions, velocities, and total external forces is defined as the expansion system state, denoted by
\begin{equation}
	S(r) = \left\{ \boldsymbol{p}_i(r), \boldsymbol{v}_i(r), \boldsymbol{F}_{\text{total},i}(r) \right\}_{i=1}^{n}.
\end{equation}
\(\boldsymbol{p}_i(r)\) represents the position of the \(i\)-th circle, \(\boldsymbol{v}_i(r)\) its velocity, and \(\boldsymbol{F}_{\text{total},i}(r)\) the total external force acting on it.
\end{df}

The iterative process of the expansion system can be formulated as
\begin{equation}
	\left\{ \begin{array}{l}
		S\left( r_{k+1} \right) =\mathcal{O}\left( S\left( r_k \right) ,r_{k+1} \right)\\
		r_{k+1}=r_k+\Delta r_k\\
	\end{array} \right. ,
\end{equation}
where \(\mathcal{O}\) represents the optimization operator that transforms the system from the state at radius \(r_k\) to the state at radius \(r_{k+1}\). We employ an adaptive step size to update the circle radii, using the utilization rate as the criterion for radius adjustment, as follows:
\begin{equation}
	\Delta r_{k+1}=\begin{cases}
		2\Delta r_k&		\text{if\,\,}U\left( r_k+\Delta r_k \right) >U_{\text{th}}\,\,\text{and\,\,}C\geq C_{\text{th}}\\
		\frac{1}{2}\Delta r_k&		\text{if\,\,}U\left( r_k+\Delta r_k \right) \leq U_{\text{th}}\\
		\Delta r_k&		\text{otherwise}\\
	\end{cases}
\end{equation}
Here, \( U(r) \) denotes the utilization rate of the circles at radius \( r \), and \( U_{\text{th}} \) is the corresponding utilization threshold. The counter \( C \) records the number of consecutive successful radius expansions, while \( C_{\text{th}} \) defines the acceleration threshold for step‑size doubling. This design enables the Algorithm ~\ref{alg:2} to accelerate expansion when the distribution exhibits a high utilization rate, and to reduce the step size for fine‑grained search when utilization rate is low.

\begin{algorithm}[H]
	\caption{Virtual-force and radius expansion algorithm}
	\label{alg:2}
	\begin{algorithmic}[1]
		\Require Convex polygon $P$, circles' number $N$, target radius $r_{\text{target}}$
		\Ensure Optimized expansion system $\mathcal{S}$
		\State $r_{\text{current}} \gets 0.1 \times r_{\text{target}}$ \Comment{Initial small radius}
		\State $\Delta r_k \gets r_{\text{target}} \times \alpha_{\text{inflate}}$ \Comment{Initial inflation step size}
		\State $\mathcal{S} \gets \text{initialize\_hexagonal\_grid}(P, N, r_{\text{current}})$
		\State $C \gets 0$
		\While{$\Delta r_k > \epsilon_{\text{inflate}}$} \Comment{Main inflation loop}
		\State $r_{\text{test}} \gets r_{\text{current}} + \Delta r_k$ \Comment{Try larger radius}
		\State $\mathcal{S}_{\text{test}} \gets \mathcal{S}.\text{copy}()$
		\Comment{$\mathcal{S}.\text{copy}()$ denotes the system state after initialization.}
		\State $\mathcal{S}_{\text{test}}.\text{radius} \gets r_{\text{test}}$
		\State $\mathcal{S}_{\text{test}} \gets \text{optimize\_positions}(\mathcal{S}_{\text{test}})$ \Comment{Optimize under new radius}
		\If{$\mathcal{S}_{\text{test}}.\text{utilization\_rate} > U_\text{th}$} \Comment{Evaluate system state}
		\State $\mathcal{S} \gets \mathcal{S}_{\text{test}}$ \Comment{Accept new radius}
		\State $r_{\text{current}} \gets r_{\text{test}}$
		\State $C \gets C + 1$
		\If{$C \geq C_{\text{th}}$}
		\State $\Delta r_k \gets \Delta r_k \times 2$ \Comment{Accelerate inflation}
		\State $C \gets 0$
		\EndIf
		\Else
		\State $\Delta r_k \gets \Delta r_k / 2$ \Comment{Shrink step size}
		\State $C \gets 0$
		\EndIf
		\EndWhile
		\State \Return $\mathcal{S}$
	\end{algorithmic}
\end{algorithm}

\subsection{Boundary encircling algorithm}
Based on the structure-preserving initialization algorithm and the constraints of virtual forces, the majority of  circles are concentrated within the interior region of the convex polygon. However, some circles inevitably overflow beyond the polygon boundary, necessitating the development of a mechanism to handle boundary overflow. To address this problem, we propose a boundary encircling algorithm. The core idea is to relocate the overflowing circles along the boundary gradient direction to the nearest feasible positions.

We reformulate the original problem as minimizing the negative coverage rate to make it compatible with gradient descent, as follows:
\begin{equation}
J_{\text{cover}}\left( \boldsymbol{C} \right) =-Coverage\ ratio.
\end{equation}

\begin{df}[Boundary Distance Term]
	To discourage circles from exceeding the boundary, we define the boundary distance term
\begin{equation}
	J_{\text{boundary}}(\boldsymbol{C}) = \frac{1}{n} \sum_{i=1}^n \left( \min_{j} d_j(\boldsymbol{c}_i) - \delta  r \right)^2,
	\label{eq:19}
\end{equation}
	where \(\delta  \in [1,\infty )\) controls the desired distance between a circle and the boundary, and \(d_j(\boldsymbol{c}_i)\) denotes the right distance from the circle center \(\boldsymbol{c}_i\) to the \(j\)-th edge of the polygon. The condition \(\min_j d_j(\boldsymbol{c}_i) = \delta  r\) indicates that the center is positioned exactly \(\delta  r\) away from its closest polygon edge.
\end{df}

To retrieve overflow circles and implement boundary-encircling optimization, we sequentially introduce two constraints: a boundary constraint and an inter-center distance constraint. For all $i,j$, the following inequality constraints must be satisfied:
\begin{equation}
	\left\{ \begin{array}{l}
g_j\left( \boldsymbol{c}_i \right) =-\left( d_j\left( \boldsymbol{c}_i \right) -r \right) \le 0\\
		h_{ij}\left( \boldsymbol{C} \right) =2r-||\boldsymbol{c}_i-\boldsymbol{c}_j||\le 0\\
	\end{array} \right. .
	\label{eq:20}
\end{equation}
The augmented Lagrangian function is then formulated as follows:
\begin{equation}
	\begin{split}
		\mathcal{L}(\boldsymbol{C}, \lambda, \mu, \rho) 
		&= J_{\text{cover}}(\boldsymbol{C}) + \kappa  J_{\text{boundary}}(\boldsymbol{C}) \\
		&\quad + \sum_{i,j} \Big[ \mu_{ij} \max\big(0, g_j(\boldsymbol{c}_i)\big) 
		+ \frac{\rho_b}{2} \max\big(0, g_j(\boldsymbol{c}_i)\big)^2 \Big] \\
		&\quad + \sum_{i<j} \Big[ \lambda_{ij} \max\big(0, h_{ij}(\boldsymbol{C})\big) 
		+ \frac{\rho_c}{2} \max\big(0, h_{ij}(\boldsymbol{C})\big)^2 \Big].
	\end{split}
	\label{Eq:21}
\end{equation}

\begin{itemize}
	\item \textbf{Normal gradient}
\end{itemize}

Differentiating the boundary constraint leads to the normal gradient:
\begin{equation}
	\nabla _{\bot}\mathcal{L} = -\nabla_{\boldsymbol{c}_i} \left[ \frac{\rho_b}{2} \max\big(0, g_j(\boldsymbol{c}_i)\big)^2 \right] 
	= \begin{cases}
		\rho _b\bigl( r-d_j\left( \boldsymbol{c}_i \right) \bigr) \cdot \frac{\boldsymbol{n}_j}{\|\boldsymbol{n}_j\|}, & \text{if } d_j\left( \boldsymbol{c}_i \right) < r\\
		0, & \text{otherwise}
	\end{cases}.
	\label{Eq:22}
\end{equation}
This gradient provides an inward corrective force when the circle overflows($d_j(\boldsymbol{c}_i) < r$), and vanishes otherwise.

\begin{itemize}
	\item \textbf{Tangential gradient}
\end{itemize}

To drive the overflow circles along the boundary for optimal placement, we incorporate a tangential gradient that promotes boundary encircling. Let \( k \) be the index of the edge closest to the circle center \(\boldsymbol{c}_i\), with unit normal vector \(\boldsymbol{n} = \boldsymbol{n}_k / \|\boldsymbol{n}_k\|\). The tangent vector \(\boldsymbol{t}\) satisfies \(\boldsymbol{t} \cdot \boldsymbol{n} = 0\).
Differentiating the boundary distance term yields its gradient:
\begin{equation}
	\nabla_{\boldsymbol{c}_i} J_{\text{boundary}} = \frac{2}{n} \bigl( d_k(\boldsymbol{c}_i) - \delta r \bigr) \boldsymbol{n}.
\end{equation}
Since $\nabla_{\boldsymbol{c}_i} J_{\text{boundary}}$ is collinear with the normal vector $\boldsymbol{n}$, its projection onto the tangential space is zero:
\begin{equation}
	\nabla_{\parallel} J_{\text{boundary}} = \boldsymbol{P}_{\parallel} \nabla_{\boldsymbol{c}_i} J_{\text{boundary}} = \boldsymbol{0},
\end{equation}
where $\boldsymbol{P}_{\parallel} = \boldsymbol{I} - \boldsymbol{n}\boldsymbol{n}^\top$ is tangential projection operator. Therefore, to generate a driving force along the boundary, we further introduce the concept of boundary encircling points.

\begin{df}[Boundary Encircling Points]
	Uniformly spaced boundary encircling points are generated along each polygon edge:
	\begin{equation}
		\boldsymbol{b}_{j,l}=\boldsymbol{v}_j+\frac{l-0.5}{n_j}\left( \boldsymbol{v}_{j+1}-\boldsymbol{v}_j \right) +\vartheta  r\cdot \boldsymbol{n}_j\left( l=1,\cdots ,n_j \right) ,
		\label{eq:27}
	\end{equation}
	where the number of points on the \(j\)-th edge is defined as  
\begin{equation}
	n_j = \max\left(1, \Bigl\lfloor n \cdot \frac{\|\boldsymbol{v}_{j+1} - \boldsymbol{v}_j\|}{L}\Bigr\rfloor\right).
\end{equation}
	Here, $n$ is the total number of circles, and \(L\) denotes the perimeter of the polygon. The symbols are defined in Table ~\ref{tab:1}. $\frac{l-0.5}{n_j}$ ensures points are located strictly inside the edge. $\alpha$ controlls the degree of inward offset along the normal direction. $\mathcal{B}$ serves as candidate target positions for overflow circles.
\end{df}

\begin{table}[H]
	\caption{Description of Boundary Encircling Point Parameters}\label{tab:1}
	\begin{tabular*}{\tblwidth}{@{}LL@{}}
		\toprule
		\textbf{Symbol} & \textbf{Meaning} \\
	\midrule
$\boldsymbol{b}_{j,l}$ & Coordinates of the $l$-th boundary encircling point on the $j$-th edge. \\
$\boldsymbol{v}_j$, $\boldsymbol{v}_{j+1}$ & Coordinates of the $j$-th and $(j+1)$-th vertices of the polygon. \\
$n_j$ & Number of boundary encircling points allocated to the $j$-th edge. \\
$\frac{l-0.5}{n_j}$ & Normalized positional parameter with values in $(0,1)$. \\
$\vartheta $ & Boundary offset coefficient ($\vartheta  \in [0,1]$)  \\
$r$ & Radius of the circles. \\
$\boldsymbol{n}_j$ & Unit inward normal vector of the $j$-th edge (pointing inside the polygon). \\
$\mathcal{B}$ & Set of all boundary encircling points: $\mathcal{B} = \{\boldsymbol{b}_{j,l} \mid \forall j,l\}$, \\
\bottomrule
	\end{tabular*}
\end{table}

 For each circle $C_i$, we should find its nearest boundary encircling point:
\begin{equation}
	\boldsymbol{b}_i^* = \arg\min_{\boldsymbol{b} \in \mathcal{B}} \|\boldsymbol{c}_i - \boldsymbol{b}\|.
\end{equation}
The tangential attractive force generated by boundary encircling as follows:
\begin{equation}
	\boldsymbol{F}_{\text{border}}(\boldsymbol{c}_i) = \gamma (\boldsymbol{b}_i^* - \boldsymbol{c}_i),
\end{equation}
where $\gamma > 0$ denotes the attractive coefficient. Then, we obtain the tangential gradient as the projection of this attractive force onto the tangential space:
\begin{equation}
	\nabla_{\parallel} \mathcal{L}_{\text{border}} = \boldsymbol{P}_{\parallel} \boldsymbol{F}_{\text{border}}(\boldsymbol{c}_i) 
	= \gamma \boldsymbol{P}_{\parallel} (\boldsymbol{b}_i^* - \boldsymbol{c}_i).
	\label{eq:28}
\end{equation}

\begin{itemize}
	\item \textbf{Repulsive gradient between circles}
\end{itemize}

The repulsive gradient induced by inter-circle constraints is given by:
\begin{equation}
	\begin{split}
		\nabla_{\text{circle}}\mathcal{L} &= 
		\nabla_{\boldsymbol{c}_i}\left[ \frac{\rho_c}{2} \max\left(0, h_{ij}(\boldsymbol{C})\right)^2 \right] \\
		&= \begin{cases}
			\rho_c \bigl( 2r - \|\boldsymbol{c}_i - \boldsymbol{c}_j\| \bigr) \cdot 
			\dfrac{\boldsymbol{c}_i - \boldsymbol{c}_j}{\|\boldsymbol{c}_i - \boldsymbol{c}_j\|} & 
			\text{if } \|\boldsymbol{c}_i - \boldsymbol{c}_j\| < 2r \\
			0 & \text{otherwise}.
		\end{cases}.
	\end{split}
	\label{eq:29}
\end{equation}
By integrating the above gradients, the update formula for circle center optimization when overflow occurs is:

\begin{equation}
	\left\{ \begin{array}{l}
		\boldsymbol{c}_{i}^{\left( t+1 \right)}=\boldsymbol{c}_{i}^{\left( t \right)}-\eta \left[ \nabla _{\bot}\mathcal{L}+\nabla _{\parallel}\mathcal{L}_{\text{border}}+\sum_{j\ne i}{\nabla _{\text{circle}}}\mathcal{L} \right]\\
		\max_i|\boldsymbol{c}_{i}^{\left( t+1 \right)}-\boldsymbol{c}_{i}^{\left( t \right)}|<\epsilon\\
	\end{array} \right. .
	\label{eq:31}
\end{equation}
where $\eta > 0$ denotes learning rate and $\epsilon > 0$ is a predefined convergence threshold. For example, the boundary encircling process is illustrated in Fig.~\ref{fig:5} and Algorithm \ref{alg:3}.

\begin{figure}
	\centering
	\begin{subfigure}[b]{0.4\textwidth}
		\centering
		\includegraphics[width=\textwidth]{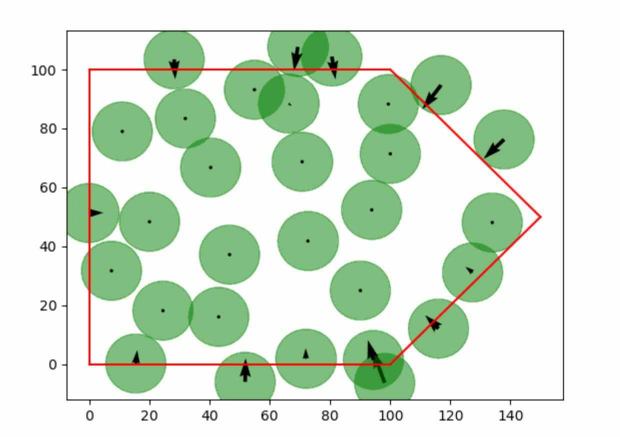}
		\caption{Normal Gradient}
	\end{subfigure}
	\begin{subfigure}[b]{0.4\textwidth}
		\centering
		\includegraphics[width=\textwidth]{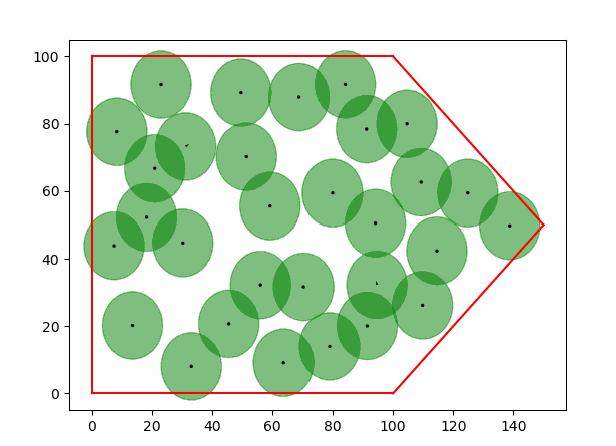}
		\caption{Tangential and repulsive gradient}
	\end{subfigure}
	\caption{Boundary encircling process}
	\label{fig:5}
\end{figure}

\begin{algorithm}[H]
	\caption{Boundary encircling algorithm}
	\label{alg:3}
	\begin{algorithmic}[1]
		\Require Optimized circle centers $\left\{ \boldsymbol{p}_i\left( r \right) \right\} _{i=1}^{n}$ after Algorithm \ref{alg:2}, parameters $\delta,\kappa ,\vartheta , \gamma, \rho_b, \rho_c, \eta, \epsilon$
		\Ensure Final optimized circle centers $\boldsymbol{C}^{*}$
		\State Generate the set of boundary encircling points $\mathcal{B}$ according to Eq.\eqref{eq:27}
		\For{$t = 0, 1, 2, \dots$ until $\max_i \|\boldsymbol{c}_i^{(t+1)} - \boldsymbol{c}_i^{(t)}\| < \epsilon$}
		\For{$i = 1$ \textbf{to} $n$}
		\State Compute the minimum boundary distance: $d_{\min}(\boldsymbol{c}_i^{(t)}) = \min_j d_j(\boldsymbol{c}_i^{(t)})$
		\State Find the closest boundary encircling point: $\boldsymbol{b}_i^* = \arg\min_{\boldsymbol{b} \in \mathcal{B}} \|\boldsymbol{c}_i^{(t)} - \boldsymbol{b}\|$
		\State Compute the normal gradient $\nabla_{\bot}\mathcal{L}$ 
		\State Compute the tangential gradient $\nabla_{\parallel}\mathcal{L}_{\text{border}}$ 
		\State Compute the inter‑circle repulsive gradient $\sum_{j \neq i} \nabla_{\text{circle}}\mathcal{L}$ 
		\State Update the center position:
		$\boldsymbol{c}_i^{(t+1)} = \boldsymbol{c}_i^{(t)} - \eta \bigl[ \nabla_{\bot}\mathcal{L} + \nabla_{\parallel}\mathcal{L}_{\text{border}} + \sum_{j \neq i} \nabla_{\text{circle}}\mathcal{L} \bigr]$
		\EndFor
		\EndFor
		\State \Return $\boldsymbol{C}^{*}$
	\end{algorithmic}
\end{algorithm}

\subsection{Computational complexity}
We analyze the computational complexity of the  proposed algorithm (IQPD) in this subsection. We quantify the computational overhead of each module within the optimization framework, including structure-preserving initialization, virtual force and radius expansion strategy, and boundary encircling algorithm, with respect to the number of circles $n$ and polygon vertices $m$.

\begin{lemma}[Complexity of structure-preserving initialization]
	\label{lemma4}
The  initialization algorithm computes an initial configuration of \(n\) circles within a convex polygon of \(m\) vertices in $O\left( \max \left\{ m,n \right\} \log n \right)$  time.
\end{lemma}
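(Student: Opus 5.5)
We bound the cost of each of the six steps of Algorithm~\ref{alg:1} separately and then sum; the total is dominated by the largest term.

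\textbf{Step 1 (MBR).} The input polygon is convex and its vertices are given in clockwise order, so the rotating calipers method \cite{carlos_2020} needs only a linear sweep. Each caliper advances monotonically around the $m$ edges, giving $O(m)$ time. If the vertices were unordered, the convex hull lemma \cite{berg2008} would first cost $O(m\log m)$. We will argue that the input is already an ordered convex polygon, so this term does not appear. Step 2 (direction vector and angle $\theta$) is $O(1)$.

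\textbf{Lattice steps.} To reach at least $n$ circles, the number of layers $L$ must satisfy $1+3L(L+1)\ge n$, so $L=O(\sqrt n)$. The number of lattice points generated by Eq.~\eqref{eq:11} is therefore $O(n)$, each produced in $O(1)$ time. The following steps each cost $O(n)$ or less:
\begin{itemize}
\item the bounding box $w_h,h_h$ is found in one pass, $O(n)$;
\item the rotation, the scaling by $\beta$, and the translation by the centroid each take $O(n)$;
\item the centroid $\boldsymbol{c}_p$ is computed in $O(m)$.
\end{itemize}

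\textbf{Step 6 (filtering and corner insertion).} This is the step that produces the logarithmic factor, and it is the main obstacle, since a naive check of each of the $O(n)$ centers against all $m$ edges via the right distance costs $O(nm)$.

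For the filtering, we exploit convexity. Preprocess the polygon in $O(m)$ by fan-triangulating it from a fixed vertex, so the angular order of the fan is available. A point-in-convex-polygon query is then a binary search over the fan wedges followed by one right-distance test. To test whether a circle of radius $\beta r$ stays inside, we also need its distance to the boundary. This can be obtained by a binary search for the nearest edge, because the distance to the edges along the relevant chain is unimodal. In either case each query costs $O(\log m)$, and the filtering costs $O(n\log m)$ in total.

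For the corner-occupancy insertion, we will assume it places at most $O(m)$ circles, a constant number per vertex. Each insertion must be checked against the existing centers for overlap. We do this with a sorted structure or grid built once in $O(n\log n)$, so each check costs $O(\log n)$ and all insertions cost $O(m\log n)$. If needed, we will add an explicit mild assumption on how the corner insertion is implemented.

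\textbf{Summing.} The total is
\[
O(m)+O(n)+O(n\log m)+O(n\log n)+O(m\log n).
\]
The remaining task is to absorb $O(n\log m)$. The terms $O(n\log n)$ and $O(m\log n)$ are $O(\max\{m,n\}\log n)$ directly. For $O(n\log m)$, if $m\le n$ it is at most $O(n\log n)$. If $m>n$, we avoid the $\log m$ factor: only $O(\log n)$ refinement is needed per query, because the query points are processed in sorted angular order and the fan can be swept in one merged pass in $O(m+n\log n)$. In both cases the bound is $O(\max\{m,n\}\log n)$, which proves the lemma.
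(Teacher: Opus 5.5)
Your accounting matches the paper's proof: $O(m)$ for the rotating-calipers MBR, $O(n)$ for Steps 2--5, and an $O(n\log n+m\log n)$ charge for Step~6. The difference is that the paper simply takes the Step~6 cost from the cited filtering and corner-insertion procedures, while you derive it. One step of that derivation fails.

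\textbf{The unimodality claim is false.} You claim that the right distances from an interior point to the edges are unimodal along a chain, so the nearest edge can be found by binary search. Take the rectangle with clockwise vertices $(0,0),(0,1),(10,1),(10,0)$ and the point $p=(5,0.4)$. The right distances in cyclic order are $5,\,0.6,\,5,\,0.4$, which gives two separated local minima. With $r=0.5$, a search that settles on $0.6$ accepts a circle that actually overflows. So neither your $O(\log m)$ containment test nor the $O(n\log m)$ filtering bound is established. Your merged angular sweep for $m>n$ does not help either, because it only covers the point-in-polygon part, not the distance part.

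\textbf{How to fix the filtering.} Test containment without searching for a nearest edge. The disk of radius $r$ at $\boldsymbol{c}$ lies in $P$ if and only if $\boldsymbol{c}\in P_{-r}=\bigcap_j\{x : d_j(x)\ge r\}$. These half-planes have the normals of the edges of $P$, which are already in angular order. So $P_{-r}$ can be computed in $O(m)$ by the standard deque-based half-plane intersection. Each test is then a point-in-convex-polygon query: $O(\log m)$ with your fan binary search, or all $n$ tests together in $O(m+n\log n)$ with your angular merge applied to the fan of $P_{-r}$.

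\textbf{The corner-insertion bound also needs a better tool.} A one-dimensional sorted order does not answer planar proximity queries in $O(\log n)$. A grid with cells of size about $2r$ does not either, since it can hold $\Theta(1/\beta^{2})$ centres per cell when the lattice spacing $2\beta r$ is small. Use the Voronoi diagram of the surviving centres with a point-location structure instead. This gives $O(n\log n)$ preprocessing and $O(\log n)$ per nearest-neighbour check. Under your assumption of $O(1)$ insertions per vertex, that reproduces the paper's $O(m\log n)$ term. With these two repairs your sum becomes $O(m+n\log n+m\log n)=O(\max\{m,n\}\log n)$, as claimed.
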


\begin{proof}
	Algorithm \ref{alg:1}  for initialization proceeds in six steps. Step 1 computes the minimum bounding rectangle (MBR) using the rotating calipers algorithm, which requires \(O(m)\) time \cite{carlos_2020}. Steps 2 through 5 generate the hexagonal lattice and apply rotation, scaling, and translation transformations; each of these steps processes all \(n\) circles in \(O(n)\) time. Step 6 performs boundary-aware filtering \cite{11515007} and corner-occupancy insertion \cite{LU20081742}, incurring \(O(n \log n + m \log n)\) time. Therefore, the  computational complexity of the structure-preserving initialization algorithm is
	\begin{equation}
\begin{aligned}
	T_{\text{init}} &=O\left( m \right) +4\cdot O\left( n \right) +O\left( n\log n+m\log n \right)\\ 
	&=O\left( m \right) +O\left( n \right) +O\left( n\log n \right) +O\left( m\log n \right) \\
	&=O\left( \max \left\{ m,n \right\} \log n \right) 
\end{aligned}.
	\end{equation}
\end{proof}

\begin{lemma}[Complexity of virtual-force and radius expansion]
	\label{lem:5}
	Let \(K_{\text{outer}}\) denote the number of radius expansion steps and \(K_{\text{inner}}\) the number of virtual force  iterations per expansion step. The virtual-force and radius expansion algorithm has time complexity $O\left( K_{\text{outer}}\cdot K_{\text{inner}}\cdot \left( n^2+mn \right) \right)$.
\end{lemma}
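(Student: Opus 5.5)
The plan is to bound the cost of a single virtual-force iteration and then multiply by the number of iterations, as the statement's two loop counters suggest. Algorithm~\ref{alg:2} has two nested loops. The outer \texttt{while} loop runs $K_{\text{outer}}$ times; each pass tries one radius $r_{\text{test}}$. Inside it, \texttt{optimize\_positions} performs $K_{\text{inner}}$ time steps of the dynamics. The work outside \texttt{optimize\_positions} in one outer pass consists of:
\begin{itemize}
\item copying the state, which costs $O(n)$;
\item setting the radius, which costs $O(1)$;
\item updating $\Delta r_k$ and the counter $C$, which costs $O(1)$;
\item evaluating the utilization rate $U$.
\end{itemize}
The utilization rate needs more care. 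We assume it is computed with the same pairwise and circle–edge intersection quantities used by the force model (or a grid/Monte Carlo estimate of comparable cost), so it costs at most $O(n^2+mn)$. This is dominated by one inner iteration whenever $K_{\text{inner}}\ge 1$. The one-time initialization $\text{initialize\_hexagonal\_grid}$ costs $O(\max\{m,n\}\log n)$ by Lemma~\ref{lemma4}. Since $\log n\le n$, this is absorbed into $O(n^2+mn)$ and so into the final bound.

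The core step is bounding one virtual-force iteration by $O(n^2+mn)$. For each circle $C_i$ we evaluate the total force in Eq.~\eqref{eq:10}, which has three parts.
\begin{itemize}
\item \textbf{Circle–circle forces.} For each of the $n-1$ other circles $j$, compute $d_{ij}$, test whether $d_{ij}<2r$, and if so evaluate $\theta$, $S_a$ and $\vec F_{ij}$ from Eq.~\eqref{eq:8}. Each is a constant number of arithmetic and transcendental operations. Summed over $i$, or over unordered pairs using $\vec F_{ji}=-\vec F_{ij}$, this gives $O(n^2)$.
\item \textbf{Circle–boundary forces.} For each of the $m$ edges, compute the right distance $d_k$ via the right normal vector, test $d_k<r$, and evaluate $S_b$ and $\vec F_k$ from Eq.~\eqref{eq:9}. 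Each costs $O(1)$, giving $O(mn)$ over all circles.
\item \textbf{Friction and integration.} Friction $-\mu\vec v_i$ and the Euler updates Eqs.~\eqref{eq:12}--\eqref{eq:13} cost $O(1)$ per circle, $O(n)$ in total.
\end{itemize}
So one inner iteration costs $O(n^2+mn+n)=O(n^2+mn)$.

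Combining, one outer pass costs $K_{\text{inner}}\cdot O(n^2+mn)+O(n^2+mn)=O(K_{\text{inner}}(n^2+mn))$. Summing over $K_{\text{outer}}$ passes gives $O(K_{\text{outer}}K_{\text{inner}}(n^2+mn))$, plus the absorbed initialization term.

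The main obstacle is the bookkeeping around quantities the statement leaves implicit. Two points need to be made explicit:
\begin{itemize}
\item The number of iterations is a parameter, not something we bound. The outer loop terminates because $\Delta r_k$ is halved on every rejection and $\epsilon_{\text{inflate}}>0$, but we do not need an explicit bound on $K_{\text{outer}}$; it is simply treated as a count.
\item The cost of computing $U$ must not exceed $O(n^2+mn)$. If an exact union-area computation were used, it would need a separate argument, such as the $O(n\log n)$ power-diagram approach. I would state this as an assumption on the implementation of the UR evaluation.
\end{itemize}
I would also note that constant-time evaluation of $\arccos$ and $\sin$ is assumed in the standard RAM model.
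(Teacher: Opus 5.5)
Your proposal is correct and follows essentially the same route as the paper: you bound one virtual-force iteration by $O(n^2+mn)$ (all circle pairs, all circle--edge pairs, and $O(n)$ for friction and the Euler updates) and multiply by $K_{\text{inner}}$ and $K_{\text{outer}}$. Your extra bookkeeping for the state copy, the utilization-rate evaluation and the initialization is more careful than the paper's, which omits these overheads and instead uses the rest of its proof to estimate $K_{\text{outer}}$ and $K_{\text{inner}}$ logarithmically, something the lemma as stated does not require.
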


\begin{proof}
	In each virtual force iteration, computing  repulsive forces between circles requires \(O(n^2)\) operations, as all \(\binom{n}{2}\) circle pairs must be examined. Computing boundary forces for all circles against \(m\) polygon edges costs \(O(n \cdot m)\).  The velocity and position updates for $n$ circles cost \(O(n)\) per iteration. Thus, each  virtual force iteration has complexity
	\begin{equation}
	 T_{\text{force}}=O\left( n^2+mn \right).
	\end{equation}
	The radius expansion phase performs \(K_{\text{inner}}\) such iterations for each of the \(K_{\text{outer}}\) expansion steps. The total time complexity is therefore 
	\begin{equation}
		T_{\text{inflate}}=\sum_{i=1}^{K_{\text{outer}}}{K_{\text{inner}}\cdot T_{\text{force}}}=O\left( K_{\text{outer}}\cdot K_{\text{inner}}\cdot \left( n^2+mn \right) \right) .
	\end{equation}
	
Then we present theoretical estimates of the upper bounds for $K_{\text{outer}}$ and $K_{\text{inner}}$. $K_{\text{outer}}$ is determined by the adaptive step-size strategy and the stopping threshold in Algorithm \ref{alg:2}, which is formulated as:
\begin{equation}
	K_{\text{outer}} = \min \left\{ k \mid \Delta r_k < \epsilon_{\text{inflate}} \right\},
\end{equation}
where $\Delta r_k$ represents the radius increment at the $k$-th iteration, and $\epsilon_{\text{inflate}}$ is the predefined stopping threshold. We consider the worst-case scenario where the step size is never doubled and is always halved, the number of iterations required for the step size to decay from the initial step size $\Delta r_0 = \alpha_{\text{inflate}} \cdot r_{\text{target}}$ to the stopping threshold $\epsilon_{\text{inflate}}$ satisfies
\begin{equation}
K_{\text{outer}} \leq \left\lceil \log \left( \frac{\Delta r_0}{\epsilon_{\text{inflate}}} \right) \right\rceil + O(1) \leq O\left(\log\left(\frac{r_{\text{target}}}{\epsilon_{\text{inflate}}}\right) \right)
\end{equation}
according to Theorem \ref{thm:10}. The system converges to an equilibrium state via virtual force iterations, and $K_{\text{inner}}$ is determined by the convergence condition:
\begin{equation}
	K_{\text{inner}} = \min\left\{ t \,\bigg|\, |U^{(t)} - U^{(t-1)}| <1- U_{\text{th}}\right\}.
\end{equation}
The virtual force iteration terminates when the utilization rate changes by less than \(\epsilon_U = 1 - U_{\text{th}}\) between successive iterations. The virtual-force dynamics exhibit linear convergence to the equilibrium state according to Theorem \ref{thm:9} , implying that the utilization rate converges linearly as well:
\begin{equation}
|U^{(t)} - U^*| \leq C \cdot \rho^t, \quad \rho < 1.
\end{equation}
To achieve \(|U^{(t)} - U^*| < \epsilon_U\), the required number of iterations is
\begin{equation}
	K_{\text{inner}} \leq \frac{\log(C / \epsilon_U)}{-\log(\rho)} = O\left( \log\left( \frac{1}{1 - U_{\text{th}}} \right) \right).
\end{equation}
\end{proof}

\begin{lemma}[Complexity of boundary encircling algorithm]
	\label{lem:boundary_complexity}
	Let \(K_{\text{boundary}}\) denote the number of gradient descent iterations. The boundary encircling algorithm has time complexity \(O(K_{\text{boundary}} \cdot (n^2+m \cdot n))\).
\end{lemma}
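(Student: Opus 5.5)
The plan is to mirror the proof of Lemma~\ref{lem:5}: bound the cost of one pass of Algorithm~\ref{alg:3}, then multiply by the number $K_{\text{boundary}}$ of gradient-descent passes. Besides the $n$ circles and $m$ edges, there is the preprocessing step that builds $\mathcal{B}$ by Eq.~\eqref{eq:27}. Its size is $|\mathcal{B}| = \sum_{j=1}^{m} n_j$. Since $n_j \le \max(1, n\|\boldsymbol{v}_{j+1}-\boldsymbol{v}_j\|/L)$ and the edge lengths sum to $L$, we get $|\mathcal{B}| \le m + n$. Each point is produced in $O(1)$ time, so generating $\mathcal{B}$ costs $O(m+n)$. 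This is dominated by the per-iteration cost below and is absorbed once $K_{\text{boundary}} \ge 1$.

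For a single outer iteration $t$, the work splits into four terms according to the lines of the inner loop, each summed over $i=1,\dots,n$.
\begin{enumerate}
\item The minimum right distance $d_{\min}(\boldsymbol{c}_i)=\min_j d_j(\boldsymbol{c}_i)$ scans all $m$ edges at $O(1)$ each. Together with the normal gradient (Eq.~\eqref{Eq:22}), which loops over the edges with $d_j<r$, this is $O(m)$ per circle.
\item The nearest encircling point $\boldsymbol{b}_i^*$ is found by a linear scan of $\mathcal{B}$, costing $O(|\mathcal{B}|)=O(m+n)$ per circle. The tangential gradient (Eq.~\eqref{eq:28}) then uses the $2\times 2$ projector $\boldsymbol{P}_\parallel=\boldsymbol{I}-\boldsymbol{n}\boldsymbol{n}^\top$ and costs $O(1)$.
\item The repulsive gradient (Eq.~\eqref{eq:29}) is evaluated over all $j\neq i$, costing $O(n)$ per circle.
\item The position update and the stopping test $\max_i\|\boldsymbol{c}_i^{(t+1)}-\boldsymbol{c}_i^{(t)}\|$ are $O(1)$ per circle.
\end{enumerate}
Summing over the $n$ circles gives
\[
T_{\text{iter}} = O(nm) + O\bigl(n(m+n)\bigr) + O(n^2) + O(n) = O\bigl(n^2+mn\bigr).
\]

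The total is then
\[
T_{\text{boundary}} = O(m+n) + \sum_{t=1}^{K_{\text{boundary}}} T_{\text{iter}} = O\bigl(K_{\text{boundary}}\,(n^2+mn)\bigr).
\]

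The main point needing care is step~2, bounding $|\mathcal{B}|$ so that the nearest-point search does not introduce an extra factor. The telescoping bound $\sum_j n_j\le m+n$ handles this, since the floor term sums to at most $n$ and the $\max(1,\cdot)$ adds at most $m$. One could alternatively use a k-d tree for $O(\log(m+n))$ queries, but the linear scan already fits the claimed bound.

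I would also remark, as in Lemma~\ref{lem:5}, that $K_{\text{boundary}}$ itself is not bounded here. It is governed by the step size $\eta$ and the tolerance $\epsilon$, and a bound of the form $O(\log(1/\epsilon))$ could be deferred to the later convergence theorems under a local strong convexity assumption.
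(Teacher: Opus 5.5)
Your proposal is correct and follows essentially the same decomposition as the paper's proof: generation of $\mathcal{B}$, the nearest encircling point search, the normal, tangential and repulsive gradient terms, and the position update, giving $O(n^2+mn)$ per pass, which is then multiplied by $K_{\text{boundary}}$. The differences are minor: you use a linear scan for $\boldsymbol{b}_i^*$ where the paper uses a KD-tree (both fit inside $O(n^2+mn)$), and your count $|\mathcal{B}|\le m+n$ is more careful than the paper's claim $n_b=O(n)$, which overlooks the $\max(1,\cdot)$ contribution when $m$ exceeds $n$.
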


\begin{proof}
	The algorithm \ref{alg:3} performs four steps per iteration. 
	
	\textbf{Step 1: Generation of boundary encircling points.} According to Eq.\ref{eq:27}, the number of boundary encircling points on the $j$-th edge is $	n_j = \max\left(1, \left\lfloor n \cdot \frac{\|\boldsymbol{v}_{j+1} - \boldsymbol{v}_j\|}{L} \right\rfloor\right),$
	where $L$ is the perimeter of the polygon. Since $\sum_{j=1}^m \|\boldsymbol{v}_{j+1} - \boldsymbol{v}_j\| = L$, the total number of boundary points satisfies
\begin{equation}
	n_b = \sum_{j=1}^m n_j = O(n).
\end{equation}
	Generating these points requires iterating over all $m$ edges and computing coordinates for each point, which costs
\begin{equation}
	T_{\text{gen}} = O(m + n_b) = O(m + n).
\end{equation}
	
\textbf{Step 2: Nearest boundary point search.} 	For each circle, we find its closest boundary encircling point
$\boldsymbol{b}_i^* = \arg\min_{\boldsymbol{b} \in \mathcal{B}} \|\boldsymbol{c}_i - \boldsymbol{b}\|$ by KD-tree \cite{LIAN2026123347}, which costs time complexity 
\begin{equation}
	T_{\text{find}} = O(n \log n).
\end{equation}
	
\textbf{Step 3: Gradient computation.}
Three gradient components are computed in each iteration as shown in Table \ref{tab:2}.
\begin{table}[H]
	\caption{Complexity of gradient components in the boundary encircling algorithm}
	\label{tab:2}
	\begin{tabular*}{\tblwidth}{@{}llll@{}}
		\toprule
		\textbf{Gradient Term} & \textbf{Formula} & \textbf{Complexity} & \textbf{Explanation} \\
		\midrule
		Normal gradient \( \nabla_{\perp} \mathcal{L} \) & Eq.\ref{Eq:22} & \( O(n \cdot m) \) & Compute distance from each circle  to all \( m \) polygon edges. \\
		Tangential gradient \( \nabla_{\parallel} \mathcal{L}_{\text{border}} \) & Eq.\ref{eq:28} & \( O(n) \) & Use precomputed nearest boundary point.\\
		Inter-circle repulsive gradient \( \nabla_{\text{circle}} \mathcal{L} \) & Eq.\ref{eq:29} & \( O(n^2) \) & Iterate over all \( \binom{n}{2} \) circle pairs for overlap detection.\\
		\bottomrule
	\end{tabular*}
\end{table}
Therefore, the total gradient computation cost per iteration is
\begin{equation}
T_{\text{grad}} = O(n^2 + n \cdot m + n) = O(n^2 + n \cdot m).
\end{equation}

\textbf{Step 4: Position update.}
The update rule for each circle center is given by Eq.\ref{eq:31}. Updating all $n$ circles requires
\begin{equation}
	T_{\text{update}} = O(n).
\end{equation}

Summing the costs of all steps, the total complexity per iteration is
\begin{equation}
\begin{aligned}
	T_{\text{iter}} &= T_{\text{gen}} + T_{\text{find}} + T_{\text{grad}} + T_{\text{update}} \\
	&= O(m + n) + O(n \log n) + O(n^2+m \cdot n) + O(n) \\
	&= O(n^2 + m \cdot n)
\end{aligned}.
\end{equation}
Then we obtain the total complexity of the boundary encircling algorithm 
\begin{equation}
T_{\text{boundary}} = O(K_{\text{boundary}} \cdot (n^2 + m \cdot n)).
\end{equation}

The iteration count \( K_{\text{boundary}} \) is determined by the stopping criterion (Eq.\ref{eq:31}). Due to the linear convergence rate of gradient descent on strongly convex functions, \( K_{\text{boundary}} \) satisfies
\begin{equation}
K_{\text{boundary}} = O\left( \log \left( \frac{1}{\epsilon} \right) \right),
\end{equation}
where \( \epsilon \) is the predefined convergence tolerance.
\end{proof}

\begin{theorem}[Total complexity]
	\label{lem:total_complexity}
	The overall algorithm achieves a time complexity of
	\begin{equation} 
O\!\left( \max\{m, n\} \log n + K_{\text{outer}} K_{\text{inner}} (n^2 + m n) + K_{\text{boundary}} (n^2 + m n) \right),
\end{equation}
 where the constants \(K_{\text{outer}}\) , \(K_{\text{inner}}\) , and \(K_{\text{boundary}}\) are bounded by problem-independent thresholds.
\end{theorem}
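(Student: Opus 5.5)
The plan is to treat IQPD as the sequential composition of its three modules, Algorithm~\ref{alg:1}, Algorithm~\ref{alg:2} and Algorithm~\ref{alg:3}. Each module runs exactly once and receives the output of the previous one. The total running time is therefore the sum of the three module costs plus the cost of handing off data between them. That hand-off only copies $n$ circle centers and the $m$ polygon vertices, so it costs $O(n+m)$.

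\textbf{Step 1: sum the module costs.} I would invoke Lemma~\ref{lemma4} for the initialization, which gives $O(\max\{m,n\}\log n)$. Lemma~\ref{lem:5} gives $O(K_{\text{outer}}K_{\text{inner}}(n^2+mn))$ for the virtual-force and radius expansion phase, and Lemma~\ref{lem:boundary_complexity} gives $O(K_{\text{boundary}}(n^2+mn))$ for boundary encircling. Adding these and absorbing the $O(n+m)$ hand-off cost into the first term (since $n+m\le 2\max\{m,n\}\log n$ once $n\ge 2$) gives exactly the displayed bound. This part is routine bookkeeping with $O$-notation.

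\textbf{Step 2: bound the iteration counts.} The claim that $K_{\text{outer}}$, $K_{\text{inner}}$ and $K_{\text{boundary}}$ are bounded independently of the instance is the substantive part. I would take the estimates already established inside the proofs of Lemma~\ref{lem:5} and Lemma~\ref{lem:boundary_complexity}:
\begin{equation*}
K_{\text{outer}}=O\!\left(\log\frac{r_{\text{target}}}{\epsilon_{\text{inflate}}}\right),\qquad
K_{\text{inner}}=O\!\left(\log\frac{1}{1-U_{\text{th}}}\right),\qquad
K_{\text{boundary}}=O\!\left(\log\frac{1}{\epsilon}\right).
\end{equation*}
Each right-hand side depends only on user-chosen parameters: $\alpha_{\text{inflate}}$, the ratio $r_{\text{target}}/\epsilon_{\text{inflate}}$, $U_{\text{th}}$ and $\epsilon$. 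None of them depends on $n$ or $m$. So, with these parameters fixed, each count is a constant with respect to the problem size.

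\textbf{Step 3: justify the step-count estimates.} For $K_{\text{outer}}$, the argument in Lemma~\ref{lem:5} only bounds the number of halvings, and I would extend it. A doubling happens only after $C_{\text{th}}$ consecutive accepted expansions, and the accepted radius cannot exceed $r_{\text{target}}$. Hence the number of doublings is at most $\log_2(r_{\text{target}}/\Delta r_0)=O(1)$. The number of halvings is then at most that number plus $\log_2(\Delta r_0/\epsilon_{\text{inflate}})$. Between consecutive step-size changes there are fewer than $C_{\text{th}}$ accepted steps. Combining these gives $K_{\text{outer}}=O(C_{\text{th}}\log(r_{\text{target}}/\epsilon_{\text{inflate}}))$.

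\textbf{Main obstacle.} The hardest step is making $K_{\text{inner}}$ and $K_{\text{boundary}}$ genuinely problem-independent. Both rely on linear convergence with a rate $\rho<1$, and for $K_{\text{inner}}$ this is cited from Theorem~\ref{thm:9}. The constants $C$ and $\rho$ in that rate could in principle depend on $n$, for example through the conditioning of the virtual-force Hessian. I would first try to bound the local Lipschitz and strong-convexity constants per circle. Each circle interacts with at most six neighbours in a near-hexagonal configuration plus a bounded number of edges, so these constants should be independent of $n$. If that per-circle bound fails, I would fall back on the fact that both loops terminate after a user-set maximum iteration cap, and state the independence relative to that cap.
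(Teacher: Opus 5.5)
Your Steps 1--2 coincide with what the paper does. The paper gives no separate proof of this theorem: it is just the sum of Lemmas~\ref{lemma4}, \ref{lem:5} and \ref{lem:boundary_complexity}, with the iteration estimates quoted from their proofs. So the bookkeeping is fine. The gaps are in the part you correctly single out as substantive.

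\textbf{The doubling count in Step 3 does not follow.} The cap $r\le r_{\text{target}}$ is an extra assumption, since Algorithm~\ref{alg:2} as written never enforces it. Even granted, it bounds only the \emph{sum} of accepted increments. That gives a logarithmic bound on doublings only if the step never shrinks in between. With halvings interleaved, the cycle ``$C_{\text{th}}$ accepts, double, reject, halve'' can recur at step sizes just above $\epsilon_{\text{inflate}}$. The radius budget then yields only $D\le 0.9\,r_{\text{target}}/(C_{\text{th}}\epsilon_{\text{inflate}})$, i.e.\ $K_{\text{outer}}$ linear in $r_{\text{target}}/\epsilon_{\text{inflate}}$. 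The paper's ``never doubled, always halved'' case is the fastest-terminating pattern, so citing it does not close this.

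The missing idea is that no doubling occurs after the first rejection. This holds if acceptance depends only on $r_{\text{test}}$. After a rejection at radius $p$ one has the invariant $r_{\text{current}}=p-2\Delta r_k$, so a second consecutive acceptance at the same level would re-test $p$ and fail. Hence $C\le 1<C_{\text{th}}$, and the tail costs at most $2\log_2(\Delta r/\epsilon_{\text{inflate}})+O(1)$ steps, with $\Delta r$ the step at the first rejection. But in Algorithm~\ref{alg:2}, $\mathcal{S}_{\text{test}}$ is copied from the last accepted state, so a previously rejected radius can later be accepted. You must either impose this monotonicity as a hypothesis or prove it.

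\textbf{Your primary plan for $K_{\text{inner}}$ and $K_{\text{boundary}}$ would fail.} Bounded degree controls $\lambda_{\max}$ circle by circle (Gershgorin), but $\lambda_{\min}$ is a collective quantity. For a contact network held only at the polygon boundary, the softest mode is a long-wavelength deformation. Its eigenvalue scales like the lowest Dirichlet eigenvalue of a discrete Laplacian on a patch about $\sqrt n$ circles across, i.e.\ $\lambda_{\min}=O(1/n)$. With $\mu,\Delta t$ fixed, the slow root of Eq.~\eqref{Eq:65} for small $\lambda$ is $\xi\approx 1-\Delta t\,\lambda/\mu$. Reaching a fixed tolerance therefore takes $\Omega\bigl(\mu\log(1/\mathrm{tol})/(\Delta t\,\lambda_{\min})\bigr)$ iterations, which grows with $n$.

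There are two further obstructions:
\begin{itemize}
\item $S_a\sim(2r-d_{ij})^{3/2}$ near tangency, so at equilibria with tangent or absent contacts the Hessian is singular and Theorem~\ref{thm:9} does not apply at all.
\item In Algorithm~\ref{alg:3}, the penalty $\max(0,2r-\|\boldsymbol{c}_i-\boldsymbol{c}_j\|)^2$ has a nonconvex, non-isolated zero set, so the strongly convex rate invoked in Lemma~\ref{lem:boundary_complexity} is unavailable.
\end{itemize}

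Your fallback is therefore not a fallback but the only route. Write the cap $T_{\max}$ of Table~\ref{tab:4} explicitly into all three loops, since the pseudocode has none. Then $K_{\text{outer}},K_{\text{inner}},K_{\text{boundary}}\le T_{\max}$ trivially, and the logarithmic estimates play no role in the theorem.
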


\subsection{Convergence analysis}

We analyze the convergence of the proposed algorithm in this subsection. Let \( \boldsymbol{X} = (\boldsymbol{x}_1, \ldots, \boldsymbol{x}_n) \in \mathbb{R}^{2n} \) denote the configuration of all circles in the convex polygon as defined in Definition \ref{def:5}, and let \( E(\boldsymbol{X}) \) denote the total potential energy of the system, comprising overlap energy between circles and boundary penetration energy.

\begin{theorem}[Energy dissipation]
	\label{thm:energy_dissipation}
	Consider the virtual-force dynamical system defined by Eqs.~\eqref{eq:10}--\eqref{eq:13}, where the conservative forces derive from a potential energy function $\Phi(\boldsymbol{X})$ according to $\boldsymbol{F}_{\text{cons},i} = -\nabla_{\boldsymbol{x}_i} \Phi$. The total energy of
	the system is
\begin{equation}
	\label{Eq:48}
	E(\boldsymbol{X}, \boldsymbol{V}) = \frac{1}{2} \sum_{i=1}^n \|\boldsymbol{v}_i\|^2 + \Phi(\boldsymbol{X}).
\end{equation}
	Then the energy dissipation rate satisfies
\begin{equation}
	\frac{dE}{dt} = -\mu \sum_{i=1}^n \|\boldsymbol{v}_i\|^2 \leq 0,
\end{equation}
	where $\mu > 0$ is the friction coefficient. Consequently, the virtual-force dynamical system is energy-dissipative, and the energy decreases monotonically until all velocities vanish.
\end{theorem}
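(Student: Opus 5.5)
We prove the statement in the continuous-time idealization of the dynamics in Eqs.~\eqref{eq:10}--\eqref{eq:13}, i.e.\ the limit $\Delta t\to 0$. With unit mass the trajectories then satisfy $\dot{\boldsymbol{x}}_i=\boldsymbol{v}_i$ and
\begin{equation*}
\dot{\boldsymbol{v}}_i=\boldsymbol{F}_{\text{total},i}=-\nabla_{\boldsymbol{x}_i}\Phi(\boldsymbol{X})-\mu\boldsymbol{v}_i .
\end{equation*}
This uses the hypothesis that the pairwise and boundary forces in Eqs.~\eqref{eq:8}--\eqref{eq:9} are conservative, with $\Phi$ their potential. The plan is to differentiate $E$ along a trajectory with the chain rule and substitute these equations of motion.

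\textbf{Step 1 (regularity).} We check that $\Phi$ is $C^1$, so that $t\mapsto E(\boldsymbol{X}(t),\boldsymbol{V}(t))$ is differentiable.
\begin{itemize}
\item The force magnitudes $S_a$ and $S_b$ are overlap areas. They vanish continuously at first contact, since $\theta\to 0$ when $d_{ij}\to 2r$ and $\alpha\to 0$ when $d_k\to r$.
\item Hence each force, and so $\nabla\Phi$, is continuous, and $\Phi$ is continuously differentiable.
\item This is the one point needing care. If piecewise-defined forces (e.g.\ a change of the nearest edge) break $C^1$ on a measure-zero set, we argue with absolute continuity of $E$ and obtain the identity for almost every $t$. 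That still yields monotonicity.
\end{itemize}

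\textbf{Step 2 (compute $dE/dt$).} By the chain rule,
\begin{equation*}
\frac{dE}{dt}=\sum_i \boldsymbol{v}_i\cdot\dot{\boldsymbol{v}}_i+\sum_i\nabla_{\boldsymbol{x}_i}\Phi\cdot\dot{\boldsymbol{x}}_i .
\end{equation*}
Substituting the equations of motion gives
\begin{equation*}
\frac{dE}{dt}=\sum_i \boldsymbol{v}_i\cdot\bigl(-\nabla_{\boldsymbol{x}_i}\Phi-\mu\boldsymbol{v}_i\bigr)+\sum_i\nabla_{\boldsymbol{x}_i}\Phi\cdot\boldsymbol{v}_i .
\end{equation*}
The conservative terms cancel exactly, leaving
\begin{equation*}
\frac{dE}{dt}=-\mu\sum_i\|\boldsymbol{v}_i\|^2 .
\end{equation*}
This is $\le 0$ because $\mu>0$.

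\textbf{Step 3 (monotone decrease until velocities vanish).}
\begin{itemize}
\item $E$ is nonincreasing, and it is bounded below because $\Phi\ge 0$ as a sum of overlap energies.
\item Integrating, $\mu\int_0^\infty\sum_i\|\boldsymbol{v}_i\|^2\,dt\le E(0)-\inf E<\infty$.
\item $dE/dt=0$ exactly when all $\boldsymbol{v}_i=0$, so $E$ strictly decreases as long as some circle is moving.
\item For the stronger claim that velocities actually vanish, we use Barbalat's lemma. The trajectory is bounded, since the boundary forces confine it near $P$. So $\dot{\boldsymbol{v}}$ is bounded and $\sum_i\|\boldsymbol{v}_i\|^2$ is uniformly continuous; integrability then forces $\boldsymbol{v}_i\to 0$.
\end{itemize}

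\textbf{Main obstacle.} The hard step is justifying the hypothesis that the area-based forces come from a potential. For the pairwise force this is immediate: it is radial with magnitude $S_a(d_{ij})$ depending only on $d_{ij}$, so $\Phi_{ij}=\int_{d_{ij}}^{2r}S_a(s)\,ds$. For a single edge the force is normal with magnitude $S_b(d_k)$, so $\Phi_k=\int_{d_k}^{r}S_b(s)\,ds$.

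We also note that the explicit Euler scheme in Eq.~\eqref{eq:11} reproduces this identity only up to $O(\Delta t^2)$ errors per step. The theorem is therefore a statement about the continuous dynamics, with the discrete algorithm approximating it for small $\Delta t$.
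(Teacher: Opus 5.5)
Your argument is correct and is the same as the paper's. The paper also works with the continuous-time equations $\dot{\boldsymbol{v}}_i=\boldsymbol{F}_{\text{cons},i}-\mu\boldsymbol{v}_i$, differentiates the kinetic term and $\Phi$ by the chain rule, and cancels the conservative terms to obtain $dE/dt=-\mu\sum_i\|\boldsymbol{v}_i\|^2$; unlike you, it takes the existence of $\Phi$ as a hypothesis instead of constructing $\Phi_{ij}$ and $\Phi_k$, and it does not attempt your Barbalat step. That Barbalat step does not need a bounded trajectory: the forces are globally bounded because there are finitely many terms and each overlap area is at most $\pi r^2$, and $\sum_i\|\boldsymbol{v}_i\|^2\le 2E(0)$, so $\dot{\boldsymbol{v}}$ is bounded in any case.
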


\begin{proof}
	In classical mechanics, a particle of mass \( m \) moving with velocity \( v \) possesses kinetic energy $\frac{1}{2} m v^2$. In our algorithm, to simplify computation, the mass of each circle is set to $1$. Therefore, the kinetic energy of a single circle $C_i$ is $\frac{1}{2} \|\boldsymbol{v}_i\|^2$. Summing over all \( n \) circles, the total kinetic energy is
\begin{equation}
	T = \frac{1}{2} \sum_{i=1}^{n} \|\boldsymbol{v}_i\|^2.
\end{equation}

Let \( \Phi(\boldsymbol{X}) \) denote the potential energy function of the system, which depends only on the positions of the circle centers \( \boldsymbol{X} = (\boldsymbol{x}_1, \ldots, \boldsymbol{x}_n) \). In our virtual force model, conservative forces (inter-circle repulsion and boundary repulsion) are derived from the potential energy $\boldsymbol{F}_{\text{cons}, i} = -\nabla_{\boldsymbol{x}_i} \Phi.$ This implies that:
\begin{itemize}
	\item The potential energy  increases with the overlapping area between circles;
	\item The force points in the direction of decreasing potential energy (i.e., repulsive forces push circles away from overlapping regions).
\end{itemize}

The total energy $E(\boldsymbol{X}, \boldsymbol{V})$ of the system is defined as the sum of kinetic and potential energies, as given in Eq.\eqref{Eq:48}, which is the standard definition in classical mechanics. Then $E = \frac{1}{2} \sum_i \|\boldsymbol{v}_i\|^2 + \Phi(\boldsymbol{X})$ satisfies
\begin{equation}
\frac{dE}{dt} = \frac{d}{dt}\left( \frac{1}{2} \sum_{i=1}^n \|\boldsymbol{v}_i\|^2 \right) + \frac{d\Phi}{dt}.
\end{equation}
The time derivative of the kinetic energy of the $i$-th circle is
\begin{equation}
\frac{d}{dt}\left( \frac{1}{2} \|\boldsymbol{v}_i\|^2 \right) = \boldsymbol{v}_i \cdot \dot{\boldsymbol{v}}_i = \boldsymbol{v}_i \cdot (\boldsymbol{F}_{\text{cons},i} - \mu \boldsymbol{v}_i) = \boldsymbol{v}_i \cdot \boldsymbol{F}_{\text{cons},i} - \mu \|\boldsymbol{v}_i\|^2.
\end{equation}
Summing over all circles, we obtain
\begin{equation}
	\frac{d}{dt}\left( \frac{1}{2} \sum_{i=1}^n \|\boldsymbol{v}_i\|^2 \right) = \sum_{i=1}^n \boldsymbol{v}_i \cdot \boldsymbol{F}_{\text{cons},i} - \mu \sum_{i=1}^n \|\boldsymbol{v}_i\|^2.
	\label{Eq:53}
\end{equation}
Since the conservative forces derive from the potential energy function $\Phi$ via $\boldsymbol{F}_{\text{cons},i} = -\nabla_{\boldsymbol{x}_i} \Phi$, the chain rule gives
\begin{equation}
	\frac{d\Phi}{dt} = \sum_{i=1}^n \left( \nabla_{\boldsymbol{x}_i} \Phi \right) \cdot \boldsymbol{v}_i = -\sum_{i=1}^n \boldsymbol{v}_i \cdot \boldsymbol{F}_{\text{cons},i}.
	\label{Eq:54}
\end{equation}
Substituting Eq.~\eqref{Eq:54} into the Eq.~\eqref{Eq:53} yields
\begin{equation}
	\frac{d}{dt}\left( \frac{1}{2} \sum_{i=1}^n \|\boldsymbol{v}_i\|^2 \right) = -\frac{d\Phi}{dt} - \mu \sum_{i=1}^n \|\boldsymbol{v}_i\|^2.
\end{equation}
Finally, the total energy $E = \frac{1}{2} \sum_i \|\boldsymbol{v}_i\|^2 + \Phi$ satisfies
$\frac{dE}{dt} = \frac{d}{dt}\left( \frac{1}{2} \sum_{i=1}^n \|\boldsymbol{v}_i\|^2 \right) + \frac{d\Phi}{dt} = -\mu \sum_{i=1}^n \|\boldsymbol{v}_i\|^2 \leq 0.$
\end{proof}

Equality holds if and only if all velocities are zero, i.e., $\boldsymbol{v}_i = \boldsymbol{0}$ for all $i$. This indicates that the system's energy decreases monotonically until the system reaches a static equilibrium. 

\begin{theorem}[Local linear convergence]
	\label{thm:9}
Assume that the potential energy function $\Phi(\boldsymbol{X})$ is twice continuously differentiable in a neighborhood of a local minimizer \( \boldsymbol{X}^* \), and that the Hessian $\nabla^2 \Phi(\boldsymbol{X}^*)$ is positive definite. Let \( \lambda_{\min} \) and \( \lambda_{\max} \) be the smallest and largest eigenvalues of \( \nabla^2 \Phi(\boldsymbol{X}^*) \), respectively. If the time step \( \Delta t \) satisfies $\Delta t < \frac{2}{\sqrt{\lambda _{\max}}}$, then the discrete-time iteration (Eq.~\eqref{eq:12}--\eqref{eq:13}) converges linearly to \( \boldsymbol{X}^* \) with convergence rate
\begin{equation}
	\rho = \max\left\{ |1 - \lambda_{\min}\Delta t|,\; |1 - \lambda_{\max}\Delta t| \right\} < 1.
\end{equation}
Specifically, there exists a constant \( C > 0 \) such that
\begin{equation}
	\|\boldsymbol{X}^{(k)} - \boldsymbol{X}^*\| \leq C \cdot \rho^k.
\end{equation}
\end{theorem}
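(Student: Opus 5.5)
The plan is a standard local linearization argument. I would combine it with a spectral analysis of the linearized update map and then pass back to the nonlinear iteration through Ostrowski's theorem. Write the discrete dynamics of Eqs.~\eqref{eq:12}--\eqref{eq:13} as a map on the phase variable $\boldsymbol{Z}=(\boldsymbol{X},\boldsymbol{V})\in\mathbb{R}^{4n}$. Using $\vec a_i=\boldsymbol{F}_{\text{cons},i}-\mu\boldsymbol{v}_i$ with $\boldsymbol{F}_{\text{cons}}=-\nabla\Phi$, the map is
\begin{equation*}
\boldsymbol{X}^{(k+1)}=\boldsymbol{X}^{(k)}+\Delta t\,\boldsymbol{V}^{(k)},\qquad
\boldsymbol{V}^{(k+1)}=(1-\mu\Delta t)\boldsymbol{V}^{(k)}-\Delta t\,\nabla\Phi(\boldsymbol{X}^{(k)}).
\end{equation*}
Since $\nabla\Phi(\boldsymbol{X}^*)=\boldsymbol{0}$, the point $\boldsymbol{Z}^*=(\boldsymbol{X}^*,\boldsymbol{0})$ is a fixed point. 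By the $C^2$ assumption, $\nabla\Phi(\boldsymbol{X})=H(\boldsymbol{X}-\boldsymbol{X}^*)+o(\|\boldsymbol{X}-\boldsymbol{X}^*\|)$ with $H=\nabla^2\Phi(\boldsymbol{X}^*)$. The Jacobian of the map at $\boldsymbol{Z}^*$ is therefore the block matrix
\begin{equation*}
M=\begin{pmatrix} I & \Delta t\, I\\ -\Delta t\, H & (1-\mu\Delta t) I\end{pmatrix}.
\end{equation*}

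The first key step is to diagonalize $H=Q\Lambda Q^\top$. Because $H$ is symmetric positive definite, $M$ decouples into $2n$ independent $2\times 2$ blocks, one per eigenvalue $\lambda\in[\lambda_{\min},\lambda_{\max}]$. Each block has characteristic polynomial
\begin{equation*}
p_\lambda(z)=(z-1)\bigl(z-(1-\mu\Delta t)\bigr)+\lambda\Delta t^2 .
\end{equation*}
The second step is to bound the roots of $p_\lambda$ inside the unit disk uniformly in $\lambda$, using the Jury (Schur--Cohn) conditions: $p_\lambda(1)=\lambda\Delta t^2>0$, $p_\lambda(-1)=2(2-\mu\Delta t)+\lambda\Delta t^2>0$, and $|p_\lambda(0)|=|1-\mu\Delta t+\lambda\Delta t^2|<1$.

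The hypothesis $\Delta t<2/\sqrt{\lambda_{\max}}$ is the natural condition that keeps $\lambda\Delta t^2<4$, which handles the oscillatory (complex-root) regime. In the regime where friction dominates the step, the relevant root behaves like $1-\lambda\Delta t/\mu$. After normalizing the friction so that the effective step matches (the quasi-static limit $\mu\Delta t\to 1$, where the velocity carries only the last gradient), this gives exactly the factor $|1-\lambda\Delta t|$. Maximizing over $\lambda\in[\lambda_{\min},\lambda_{\max}]$ then yields the stated $\rho$. Once the spectral radius satisfies $\varrho(M)\le\rho<1$, I would choose an adapted norm $\|\cdot\|_\ast$ with $\|M\|_\ast\le\rho+\varepsilon<1$. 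The $o(\cdot)$ remainder is absorbed in a small ball around $\boldsymbol{Z}^*$, giving $\|\boldsymbol{Z}^{(k)}-\boldsymbol{Z}^*\|_\ast\le(\rho+2\varepsilon)^k\|\boldsymbol{Z}^{(0)}-\boldsymbol{Z}^*\|_\ast$. Norm equivalence then produces the constant $C$ in $\|\boldsymbol{X}^{(k)}-\boldsymbol{X}^*\|\le C\rho^k$, up to the usual arbitrarily small $\varepsilon$ loss, or exactly if $M$ is diagonalizable.

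The main obstacle is the second step. For a genuinely second-order (heavy-ball) recurrence, the constant-term condition $|1-\mu\Delta t+\lambda\Delta t^2|<1$ requires $\lambda\Delta t<\mu$. That is a constraint linking friction and step size, and it is not implied by $\Delta t<2/\sqrt{\lambda_{\max}}$ alone. Likewise, the exact root modulus is $\sqrt{1-\mu\Delta t+\lambda\Delta t^2}$ in the underdamped case, not $|1-\lambda\Delta t|$. My first attempt would be to make the friction assumption explicit, namely $\mu$ large enough that $\lambda_{\max}\Delta t<\mu\le 2/\Delta t$. I would then show that in the overdamped/quasi-static normalization the slowest mode contracts by $\max\{|1-\lambda_{\min}\Delta t|,|1-\lambda_{\max}\Delta t|\}$. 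Failing that, I would prove linear convergence with $\rho:=\max_\lambda \varrho(M_\lambda)<1$ and note that it reduces to the stated formula in that limit.
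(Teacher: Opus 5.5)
Your diagnosis of the obstacle is correct, and it is exactly where the paper's own proof breaks. The statement as written is false, so your second step cannot be closed, by you or by the paper.

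\textbf{How the paper argues, and why it fails.} The paper takes the same route you do: linearize at $\boldsymbol{X}^*$, decouple along the eigenvectors of $H$, and study one quadratic per eigenvalue, Eq.~\eqref{Eq:65}. It then simply asserts Eq.~\eqref{Eq:66}, saying it ``can be verified by analyzing the characteristic equation''. It cannot.
\begin{itemize}
\item Take $\Phi(\boldsymbol{X})=\frac{\lambda}{2}\|\boldsymbol{X}-\boldsymbol{X}^*\|^2$ with $\lambda=100$ and $\Delta t=0.1<2/\sqrt{\lambda_{\max}}=0.2$. Then $\rho=|1-100\cdot 0.1|=9$, so the asserted $\rho<1$ fails however the scheme is read.
\item With $\lambda=1$ and $\Delta t=1$ the statement gives $\rho=0$, i.e.\ $\boldsymbol{X}^{(k)}=\boldsymbol{X}^*$ for all $k\ge 1$. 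Your map with $\boldsymbol{V}^{(0)}=\boldsymbol{0}$ instead gives $\boldsymbol{X}^{(1)}=\boldsymbol{X}^{(0)}$. For $\mu=0.6$ (Table~\ref{tab:4}) it even diverges, since $p_\lambda(0)=1.4$.
\end{itemize}
The quantity $\max\{|1-\lambda_{\min}\Delta t|,|1-\lambda_{\max}\Delta t|\}$ is the contraction factor of plain gradient descent with step $\Delta t$. It is not the rate of any discretization of the damped second-order system.

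\textbf{The two schemes differ.} The paper's quadratic is not your $p_\lambda$. It writes $\boldsymbol{x}^{(k+1)}=\boldsymbol{x}^{(k)}+\Delta t\,\boldsymbol{v}^{(k)}$ as you do. Its elimination of $\boldsymbol{v}$, however, silently moves the gradient from $\boldsymbol{x}^{(k-1)}$ to $\boldsymbol{x}^{(k)}$, which is the semi-implicit scheme (position updated with the new velocity). For that scheme the Jury conditions are $0<\mu\Delta t<2$ and $\lambda_{\max}\Delta t^2+2\mu\Delta t<4$. So $\Delta t<2/\sqrt{\lambda_{\max}}$ is only the $\mu\to 0$ limit and is not sufficient for $\mu>0$. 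The underdamped root modulus there is $\sqrt{1-\mu\Delta t}$, so the stated rate is wrong for that reading too.

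\textbf{The gap in your plan.} The ``normalization'' meant to recover $|1-\lambda\Delta t|$ does not work. Substituting $z=1+\Delta t\,s$ gives $p_\lambda(z)=\Delta t^2(s^2+\mu s+\lambda)$. The roots are therefore exactly $1+\Delta t\,s_\pm$ with $s_\pm=\frac{1}{2}\bigl(-\mu\pm\sqrt{\mu^2-4\lambda}\bigr)$.
\begin{itemize}
\item At $\mu\Delta t=1$ the dominant root (for $4\lambda\Delta t^2\le 1$) is $\frac{1}{2}\bigl(1+\sqrt{1-4\lambda\Delta t^2}\bigr)\approx 1-\lambda\Delta t^2$. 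The effective step is $\Delta t^2$ with a one-step delay, not $\Delta t$.
\item The overdamped approximation $1-\lambda\Delta t/\mu$ matches $1-\lambda\Delta t$ only for $\mu=1$, and only to first order.
\item $1-\lambda\Delta t$ is an exact root only when $\mu=1+\lambda$, which cannot hold for every eigenvalue at once unless $\lambda_{\min}=\lambda_{\max}$.
\end{itemize}
So no choice of friction yields the stated $\rho$, and that branch should be dropped.

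\textbf{What you can prove.} Your fallback, stated as a corrected theorem, is provable. If $\lambda_{\max}\Delta t<\mu\le 2/\Delta t$, then for all $\lambda\in[\lambda_{\min},\lambda_{\max}]$ you have $p_\lambda(1)=\lambda\Delta t^2>0$, $p_\lambda(-1)=4-2\mu\Delta t+\lambda\Delta t^2>0$ and $|p_\lambda(0)|<1$. Hence $\rho:=\max_\lambda\varrho(M_\lambda)<1$. Your adapted-norm (Ostrowski) argument then gives $\|\boldsymbol{X}^{(k)}-\boldsymbol{X}^*\|\le C(\rho+\varepsilon)^k$ locally. This remainder control is something the paper never does; it only ``neglects higher-order terms''.

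One small correction: with $\Phi$ merely $C^2$ the remainder is only $o(\|\boldsymbol{e}\|)$. So even for diagonalizable $M$, the exact bound $C\rho^k$ needs more, e.g.\ a Lipschitz Hessian, so that $\prod_k\bigl(1+O(\|\boldsymbol{e}^{(k)}\|)\bigr)$ stays bounded.
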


\begin{proof}
The entire virtual force dynamical system is determined by the following equations
\begin{equation}
	\begin{cases} 
		\dot{\boldsymbol{v}}(t) = -\nabla \Phi(\boldsymbol{x}(t)) - \mu \boldsymbol{v}(t) \\ 
		\dot{\boldsymbol{x}}(t) = \boldsymbol{v}(t) 
	\end{cases},
	\label{eq:58}
\end{equation}
where \(\mu > 0\) is the friction coefficient. Discretization of Eq.\eqref{eq:58} yields Eqs.\eqref{eq:12}--\eqref{eq:13}, which can also be written as
\begin{equation}
\left\{ \begin{array}{l}
	\boldsymbol{v}^{\left( k+1 \right)}=\boldsymbol{v}^{\left( k \right)}-\Delta t\,\nabla \Phi \left( \boldsymbol{x}^{\left( k \right)} \right) -\mu \Delta t\,\boldsymbol{v}^{\left( k \right)}\\
	\boldsymbol{x}^{\left( k+1 \right)}=\boldsymbol{x}^{\left( k \right)}+\Delta t\boldsymbol{v}^{\left( k \right)}\\
\end{array} \right. .
\end{equation}
 Then we eliminate the velocity variable. Substituting \( \boldsymbol{v}^{(k)} = \frac{\boldsymbol{x}^{(k+1)} - \boldsymbol{x}^{(k)}}{\Delta t} \) into the velocity update equation and neglecting higher-order terms yields the approximate second-order difference equation
\begin{equation}
\boldsymbol{x}^{(k+1)} - 2\boldsymbol{x}^{(k)} + \boldsymbol{x}^{(k-1)} = -\Delta t^2 \, \nabla \Phi(\boldsymbol{x}^{(k)}) - \mu \Delta t \, (\boldsymbol{x}^{(k)} - \boldsymbol{x}^{(k-1)}).
\end{equation}
In a neighborhood of a local minimum \( \boldsymbol{x}^* \), the gradient can be approximated as \( \nabla \Phi(\boldsymbol{x}) \approx \nabla^2 \Phi(\boldsymbol{x}^*) (\boldsymbol{x} - \boldsymbol{x}^*) \). Defining the error \( \boldsymbol{e}^{(k)} = \boldsymbol{x}^{(k)} - \boldsymbol{x}^* \) and neglecting higher-order terms, we obtain the linearized system
\begin{equation}
\boldsymbol{e}^{(k+1)} - 2\boldsymbol{e}^{(k)} + \boldsymbol{e}^{(k-1)} = -\Delta t^2 \boldsymbol{H} \boldsymbol{e}^{(k)} - \mu \Delta t (\boldsymbol{e}^{(k)} - \boldsymbol{e}^{(k-1)}),
\end{equation}
where \( \boldsymbol{H} = \nabla^2 \Phi(\boldsymbol{x}^*) \) is the Hessian matrix. We can rewrite the system in state-space form
\begin{equation}
\left\{ \begin{array}{l}
	\boldsymbol{y}^{\left( k \right)}=\left[ \boldsymbol{e}^{\left( k \right)};\,\boldsymbol{e}^{\left( k-1 \right)} \right]\\
	\boldsymbol{y}^{\left( k+1 \right)}=\boldsymbol{Ay}^{\left( k \right)}\\
\end{array} \right. ,
 \end{equation}
where \( \boldsymbol{A} \) is a \( 2n \times 2n \) matrix. For each eigenvalue \( \lambda \) of the Hessian \( \boldsymbol{H} \), the corresponding eigenmode satisfies 
\begin{equation}
\boldsymbol{e}^{\left( k+1 \right)}-\left( 2-\Delta t^2\lambda -\mu \Delta t \right) \boldsymbol{e}^{\left( k \right)}+\left( 1-\mu \Delta t \right) \boldsymbol{e}^{\left( k-1 \right)}=0.
\end{equation}
Let $ e^{(k)} = \xi^k $, we obtain the characteristic equation
\begin{equation}
\xi^2 - (2 - \Delta t^2 \lambda - \mu \Delta t) \xi + (1 - \mu \Delta t) = 0.
\label{Eq:65}
\end{equation}
Solving Eq.~\eqref{Eq:65} yields  two characteristic roots \( \xi_1, \xi_2 \). According to von Neumann stability analysis \cite{VonNeumann},
the convergence rate of the system is determined by
$\rho(\lambda) = \max(|\xi_1|, |\xi_2|).$ For convergence, both characteristic roots need  satisfying $|\xi| < 1$. Under the condition $\Delta t < 2/\sqrt{\lambda _{\max}}$ and with appropriate damping $\mu > 0$, the spectral radius of the system is given by
\begin{equation}
	\rho = \max\left\{ |1 - \lambda_{\min}\Delta t|,\ |1 - \lambda_{\max}\Delta t| \right\}.
	\label{Eq:66}
\end{equation}
Eq.~\eqref{Eq:66} can be verified by analyzing the characteristic equation. When $\rho < 1$, the error decays as $\|\boldsymbol{e}^{(k)}\| \leq C \rho^k$ for some constant $C > 0$, establishing linear convergence. 
\end{proof}

\begin{theorem}[Convergence of radius expansion]
	\label{thm:10}
	Let $r^* = r_{\text{target}}$ be the target radius. Starting from \( r_0 = \alpha r^* \) with \( \alpha \in (0,1) \), the adaptive radius expansion algorithm (Algorithm~\ref{alg:2}) produces a sequence \( \{r_k\} \) that converges monotonically to \( r^* \). The number of expansion steps satisfies
\begin{equation}
	K_{\text{outer}} \leq \left\lceil \log \left( \frac{r^*}{\epsilon_{\text{inflate}}} \right) \right\rceil + \left\lceil \log \left( \frac{1}{\alpha} \right) \right\rceil + O(1) \leq O \left( \log \left( \frac{r_{\text{target}}}{\epsilon_{\text{inflate}}} \right) \right)
\end{equation}
	where \( \epsilon_{\text{inflate}} \) is the stopping threshold for the step size.
\end{theorem}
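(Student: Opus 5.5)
The plan is to analyze Algorithm~\ref{alg:2} as a bracketing/bisection-type procedure on the radius. I will view every iteration of the main loop as either an \emph{accepted} step, where $r_{\text{current}}$ increases by $\Delta r_k$, or a \emph{rejected} step, where $r_{\text{current}}$ is unchanged and $\Delta r_k$ is halved. I will also adopt an explicit feasibility hypothesis so that $r^*$ has a precise meaning. Define the accepted set $\mathcal{A}=\{r : U(r)>U_{\text{th}}\}$ along the optimization operator $\mathcal{O}$. Assume $\mathcal{A}\cap[r_0,\infty)=[r_0,r^*)$, i.e.\ the utilization test succeeds exactly below the target radius. This is the monotonicity of $U$ in $r$ that the expansion design presupposes.

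\textbf{Step 1: monotonicity and bound.} Since $\Delta r_k>0$, the sequence $r_k$ of accepted radii is nondecreasing. By the hypothesis, an accepted trial satisfies $r_k+\Delta r_k<r^*$, so $r_k<r^*$ for all $k$. Hence $r_k$ increases monotonically and is bounded by $r^*$.

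\textbf{Step 2: limit.} The loop ends only when $\Delta r_k\le\epsilon_{\text{inflate}}$. Halving happens only on a rejection, where $r_k+\Delta r_k\ge r^*$. So right after the last halving, $r^*-r_k\le 2\Delta r_k\le 2\epsilon_{\text{inflate}}$. Letting $\epsilon_{\text{inflate}}\to 0$ gives $r_k\to r^*$.

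\textbf{Step 3: counting steps.} I split the iterations into two phases.

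\emph{Growth phase.} Before the first rejection, the step starts at $\alpha_{\text{inflate}}r^*$ and doubles every $C_{\text{th}}$ accepts. After $j$ doublings the cumulative increment is of order $C_{\text{th}}2^{j}\alpha_{\text{inflate}}r^*$. The gap to close is $r^*-r_0=(1-\alpha)r^*$, so $j\le\lceil\log((1-\alpha)/\alpha_{\text{inflate}})\rceil+O(1)$. Relative to the initial radius $r_0=\alpha r^*$ this scale produces the $\lceil\log(1/\alpha)\rceil$ term, with $C_{\text{th}}$ absorbed into $O(1)$.

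\emph{Refinement phase.} After the first rejection the algorithm behaves like a bisection on the interval $[r_k,r^*]$. The invariant is $r^*-r_k<2\Delta r_k$. Consider any block of iterations between two consecutive halvings. If that block doubled the step, the gap would be exceeded and force a rejection. So at most $C_{\text{th}}$ accepts occur per halving, and a doubling can be followed by at most $O(1)$ extra rejections. Therefore the net number of halvings needed to bring $\Delta r$ from $O(r^*)$ down to $\epsilon_{\text{inflate}}$ is $\lceil\log(r^*/\epsilon_{\text{inflate}})\rceil+O(1)$. The total iteration count is this times a constant depending only on $C_{\text{th}}$.

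Adding the two phases gives the stated bound. Since $\alpha$ is a fixed constant and $r^*/\epsilon_{\text{inflate}}\ge 1$, the sum is $O(\log(r_{\text{target}}/\epsilon_{\text{inflate}}))$.

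\textbf{Main obstacle.} The delicate point is controlling oscillation between doubling and halving in the refinement phase. A naive argument allows $\Delta r$ to double and halve repeatedly without net progress. I would handle it with a potential function
\[
\Psi_k=\log_2\!\bigl(\Delta r_k/\epsilon_{\text{inflate}}\bigr)+c\,\log_2\!\bigl((r^*-r_k)/\Delta r_k+1\bigr).
\]
The aim is to show that $\Psi_k$ decreases by a fixed amount every $C_{\text{th}}+1$ iterations once the bracketing invariant holds. A doubling is only possible after $C_{\text{th}}$ consecutive accepts, and each accept reduces $r^*-r_k$ by at least $\Delta r_k$, which shrinks the second term.

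A secondary caveat is that Algorithm~\ref{alg:2} restarts $\mathcal{S}_{\text{test}}$ from a copy of the state, so $U(r)$ must be a well-defined function of $r$. I will state this explicitly as part of the hypothesis.
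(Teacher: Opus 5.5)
Your proposal follows the same two-phase skeleton as the paper's proof: accepts with periodic doubling until the first overshoot, then bisection-like halving down to $\epsilon_{\text{inflate}}$. It is more careful than the paper. The threshold hypothesis $U(r)>U_{\text{th}}\iff r<r^*$ that you state explicitly is what the paper leaves implicit in ``well-initialized'' and ``after overshoot''. It is genuinely needed, since Algorithm~\ref{alg:2} never uses $r_{\text{target}}$ as a cap.

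The oscillation you single out as the main obstacle does not occur. Your invariant should read $r^*-r_k\le 2\Delta r_k$, since a rejection only gives $r_k+\Delta r_k\ge r^*$. It implies that after an accepted trial at step $\Delta r_k$ the remaining gap is at most $\Delta r_k$, so the next trial at that step is rejected. Hence at most one accept occurs between consecutive halvings, and the counter $C$ never exceeds $1$ after the first rejection. With $C_{\text{th}}=3$, no doubling ever happens in the refinement phase, and the potential $\Psi_k$ is unnecessary.

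The same observation shows where your final sentence overreaches. Each halving can be preceded by one accepted trial, so the refinement phase can take up to about $2\log_2(r^*/\epsilon_{\text{inflate}})$ loop iterations. Your argument therefore yields $K_{\text{outer}}\le 2\lceil\log_2(r^*/\epsilon_{\text{inflate}})\rceil+O(1)$, not the coefficient-one inequality. The claim ``adding the two phases gives the stated bound'' holds only for the final $O(\log(r_{\text{target}}/\epsilon_{\text{inflate}}))$ form.

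The paper reaches coefficient one only because its proof counts halvings and ignores the interleaved accepted trials. Each of those trials also costs an \texttt{optimize\_positions} call in the accounting of Lemma~\ref{lem:5}. Since that lemma only needs the $O$-form, state your conclusion that way.

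Likewise, the growth phase is $O(1)$ uniformly in $\alpha$. The number of accepts before the first rejection is less than $C_{\text{th}}\bigl(\log_2(1+1/(C_{\text{th}}\alpha_{\text{inflate}}))+1\bigr)$. So the $\lceil\log(1/\alpha)\rceil$ term is harmless slack, not something that phase produces.
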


\begin{proof}
	The step size $\Delta r_k$ doubles when the utilization rate $U(r_k + \Delta r_k) > U_{\text{th}}$ for $C_{\text{th}}$ consecutive successes, and halves otherwise.  For a well-initialized system, the algorithm performs a binary-like search on the radius. The initial step size is $\Delta r_0 = r^* \cdot \alpha_{\text{inflate}}$. The number of doublings needed to approach $r^*$ is at most $ \lceil \log(1/\alpha) \rceil$. After overshoot, the step size halves geometrically until it falls below $\epsilon_{\text{inflate}}$, requiring at most $\lceil \log(r^*/\epsilon_{\text{inflate}}) \rceil$ halvings. The total number of steps is bounded by the sum of these terms plus a constant. 
\end{proof}

\begin{theorem}[Convergence of augmented Lagrangian method]
	\label{thm:boundary_convergence}
Consider the constrained optimization problem defined by Eqs.~\eqref{eq:19} and \eqref{eq:20}, where the objective function \(J_{\text{cover}}(\boldsymbol{C})\) and constraint functions \(g_j(\boldsymbol{c}_i)\), \(h_{ij}(\boldsymbol{C})\) are continuously differentiable. Let \(\{\boldsymbol{C}^{(k)}\}\) be the sequence generated by Algorithm~\ref{alg:3} with the  update rules given in Eq.~\eqref{eq:31}. We can assume that:
\begin{enumerate}
	\item The penalty parameters \(\rho_b, \rho_c > 0\) are chosen sufficiently large;
	\item The gradient descent step size \(\eta > 0\) is sufficiently small to ensure convergence of the inner loop;
	\item A constraint qualification (e.g., Mangasarian-Fromovitz Constraint Qualification \cite{Li}) holds at the limit point.
\end{enumerate}
Then every accumulation point of \(\{\boldsymbol{C}^{(k)}\}\) is an approximate KKT point \cite{Feng2019} of the original problem. Moreover, the constraint violations satisfy
\begin{equation}
\max\left\{ \max_{i,j} \max(0, g_j(\boldsymbol{c}_i^{(k)})),\; \max_{i<j} \max(0, h_{ij}(\boldsymbol{C}^{(k)})) \right\} = O\left(\frac{1}{\min(\rho_b, \rho_c)}\right).
\end{equation}
\end{theorem}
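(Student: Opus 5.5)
We will argue the statement as a penalty-method result, treating Algorithm~\ref{alg:3} as gradient descent on the smooth penalty function
\begin{equation*}
Q_\rho(\boldsymbol{C}) = J_{\text{cover}}(\boldsymbol{C}) + \kappa J_{\text{boundary}}(\boldsymbol{C}) + \sum_{i,j}\Big[\mu_{ij}\max(0,g_j) + \tfrac{\rho_b}{2}\max(0,g_j)^2\Big] + \sum_{i<j}\Big[\lambda_{ij}\max(0,h_{ij}) + \tfrac{\rho_c}{2}\max(0,h_{ij})^2\Big],
\end{equation*}
that is, the augmented Lagrangian \eqref{Eq:21}. We regard the update \eqref{eq:31} as a (projected/modified) descent direction for $Q_\rho$.

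\textbf{Inner loop.} We first show that for fixed multipliers and penalties the inner loop converges to a stationary point of $Q_\rho$.
\begin{itemize}
\item The squared hinge terms $\max(0,\cdot)^2$ are $C^1$ with Lipschitz gradient on bounded sets. The iterates stay bounded because the normal gradient \eqref{Eq:22} pushes centers back toward the compact convex polygon.
\item Hence $\nabla Q_\rho$ is $L$-Lipschitz on the relevant sublevel set.
\item For $\eta<2/L$ (assumption 2), the descent lemma gives $Q_\rho(\boldsymbol{C}^{(t+1)})\le Q_\rho(\boldsymbol{C}^{(t)})-\eta(1-\eta L/2)\|\nabla Q_\rho\|^2$.
\item Summing this inequality yields $\nabla Q_\rho(\boldsymbol{C}^{(t)})\to 0$, so every accumulation point is stationary.
\end{itemize}
The tangential term $\gamma\boldsymbol{P}_{\parallel}(\boldsymbol{b}_i^*-\boldsymbol{c}_i)$ is not a gradient of $Q_\rho$. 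We plan to absorb it either by noting that it vanishes near equilibrium, or by treating it as a bounded perturbation that is summable or asymptotically orthogonal to $\nabla Q_\rho$. This is the main obstacle. A fallback is to include the quadratic attraction $\frac{\gamma}{2}\|\boldsymbol{P}_\parallel(\boldsymbol{b}_i^*-\boldsymbol{c}_i)\|^2$ in the potential, which is smooth away from Voronoi switches of $\boldsymbol{b}_i^*$, and to state KKT for that modified objective.

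\textbf{Approximate KKT.} At a stationary point $\bar{\boldsymbol{C}}$, define
\begin{equation*}
\tilde\mu_{ij}=\mu_{ij}+\rho_b\max(0,g_j(\bar{\boldsymbol{c}}_i)), \qquad \tilde\lambda_{ij}=\lambda_{ij}+\rho_c\max(0,h_{ij}(\bar{\boldsymbol{C}})).
\end{equation*}
Then $\nabla Q_\rho=0$ reads $\nabla(J_{\text{cover}}+\kappa J_{\text{boundary}})+\sum\tilde\mu_{ij}\nabla g_j+\sum\tilde\lambda_{ij}\nabla h_{ij}=0$, which is exactly stationarity of the Lagrangian with nonnegative multipliers. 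Complementarity holds approximately, since $\tilde\mu_{ij}>\mu_{ij}$ only where $g_j>0$. Combined with small constraint violation, this is the approximate KKT condition of \cite{Feng2019}.

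\textbf{Violation bound.} We use MFCQ at the limit point (assumption 3), in the standard way.
\begin{itemize}
\item Suppose the scaled violations $\rho\max(0,g)$ were unbounded along a sequence. Dividing the stationarity equation by their norm and passing to the limit gives nonnegative multipliers on active constraints with $\sum\tilde\mu\nabla g+\sum\tilde\lambda\nabla h=0$, not all zero.
\item This contradicts MFCQ via Gordan's alternative.
\item Hence $\tilde\mu,\tilde\lambda$ are bounded by some $M$, depending on $\sup\|\nabla(J_{\text{cover}}+\kappa J_{\text{boundary}})\|$, which is finite on the compact region.
\item Therefore $\max(0,g_j)\le M/\rho_b$ and $\max(0,h_{ij})\le M/\rho_c$, which gives the claimed $O(1/\min(\rho_b,\rho_c))$.
\end{itemize}

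Two points remain to check: boundedness of iterates, which follows from the boundary penalty, and differentiability of $J_{\text{cover}}$, which is assumed in the statement.
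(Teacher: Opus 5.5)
\textbf{There is a genuine gap, and it is shared with the paper's proof.} Your route is the paper's route. The paper's proof also declares Algorithm~\ref{alg:3} to be gradient descent on the augmented Lagrangian \eqref{Eq:21} with $\eta<1/L$ and then appeals to standard penalty/ALM theory. You carry this out more carefully, and your MFCQ/Gordan bound on the multiplier estimates improves on the paper's bare assertion that the violation bound follows from the update formulas. But the identification everything rests on is false, and the tangential term is the smaller part of the problem.

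The update \eqref{eq:31} has exactly three terms: the gradient of the quadratic boundary penalty, the gradient of the quadratic overlap penalty, and $\gamma\boldsymbol P_{\parallel}(\boldsymbol b_i^*-\boldsymbol c_i)$. It contains no $\nabla J_{\text{cover}}$ and no $\kappa\nabla J_{\text{boundary}}$. It also has nothing from the linear terms $\mu_{ij}\max(0,g_j)$ and $\lambda_{ij}\max(0,h_{ij})$, whose multipliers Algorithm~\ref{alg:3} never sets or updates. (As literally written, each bracketed term is a force, i.e.\ minus a gradient---compare \eqref{Eq:22} and \eqref{eq:29}---so you are also silently reversing the sign of the step.) Hence $Q_\rho$ need not decrease along the iterates, and a limit point $\bar{\boldsymbol C}$ does not satisfy $\nabla Q_\rho=\boldsymbol 0$. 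What it satisfies, for each $i$, is
\[
\rho_b\sum_{j}\max\bigl(0,g_j(\bar{\boldsymbol c}_i)\bigr)\nabla g_j(\bar{\boldsymbol c}_i)+\rho_c\sum_{j\ne i}\max\bigl(0,h_{ij}(\bar{\boldsymbol C})\bigr)\nabla_{\boldsymbol c_i}h_{ij}(\bar{\boldsymbol C})=\gamma\boldsymbol P_{\parallel}\bigl(\boldsymbol b_i^*-\bar{\boldsymbol c}_i\bigr),
\]
in which the objective does not appear.

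For the objective you use, the conclusion is actually false. Take $n=1$, $P=[0,10]^2$, $r=1$, $\boldsymbol c=(5,3)$. The circle is strictly feasible, and its nearest encircling point $(5,\vartheta r)$ lies on its normal line. So all three terms of \eqref{eq:31} vanish and $\boldsymbol c$ is a fixed point. Yet $\kappa\nabla J_{\text{boundary}}(\boldsymbol c)=2\kappa(3-\delta)\boldsymbol n\neq\boldsymbol 0$ for $\kappa>0$ and the default $\delta=1.2$. Thus $\boldsymbol c$ is neither stationary for $Q_\rho$ nor a KKT point of $\min J_{\text{cover}}+\kappa J_{\text{boundary}}$ subject to \eqref{eq:20}.

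\textbf{What can be salvaged.}
\begin{itemize}
\item \emph{Violation bound.} Apply your MFCQ step to the displayed equation instead of to $\nabla Q_\rho=\boldsymbol 0$. Its right-hand side is bounded by $\gamma$ times the diameter of a region containing $P$ and the iterates. Positive linear independence of the violated-constraint gradients then gives $\rho_b\max(0,g_j),\,\rho_c\max(0,h_{ij})=O(\gamma)$, i.e.\ violation $O(\gamma/\min(\rho_b,\rho_c))$. A uniform constant needs a quantitative MFCQ, or MFCQ at every limit point plus compactness. The bound holds at limit points, not at every iterate $\boldsymbol C^{(k)}$ as the statement literally reads.
\item \emph{Inner loop.} The correct Lyapunov function for \eqref{eq:31} is your fallback potential with $J_{\text{cover}}$, $\kappa J_{\text{boundary}}$ and the linear hinges removed. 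That is, the two quadratic penalties plus $\tfrac{\gamma}{2}\sum_i\|\boldsymbol P_{\parallel}(\boldsymbol b_i^*-\boldsymbol c_i)\|^2$. It is smooth only between switches of $\boldsymbol b_i^*$ and of the nearest edge.
\item \emph{KKT claim.} For the objective $J_{\text{cover}}$ alone, this must come from the structure of the problem, not from the algorithm. On the feasible set of \eqref{eq:20} the discs have disjoint interiors and lie in $P$, so $J_{\text{cover}}\equiv-n\pi r^2/\mathrm{Area}(P)$ there. The covered area changes only at order $\|\Delta\boldsymbol C\|^{3/2}$ at tangencies, so $\nabla J_{\text{cover}}=\boldsymbol 0$ on the feasible set. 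At points with $O(1/\rho)$ violation, $\|\nabla J_{\text{cover}}\|=O(\rho^{-1/2})$, so these are approximate KKT points with zero multipliers.
\end{itemize}

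\textbf{Two further slips in your KKT step.} First, $\mu_{ij}\max(0,g_j)$ is not differentiable at $g_j=0$, so $Q_\rho$ is not $C^1$ unless the multipliers vanish. Second, on the strictly feasible side that term contributes $\boldsymbol 0$, not $\mu_{ij}\nabla g_j$. Your $\tilde\mu_{ij}=\mu_{ij}+\rho_b\max(0,g_j)$ therefore assigns multiplier $\mu_{ij}$ to inactive constraints, and the complementarity claim fails.
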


\begin{proof}
	Algorithm \ref{alg:3} is a variant of the augmented Lagrangian method combined with gradient descent. For fixed penalty parameters \(\rho_b, \rho_c > 0\), the augmented Lagrangian function \(\mathcal{L}(\boldsymbol{C}, \lambda, \mu, \rho)\) is continuously differentiable with respect to \(\boldsymbol{C}\). Since the domain of \(\boldsymbol{C}\) is bounded (all circles are confined within the convex polygon), the gradient \(\nabla_{\boldsymbol{C}} \mathcal{L}\) is Lipschitz continuous on this domain. Let \(L\) denote the Lipschitz constant. With a step size satisfying \(\eta < 1/L\), the gradient descent iterates are guaranteed to converge to a stationary point of \(\mathcal{L}\) for fixed multipliers and penalties \cite{Nocedal2006}.
	
    For sufficiently large penalty parameters \(\rho_b, \rho_c\), the quadratic penalty terms dominate, making the augmented Lagrangian locally convex near the feasible region. Consequently, the stationary point obtained from the inner loop satisfies the approximate KKT conditions, where the violation of each constraint is bounded by \(O(1/\rho_b)\) and \(O(1/\rho_c)\), respectively. Under the assumption that a constraint qualification holds at the accumulation point, standard results in augmented Lagrangian theory  \cite{Nocedal2006} guarantee that any accumulation point of \(\{\boldsymbol{C}^{(k)}\}\) is an approximate KKT point. The constraint violation bound follows directly from the update formulas, with residual error inversely proportional to the penalty parameter. Hence,
	\begin{equation}
	\max(0, g_j(\boldsymbol{c}_i^{(k)})) = O(1/\rho_b), \quad \max(0, h_{ij}(\boldsymbol{C}^{(k)})) = O(1/\rho_c),
	\end{equation}
	which yields the stated bound. 
\end{proof}

\begin{theorem}[Overall Convergence]
	\label{thm:overall_convergence}
	Let the target number of circles \( n \) and target radius \( r^* \) be given. The complete algorithm (Algorithm \ref{alg:1} + Algorithm \ref{alg:2} + Algorithm \ref{alg:3}) produces a sequence of configurations \( \{\boldsymbol{X}^{(t)}\} \) such that:
	\begin{enumerate}
		\item The coverage rate is non-decreasing after the first few iterations.
		\item The sequence of utilization rates \( U^{(t)} \) converges to a local maximum.
		\item The final configuration satisfies all constraints  up to a tolerance determined by the stopping criteria.
	\end{enumerate}
Furthermore, the overall iteration count is bounded by a constant that depends logarithmically on the desired precision.
\end{theorem}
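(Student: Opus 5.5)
The plan is to assemble Theorem~\ref{thm:overall_convergence} from the three module-level results, treating the pipeline as a composition of three phases with a clear hand-off between them. Algorithm~\ref{alg:1} runs once and only provides the initial configuration $\boldsymbol{X}^{(0)}$. Its cost is bounded by Lemma~\ref{lemma4}, and it contributes no iterations to the count. I would therefore set $t=0$ at the output of Algorithm~\ref{alg:1} and index all subsequent configurations by the inner iterations of Algorithms~\ref{alg:2} and~\ref{alg:3}. The ``first few iterations'' in item~1 I interpret as the transient phase of Algorithm~\ref{alg:2}, during which radii are still being tested and a rejected trial state is discarded. I would handle this by restricting attention to the sequence of \emph{accepted} states $\mathcal{S}$. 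By construction, a state is accepted only when its utilization exceeds $U_{\text{th}}$.

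\textbf{Item 1 (monotone coverage).} Within each inner virtual-force relaxation at fixed radius, I would invoke Theorem~\ref{thm:energy_dissipation}. The potential $\Phi$ is the total overlap area, both circle--circle and circle--boundary, and it is nonincreasing along the damped flow. Once the kinetic energy has decayed, that is, after the first few iterations, the total area $n\pi r^2 - \Phi$ that is not double counted and lies inside $P$ is a lower bound that tracks $\mathrm{Area}\bigl(P\cap\bigcup C_i\bigr)$ up to higher-order multiple-overlap terms. This gives a non-decreasing CR along accepted states. Across radius increases, CR can only grow, since the same centers with larger disks cover a superset. In Algorithm~\ref{alg:3}, the iteration is gradient descent on $\mathcal{L}$, which contains $J_{\text{cover}}=-\mathrm{CR}$. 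For $\eta<1/L$ the descent lemma gives $\mathcal{L}(\boldsymbol{C}^{(t+1)})\le\mathcal{L}(\boldsymbol{C}^{(t)})$. Once the penalty terms have vanished, i.e.\ the constraints are satisfied, this decrease is a decrease of $-\mathrm{CR}$.

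\textbf{Item 2 (convergence of $U^{(t)}$).} Near a nondegenerate local minimizer, Theorem~\ref{thm:9} gives linear convergence of $\boldsymbol{X}^{(k)}$. Since $U$ is Lipschitz in the centers (area of a union of disks is Lipschitz), I would conclude $|U^{(t)}-U^*|\le C'\rho^t$. Theorem~\ref{thm:10} shows that the radius sequence converges monotonically to $r^*$ in finitely many outer steps. Continuity of $U$ in $(\boldsymbol{X},r)$ then transfers the limit, so the limit configuration is a local minimizer of $\Phi$, hence a local maximizer of $U$ at $r^*$.

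\textbf{Item 3 and the iteration bound.} Theorem~\ref{thm:boundary_convergence} gives constraint violations of order $O(1/\min(\rho_b,\rho_c))$, plus the stopping tolerance $\epsilon$. Together this is exactly ``up to a tolerance determined by the stopping criteria.'' For the count, I would multiply the bounds already derived: $K_{\text{outer}}=O(\log(r^*/\epsilon_{\text{inflate}}))$ from Theorem~\ref{thm:10}, $K_{\text{inner}}=O(\log(1/(1-U_{\text{th}})))$ from the proof of Lemma~\ref{lem:5}, and $K_{\text{boundary}}=O(\log(1/\epsilon))$ from Lemma~\ref{lem:boundary_complexity}. The total is logarithmic in each precision parameter, as claimed.

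\textbf{Main obstacle.} The main obstacle is item~1. CR is not literally a Lyapunov function of either dynamics: the flow decreases $\Phi$, not $-\mathrm{CR}$, and the two differ by multiple-overlap corrections. A second gap is that Theorems~\ref{thm:9} and~\ref{thm:boundary_convergence} are only local. My first attempt would be to show that, after the transient, all overlaps are pairwise, so that no point lies in three disks. Under that condition, inclusion--exclusion gives $\mathrm{Area}(P\cap\bigcup C_i)=n\pi r^2-\Phi$ exactly, and monotonicity then follows directly. Failing that, I would state item~1 only up to an $O(\epsilon)$ slack.
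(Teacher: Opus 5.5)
Your decomposition is the same as the paper's. Its proof is a one-paragraph appeal to Theorems~\ref{thm:energy_dissipation}--\ref{thm:boundary_convergence}, and it justifies item~1 only by ``larger disks at the same centres cover a superset'' and ``the boundary adjustment does not decrease feasible coverage.'' So the paper does not supply the missing steps either, but the mechanisms you add to supply them fail.

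(i) $\Phi$ is not the total overlap area. For the force law \eqref{eq:8} to be conservative, each pair must contribute $\psi(d)$ with $-\psi'(d)=S_a(d)$, i.e.\ $\psi(d)=\int_d^{2r}S_a(s)\,ds$, and analogously with $S_b$ for the walls. The lens area itself satisfies $S_a'(d)=-\sqrt{4r^2-d^2}$. Hence $n\pi r^2-\Phi$ is not the covered area even when every overlap is pairwise, and your inclusion--exclusion rescue fails. Concretely, take $r=1$ and two separate pairs at $d_1=0.5$ and $d_2=1.9$. Increasing $d_1$ by $\delta$ while decreasing $d_2$ by $10\delta$ lowers $\Phi$ by about $1.7\delta$ but raises the total overlap by about $4.3\delta$.

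(ii) Theorem~\ref{thm:energy_dissipation} makes $E=T+\Phi$ monotone, not $\Phi$. A circle coasting into a neighbour raises $\Phi$ while $T$ falls, so waiting for the kinetic energy to decay gives no monotonicity of $\Phi$. The theorem is also a continuous-time statement and does not transfer verbatim to \eqref{eq:12}--\eqref{eq:13}.

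(iii) Algorithm~\ref{alg:3} is not gradient descent on $\mathcal{L}$. The step \eqref{eq:31} contains neither $\nabla J_{\text{cover}}$ nor $\kappa\nabla J_{\text{boundary}}$. Its tangential term $\gamma\boldsymbol{P}_\parallel(\boldsymbol{b}_i^*-\boldsymbol{c}_i)$ is a projected attraction toward a discontinuously selected point, not the gradient of any term of $\mathcal{L}$. So the descent lemma is unavailable and nothing in the update pushes CR up. Even a decrease of $\mathcal{L}$ would not be a decrease of $-\mathrm{CR}$, because of the $\kappa J_{\text{boundary}}$ term. Your fallback ``$O(\epsilon)$ slack'' has the same defect: the change of CR during a relaxation phase is not controlled by any stopping tolerance.

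Item~2 contains a non sequitur: a nondegenerate local minimiser of $\Phi$ is not, in general, a local maximiser of $U$. Stationarity of $\Phi$ balances the vectors $S_a(d_{ij})(\boldsymbol{c}_i-\boldsymbol{c}_j)/\|\boldsymbol{c}_i-\boldsymbol{c}_j\|$ against wall terms. Stationarity of the covered area balances outward normals integrated over each circle's exposed arcs inside $P$. These conditions differ whenever overlaps are forced. Your limit argument also stops at the output of Algorithm~\ref{alg:2}, whereas the final $U^{(t)}$ is produced after Algorithm~\ref{alg:3} has moved the centres. And ``at $r^*$'' is inherited from Theorem~\ref{thm:10}, although Algorithm~\ref{alg:2} never compares $r$ with $r_{\text{target}}$; acceptance depends only on $U_{\text{th}}$.

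For item~3, routing through Theorem~\ref{thm:boundary_convergence} means the constraints are \eqref{eq:20}, which say precisely that the disks lie in $P$ and are pairwise disjoint. That set is empty once $n\pi r^2>\mathrm{Area}(P)$, which is the covering regime the algorithm targets. There, MFCQ at a feasible limit is vacuous and the violations cannot be $O(1/\min(\rho_b,\rho_c))$ as the penalties grow. In that regime item~3 can only refer to the weaker original constraint $C_i\cap P\neq\emptyset$.

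Your Lipschitz transfer from $\boldsymbol{X}^{(k)}$ to $U^{(k)}$ and your iteration count are fine relative to the paper's lemmas. They do, however, inherit those lemmas' local hypotheses: starting in the basin, a positive-definite Hessian, and linear convergence on a nonconvex $\mathcal{L}$. What can honestly be claimed is item~1 only for a radius increase at frozen centres, and items~2--3 only as local, conditional statements in the regime where a packing exists.
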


\begin{proof}
Theorem \ref{thm:overall_convergence} follows directly from Theorems~\ref{thm:energy_dissipation}–\ref{thm:boundary_convergence}. The virtual-force phase ensures that for each fixed radius, the system approaches an equilibrium state with minimal overlap. The radius expansion monotonically increases the radii toward the target while maintaining feasibility. The boundary encircling then refines the positions to satisfy all constraints exactly. The monotonicity of coverage follows from the fact that increasing radii (without causing overflow) expands the covered area, and the subsequent boundary adjustment does not decrease the feasible coverage. 
\end{proof}

\begin{rmk}
The above theorems guarantee local convergence under mild regularity conditions. Global convergence to the global optimum is not guaranteed due to the non-convex nature of the problem; however, the proposed initialization strategy (structure-preserving hexagonal lattice) provides a high-quality starting point that significantly improves the likelihood of reaching a near-global optimum, which greatly reduces computational complexity.
\end{rmk}

 \section{Experiments and Discussion}
To evaluate the performance of IQPD algorithm in optimal deployment, we compare it with eight state-of-the-art metaheuristic algorithms from 2021 to 2026: ICGWO \cite{Shaikh2025}, ICS-MS \cite{Yang2025}, MIFSA \cite{biomimetics10110750}, VGSOK \cite{CHOWDHURY2021102660}, DAQGA \cite{LI2025111431}, FSIMI-GWO \cite{MOHAMMADI2026117912}, PSO-VD \cite{LIU2021103019} and EHPSO \cite{Tong2025}. To enhance the credibility of our experimental results, we conduct extensive experiments on both synthetic datasets and real-world scenarios. The experimental environment and default experimental parameters are summarized in Table~\ref{tab:3} and Table~\ref{tab:4}.

\begin{table}
	\centering
	\caption{Experimental environment}
	\label{tab:3}
	\begin{tabular*}{\tblwidth}{@{}LL@{}}
		\toprule
		\textbf{Component} & \textbf{Specification} \\
		\midrule
		Processor & Intel Core Ultra 9 275HX @ 2.70 GHz (24 cores, 24 threads) \\
		Memory & 32 GB DDR5 @ 6400 MT/s (2×16 GB SODIMM) \\
		GPU & NVIDIA GeForce RTX 5060 Laptop GPU (8 GB GDDR6 VRAM) \\
		Operating System & Windows 11 Pro (64-bit) \\
		Development Environment & Visual Studio Code 1.106.3 \\
		Programming Language & Python 3.12.5 \\
		\multirow{1}{*}{Key Python Libraries} & NumPy 2.3.5, SciPy 1.16.3, Shapely 2.1.2, \\
		& Matplotlib 3.10.7, Seaborn 0.13.2, Pandas 2.3.3, \\
		& Openpyxl 3.1.5, Scikit-learn 1.7.2 \\
		\bottomrule
	\end{tabular*}
\end{table}

\begin{table}
	\centering
	\caption{Default experimental parameters}
	\label{tab:4}
	\begin{tabular*}{\tblwidth}{@{}LLL@{}}
		\toprule
		Parameter & Symbol & Value  \\
		\midrule
		Circle (Node) number &  $n$  &   $\left\{ 5,10,15,20,25,30 \right\} \cup \left\{ 100,200,300 \right\} 	$\\
		Circle (Coverage) radius &  $r$  &   $2$ (In practical scenarios, this can represent values such as 20, 200, etc.)	\\
		Maximum iteration &  $T_{max}$  &   $500$\\
		Safety coefficient & $\alpha$ &  $0.92$\\
		Viscous coefficient & $\mu $ &  $0.6$\\
		Utilization threshold & $U_{\text{th}} $ &  $0.9$\\
		Acceleration threshold & $C_{\text{th}} $ &  $3$\\
		Distance controlling coefficient  &$\delta$ & $1.2$ \\
		Boundary offset coefficient  &$\vartheta$ & $0.5$\\
		Attractive coefficient &$\gamma$ & $0.8$\\
		Learning rate  &$\eta$ &  0.01\\
		Convergence threshold &$\epsilon$ & $10^{-4}$\\
		\bottomrule
	\end{tabular*}
\end{table}

Moreover, we employ five metrics: coverage rate (CR)  defined in Eq.\ref{eq:CR}, utilization rate (UR) as definition \ref{df:ur} , uniformity index (UI) in \cite{LIU2021103019}, Minimum Gap (MG) \cite{Liu2011}, and Computational Time (CT). UI is employed to quantify the spatial distribution uniformity of the circle centers. It is calculated based on the Voronoi diagram generated by the circle centers within the polygon. A higher UI value indicates a more uniform distribution, which helps in avoiding coverage blind holes and ensures that the coverage potential of each circle is fully used. The index is formally defined as follows:
\begin{equation}
	UI = 1 - \frac{\sigma_{\text{areas}}}{\mu_{\text{areas}}}.
	\label{Eq:ui}
\end{equation}
In the Eq.\ref{Eq:ui}, $\mu_{\text{areas}}$ and $\sigma_{\text{areas}}$ represent the mean and standard deviation of the areas of the Voronoi cells generated by the circle centers respectively. MG is a constraint metric for measuring the feasibility of a coverage solution, ensuring that any two circles do not overlap excessively or maintain a reasonable distance. The metric is calculated as follows:
\begin{equation}
	MG=\min_{i\ne j}\left( ||\boldsymbol{c}_i-\boldsymbol{c}_j||-2r \right).
\end{equation}

\begin{table}[htbp]
	\centering
	\caption{Statistical summary of the three domain types (synthetic datasets)}
	\label{tab:5}
	\begin{tabular*}{\tblwidth}{@{\extracolsep{\fill}}lcccl@{}}
		\toprule
		\textbf{Domain Type} & \textbf{Count} & \textbf{Vertex Range} & \textbf{Area (Mean ± Std)} & \textbf{Area Range} \\
		\midrule
		Random Convex Polygons & 300 & 3--12 & \(1629.75 \pm 1202.73\) & \([67.60,\ 5452.08]\) \\
		Regular Polygons & 175 & 3--12 & \(3019.91 \pm 1969.22\) & \([225.09,\ 7024.34]\) \\
		Rectangles & 25 & 4 & \(863.38 \pm 476.18\) & \([68.47,\ 1885.76]\) \\
		\bottomrule
	\end{tabular*}
\end{table}

\subsection{Synthetic datasets}

To strengthen generality of conclusions, extensive experiments were carried out across three types of regions to simulate real-world scenarios: rectangles, regular convex polygons, and arbitrary convex polygons. For fairness, we randomly generate a set of 500 convex polygons using Andrew's algorithm in \cite{ANDREW1979216}. The specific details of the dataset are illustrated in Fig.~\ref{fig:6}. The synthetic  dataset includes 300 randomly generated convex polygons, 175 regular convex polygons, and 25 rectangles with different aspect ratios. To account for the irregularity of real-world scenarios, all polygon vertices are represented as floating-point numbers. For example, one of triangles has vertices [(-20.3454,95.8096),(43.766,90.6525),(-9.3902,115.9362)]. The complete experimental datasets, results and Python implementation source code are openly accessible on GitHub: \url{https://github.com/YZP-BUAA/MY-Project}.

\begin{itemize}
	\item  \textbf{Random Convex Polygons (300 samples)}: Vertex counts vary from 3 to 12, providing a comprehensive test for evaluating algorithmic performance under irregular geometric conditions.
	\item  \textbf{Regular Polygons (175 samples)}: Include equilateral triangles, squares, pentagons, and polygons up to 12 sides. These serve as canonical test cases for assessing algorithm behavior under ideal symmetric conditions.
	\item   \textbf{Rectangles (25 samples)}: Represent the simplest convex quadrilateral case, providing baseline performance and enabling analysis of algorithm efficiency under both trivial and extreme geometric conditions.
\end{itemize}

\begin{figure}
	\centering
	\begin{subfigure}[b]{0.36\textwidth}
		\centering
		\includegraphics[width=\textwidth]{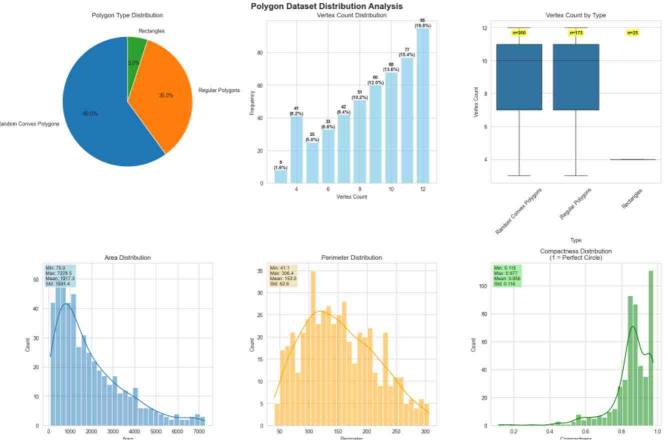}
		\caption{}
	\end{subfigure}
	\begin{subfigure}[b]{0.4\textwidth}
		\centering
		\includegraphics[width=\textwidth]{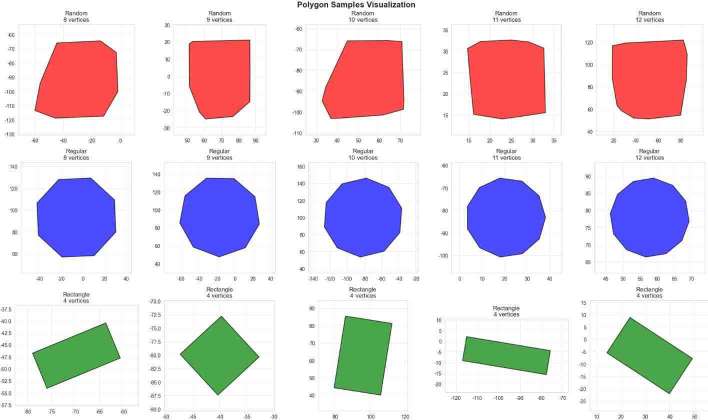}
		\caption{}
	\end{subfigure}
	\caption{(a) Visualization of the benchmark dataset: (1) Proportion of each polygon category; (2) Statistical distribution of polygon properties (area, perimeter, vertex count); (3) Spatial distribution of polygon centroids; (b) Visualization of polygon dataset samples:
		\textbf{Row 1:} Random convex polygons with 8 to 12 vertices (left to right); 
		\textbf{Row 2:} Regular polygons with 8 to 12 vertices (left to right);
		\textbf{Row 3:} Representative rectangle samples including slender rectangles. }
	\label{fig:6}
\end{figure}

According to Table~\ref{tab:5}, a fixed circle radius of 2 is suitable for comparative experiments. This value represents a moderate scale. It allows 100\% coverage to be achieved on some polygons, while on others, full coverage is infeasible and only approximate optimal solutions can be sought. Thus, this setting enables the simulation of both sufficient-resource and constrained-resource coverage scenarios. The number of circles takes values from the set \( N= \{5, 10, 15, 20, 25, 30\}\). Each of the five algorithms was tested on 500 convex polygons per circle count, yielding a total of \( 5 \times 6 \times 500 = 15000 \) experimental trials. The results of the comparative experiments are shown in Figs.~\ref{Fig:7}--\ref{Fig:10}.

Figs.~\ref{Fig:7}--\ref{Fig:8} illustrate the variations of Coverage Rate (CR), Utilization Rate (UR), and Computational Time (CT) with respect to the number of circles. Statistical analysis of the results on the three types of domains reveals that the IQPD algorithm achieves the highest coverage rate among all metaheuristic algorithms, and its coverage rate increases almost linearly with the number of circles. Due to the presence of narrow regions in the rectangular domain, the ICS-MS algorithm exhibits noticeable fluctuations. Moreover, as the number of circles increases, the coverage advantage of IQPD becomes more pronounced. However, on regular polygons, the superiority of IQPD is not significant. As the number of circles grows, overlap among circles becomes inevitable. Consequently, all algorithms except ICGWO show a general downward trend in utilization rate. Nevertheless, IQPD still maintains the highest utilization rate. In terms of computational time, IQPD ranks second only to VGSOK. Furthermore, as the number of circles increases, the computational time of the other algorithms increases substantially, whereas that of IQPD increases only marginally.

\begin{figure}
	\centering
	\begin{subfigure}[b]{0.16\textwidth}
		\centering
		\includegraphics[width=\textwidth]{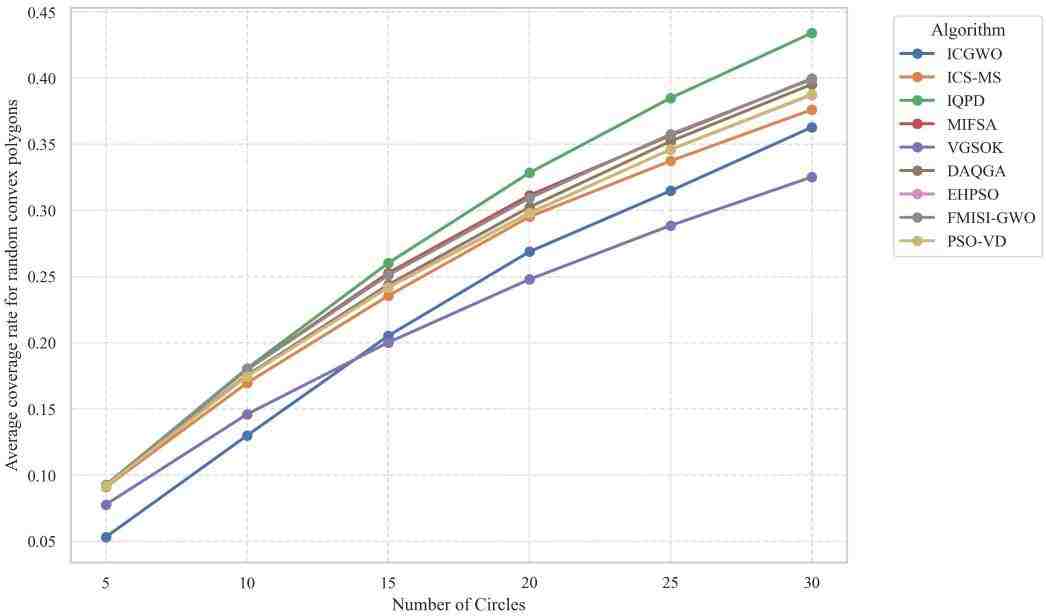}
		\caption{}
	\end{subfigure}
	\begin{subfigure}[b]{0.16\textwidth}
		\centering
		\includegraphics[width=\textwidth]{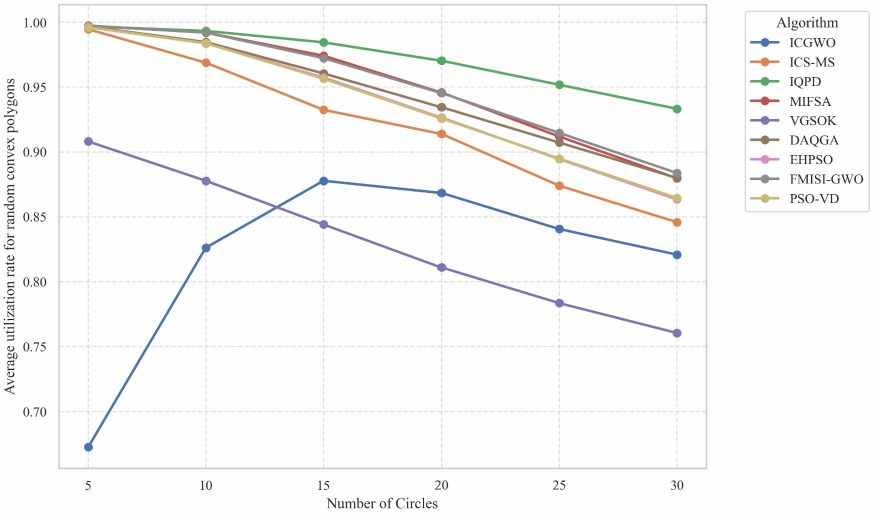}
		\caption{}
	\end{subfigure}
	\begin{subfigure}[b]{0.16\textwidth}
		\centering
		\includegraphics[width=\textwidth]{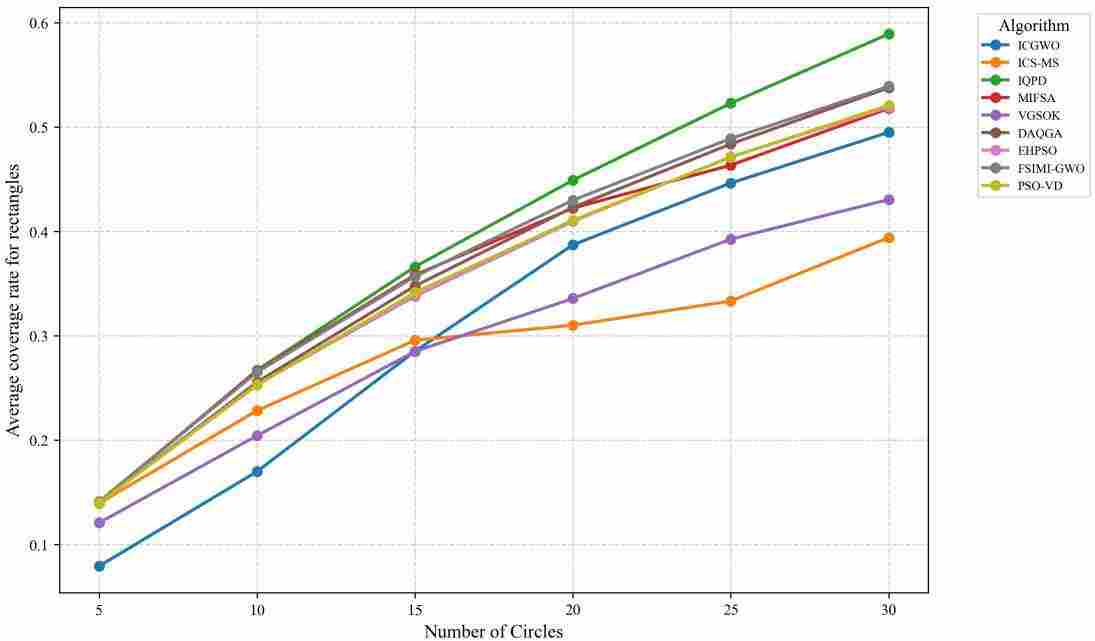}
		\caption{}
	\end{subfigure}
	\begin{subfigure}[b]{0.16\textwidth}
		\centering
		\includegraphics[width=\textwidth]{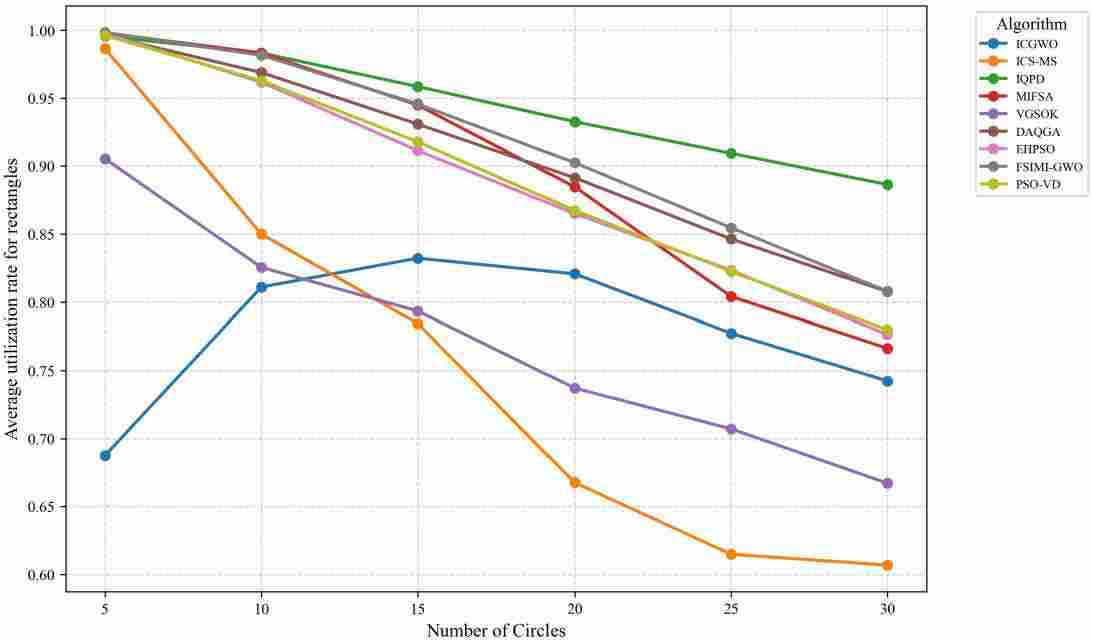}
		\caption{}
	\end{subfigure}       
	\begin{subfigure}[b]{0.16\textwidth}
		\centering
		\includegraphics[width=\textwidth]{ 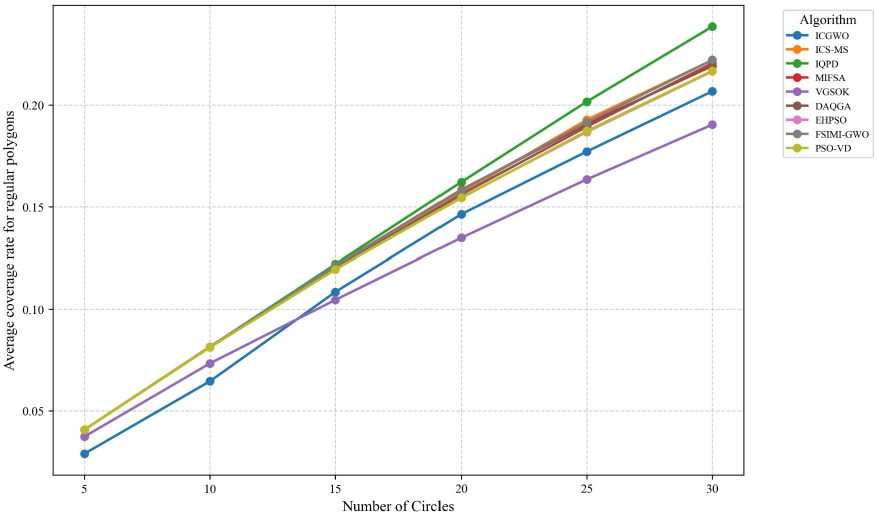}
		\caption{}
	\end{subfigure}
		\begin{subfigure}[b]{0.16\textwidth}
		\centering
		\includegraphics[width=\textwidth]{ 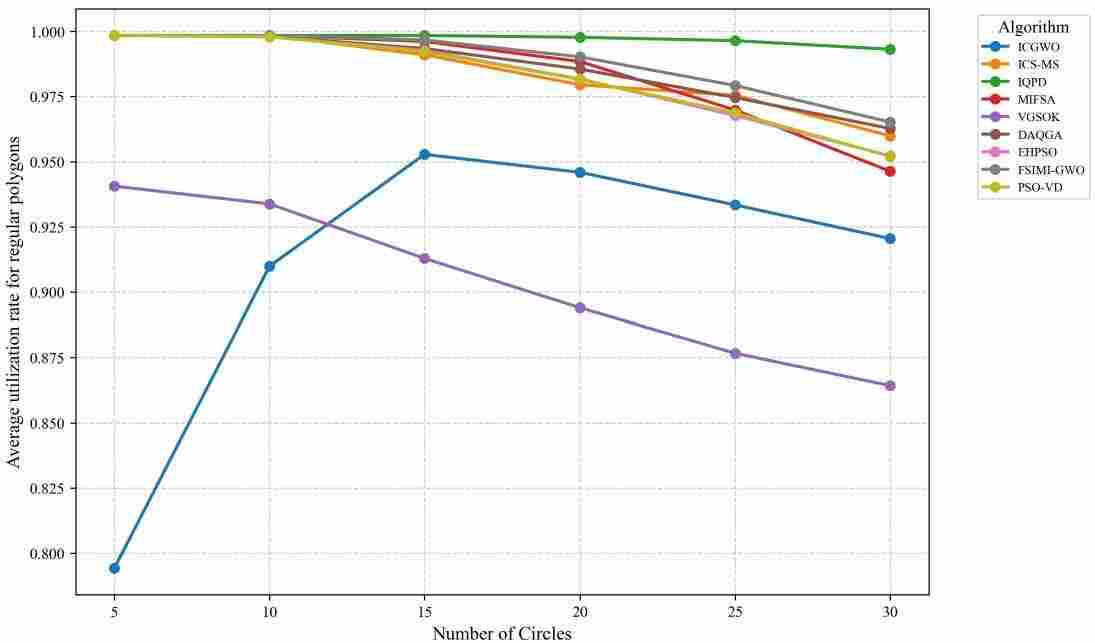}
		\caption{}
	\end{subfigure}
	\caption{Comparative experimental performance of nine algorithms on three types of domains as the number of circles varies, in terms of Coverage Rate (CR) and Utilization Rate (UR).}
	\label{Fig:7}
\end{figure}

\begin{figure}
	\centering
	\begin{subfigure}[b]{0.23\textwidth}
		\centering
		\includegraphics[width=\textwidth]{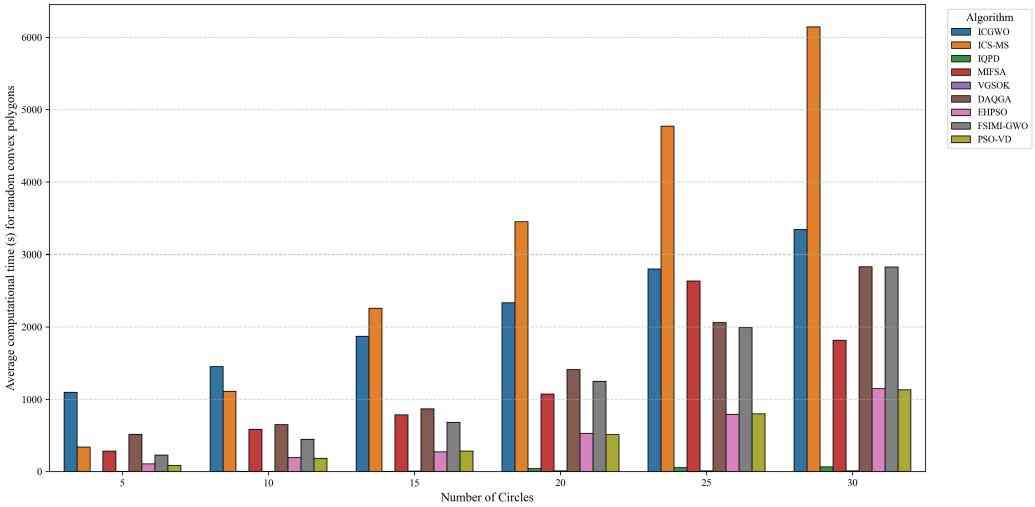}
		\caption{}
	\end{subfigure}
	\begin{subfigure}[b]{0.25\textwidth}
		\centering
		\includegraphics[width=\textwidth]{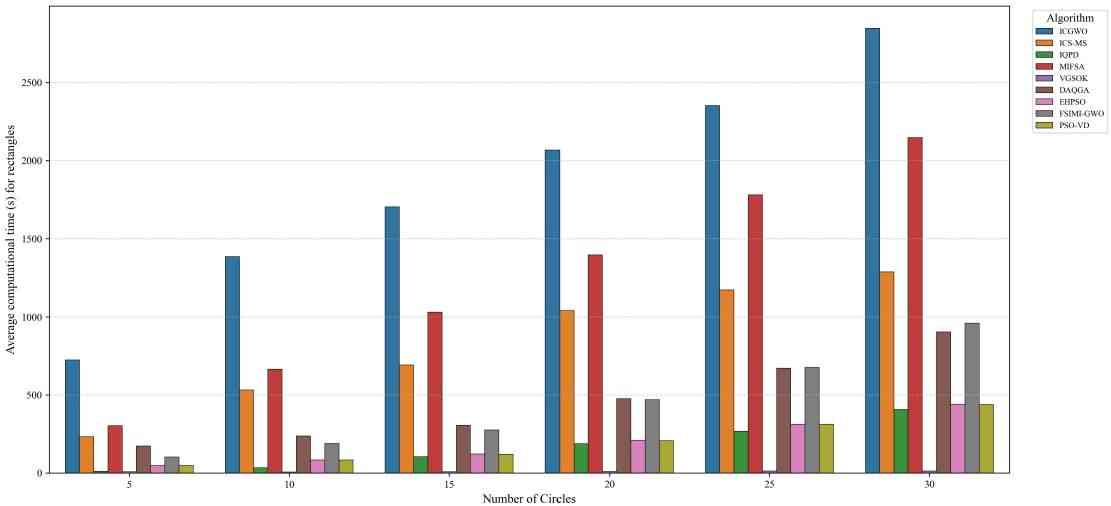}
		\caption{}
	\end{subfigure}
	\begin{subfigure}[b]{0.27\textwidth}
		\centering
		\includegraphics[width=\textwidth]{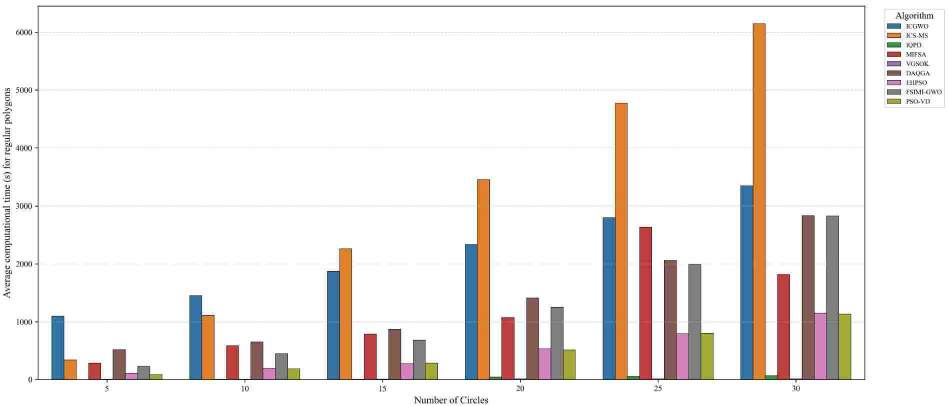}
		\caption{}
	\end{subfigure}
	\caption{Comparative experimental performance of nine algorithms on three types of domains as the number of circles varies, in terms of Computation Time (CT).}
	\label{Fig:8}
\end{figure}

\begin{figure}
	\centering
	\begin{subfigure}[b]{0.39\textwidth}
		\centering
		\includegraphics[width=\textwidth]{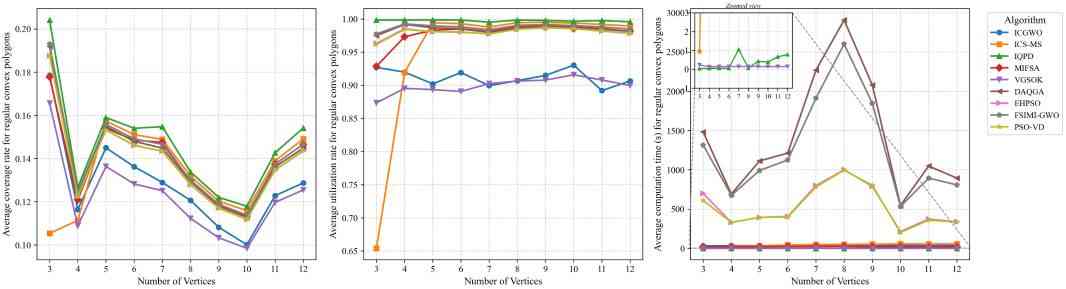}
		\caption{}
	\end{subfigure}
	\begin{subfigure}[b]{0.39\textwidth}
		\centering
		\includegraphics[width=\textwidth]{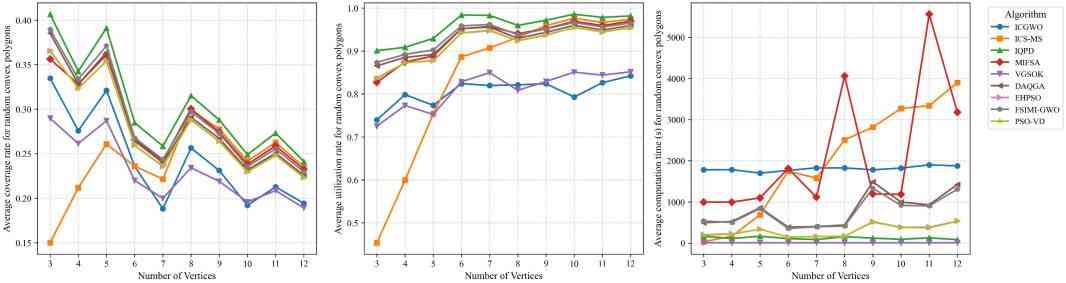}
		\caption{}
	\end{subfigure}
	\caption{Comparative experimental performance of nine algorithms on two types of domains as the number of polygonal vertices increases, in terms of three metrics (CR, UR, CT).}
	\label{Fig:9}
\end{figure}

Fig.~\ref{Fig:9} illustrates the overall variation of the three metrics (CR, UR, CT) with respect to the number of vertices on arbitrary convex polygons and regular polygons. It can be observed that the IQPD algorithm still achieves the best performance in terms of Coverage Rate (CR) and Utilization Rate (UR) across polygons with different vertex counts, while its computational time (CT) is second only to VGSOK. The coverage rate is significantly affected by the number of vertices, whereas the utilization rate  shows only a minor dependence. Moreover, the computation time of several comparative algorithms fluctuates significantly with the number of vertices, exhibiting instability and making them less adaptable to complex real-world scenarios.

\begin{table}
	\caption{Overall performance comparison of nine algorithms on random convex polygonal domains}
	\label{tab:6}
	\begin{tabular*}{\tblwidth}{@{\extracolsep{\fill}}lccccc@{}}
		\toprule
		\textbf{Algorithm} & \textbf{Coverage Rate} & \textbf{Utilization Rate} & \textbf{Uniformity Index} & \textbf{Min Gap} & \textbf{Computation Time (s)} \\
		\midrule
		ICGWO    & 0.2224 & 0.8177 & 0.7957 & -1.0384 & 1828.728 \\
		ICM-MS   & 0.2508 & 0.9216 & 0.5371 & 0.0008  & 2711.502 \\
		MIFSA    & 0.2654 & 0.9501 & 0.5816 & 0.1512  & 2621.94  \\
		VGSOK    & 0.2143 & 0.8308 & 0.5530  & -1.8341 & 8.19     \\
		DAQGA    & 0.2602 & 0.9439 & 0.7156 & -0.2321 & 944.9391 \\
		EHPSO    & 0.2567 & 0.9370  & 0.7047 & -0.2888 & 360.6901 \\
		FSIMI-GWO & 0.2649 & 0.9509 & 0.7368 & 0.0571  & 880.2540 \\
		PSO-VD   & 0.2566 & 0.9369 & 0.7064 & -0.2971 & 355.5338 \\
		\textbf{IQPD}     & \textbf{0.2801} & \textbf{0.9717} & 0.5871 & 0.1983  & \textbf{113.376}  \\
		\bottomrule
	\end{tabular*}
\end{table}

\begin{table}
	\caption{Overall performance comparison of nine algorithms on rectangular domains}
	\label{tab:7}
	\begin{tabular*}{\tblwidth}{@{\extracolsep{\fill}}lccccc@{}}
		\toprule
		\textbf{Algorithm} & \textbf{Coverage Rate} & \textbf{Utilization Rate} & \textbf{Uniformity Index} & \textbf{Min Gap} & \textbf{Computation Time (s)} \\
		\midrule
		ICGWO     & 0.3104 & 0.7785 & 0.4006 & -1.9302 & 1847.082 \\
		ICM-MS    & 0.2834 & 0.7517 & 0.4747 & -0.2652 & 827.274  \\
		MIFSA     & 0.3615 & 0.8968 & 0.5950 & -0.5993 & 1220.568 \\
		VGSOK     & 0.2948 & 0.7726 & 0.5569 & -2.3017 & 9.546    \\
		DAQGA      & 0.3646 & 0.9069 & 0.7470 & -0.8566 & 461.765  \\
		EHPSO     & 0.3553 & 0.8892 & 0.7370 & -1.0773 & 203.285  \\
		FSIMI-GWO & 0.3701 & 0.9151 & 0.7639 & -0.7434 & 445.625  \\
		PSO-VD    & 0.3561 & 0.8911 & 0.7262 & -1.0396 & 201.879  \\
	\textbf{IQPD}  & \textbf{0.3890} & \textbf{0.9443} & 0.6106 & -0.1172 & \textbf{169.098}  \\
		\bottomrule
	\end{tabular*}
\end{table}

\begin{table}
	\caption{Overall performance comparison of nine algorithms on regular domains}
	\label{tab:8}
	\begin{tabular*}{\tblwidth}{@{\extracolsep{\fill}}lccccc@{}}
		\toprule
		\textbf{Algorithm} & \textbf{Coverage Rate} & \textbf{Utilization Rate} & \textbf{Uniformity Index} & \textbf{Min Gap} & \textbf{Computation Time (s)} \\
		\midrule
		ICGWO     & 0.1220  & 0.9095  & 0.4278 & 6.1489  & 2151.27  \\
		ICM-MS    & 0.1357  & 0.9838  & 0.4921 & 0.2323  & 3013.356 \\
		MIFSA     & 0.1353  & 0.9829  & 0.5216 & 0.9346  & 1197.384 \\
		VGSOK     & 0.1173  & 0.9037  & 0.5569 & -0.9409 & 8.622    \\
		DAQGA     & 0.1346  & 0.9854  & 0.6794 & 0.7754  & 1391.188 \\
		EHPSO     & 0.1332  & 0.9817  & 0.6674 & 0.9200  & 507.7377 \\
		FSIMI-GWO & 0.1358  & 0.9880  & 0.6914 & 1.0387  & 1238.325 \\
		PSO-VD    & 0.1333  & 0.9818  & 0.6615 & 0.8171  & 500.2203 \\
		\textbf{IQPD}      & \textbf{0.1411}  & \textbf{0.9971}  & 0.5264 & 0.8767  & \textbf{28.83}    \\
		\bottomrule
	\end{tabular*}
\end{table}

Fig.~\ref{Fig:10} presents the overall distribution uniformity (Uniformity Index) and the minimum coverage gap (Minimum Gap) for all algorithms. While IQPD does not yet achieve the highest uniformity, it attains the second-highest Minimum Gap density, following only ICS-MS. This observation suggests that IQPD is capable of mitigating excessive coverage overlap and minimizing uncovered areas, consequently enhancing the overall coverage efficiency.

The overall performance of the algorithms across all metrics is presented in the Tables~\ref{tab:6},~\ref{tab:7} and~\ref{tab:8}. Based on multiple repeated experiments, the proposed IQPD algorithm achieves the best coverage rate and utilization rate, while significantly reducing computational time and minimizing coverage blind spots. However, its performance in terms of coverage uniformity still requires further improvement. Corresponding to real-world applications such as wireless network coverage or agricultural irrigation, our algorithm is capable of adapting to complex convex polygonal regions, expanding coverage range, improving node utilization efficiency, and reducing optimization time greatly.  To present the optimization results of different algorithms  better, we provide optimized coverage figures for several regions with complex shapes, along with the corresponding algorithmic iteration convergence curves. 

\begin{figure}
	\centering
	\begin{subfigure}[b]{0.4\textwidth}
		\centering
		\includegraphics[width=\textwidth]{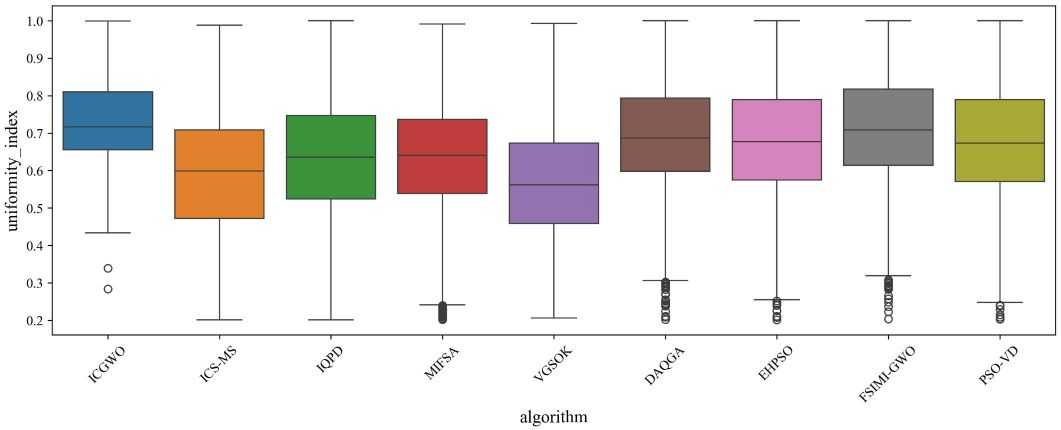}
		\caption{}
	\end{subfigure}
	\begin{subfigure}[b]{0.35\textwidth}
		\centering
		\includegraphics[width=\textwidth]{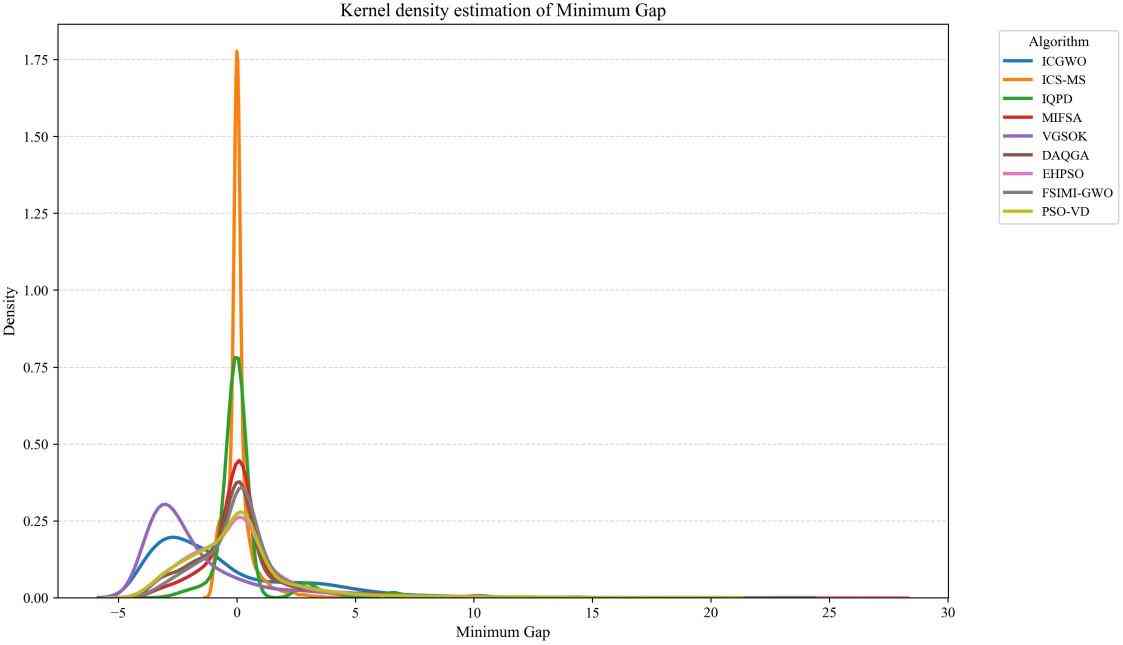}
		\caption{}
	\end{subfigure}
	\caption{Overall performance of nine algorithms in terms of Uniformity Index (UI) and Minimum Gap (MG).}
	\label{Fig:10}
\end{figure}

To clearly demonstrate the optimized deployment results, we present several convex polygons with extreme shapes (with vertex counts ranging from 4 to 8 as examples), as shown in Figs.~\ref{Fig:14}–\ref{Fig:18}. The convergence curves demonstrate that, even on these highly irregular convex polygonal regions, our algorithm consistently maintains the fastest convergence speed, second only to VGSOK, while simultaneously achieving the highest coverage rate. The remaining results are publicly accessible on GitHub: \url{https://github.com/YZP-BUAA/MY-Project}.

\begin{figure}
	\centering
	\begin{subfigure}[b]{0.18\textwidth}
		\centering
		\includegraphics[width=\textwidth]{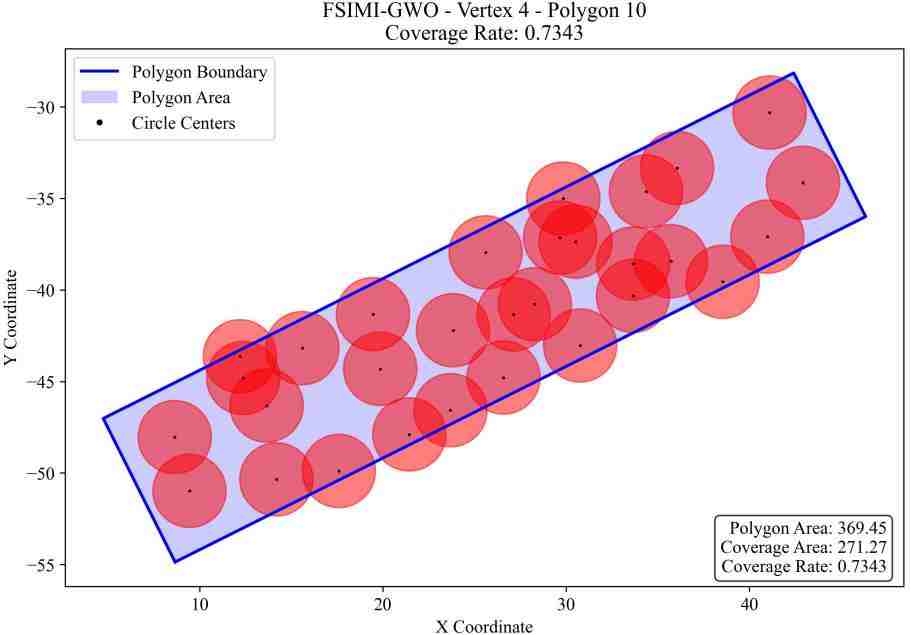}
	\end{subfigure}
	\begin{subfigure}[b]{0.18\textwidth}
		\centering
		\includegraphics[width=\textwidth]{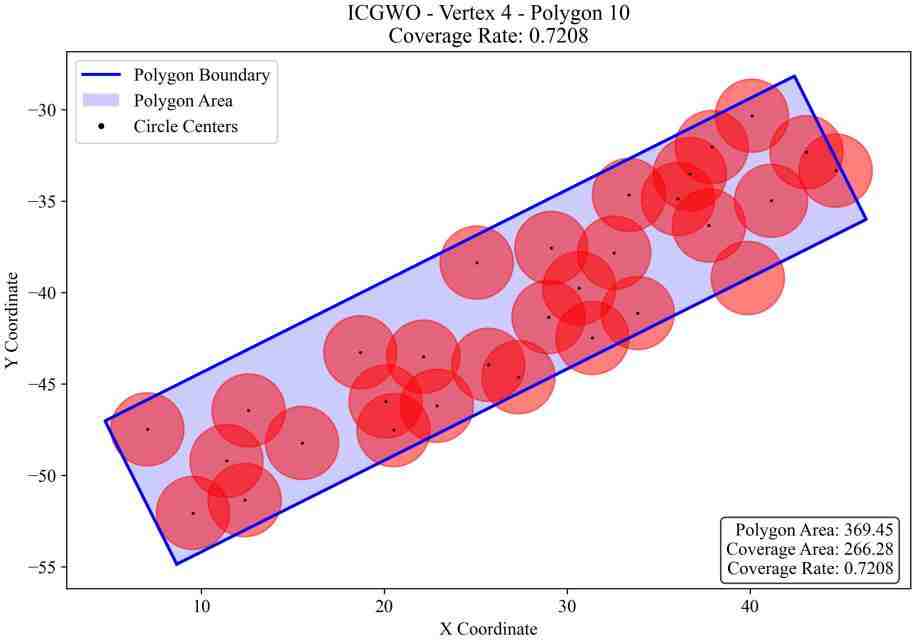}
	\end{subfigure}
		\begin{subfigure}[b]{0.18\textwidth}
		\centering
		\includegraphics[width=\textwidth]{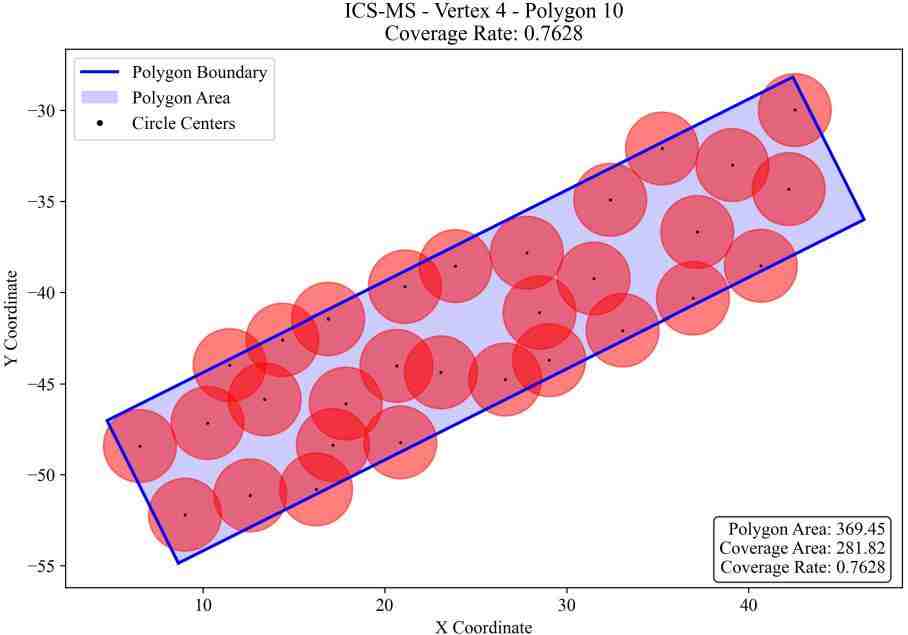}
	\end{subfigure}
		\begin{subfigure}[b]{0.18\textwidth}
		\centering
		\includegraphics[width=\textwidth]{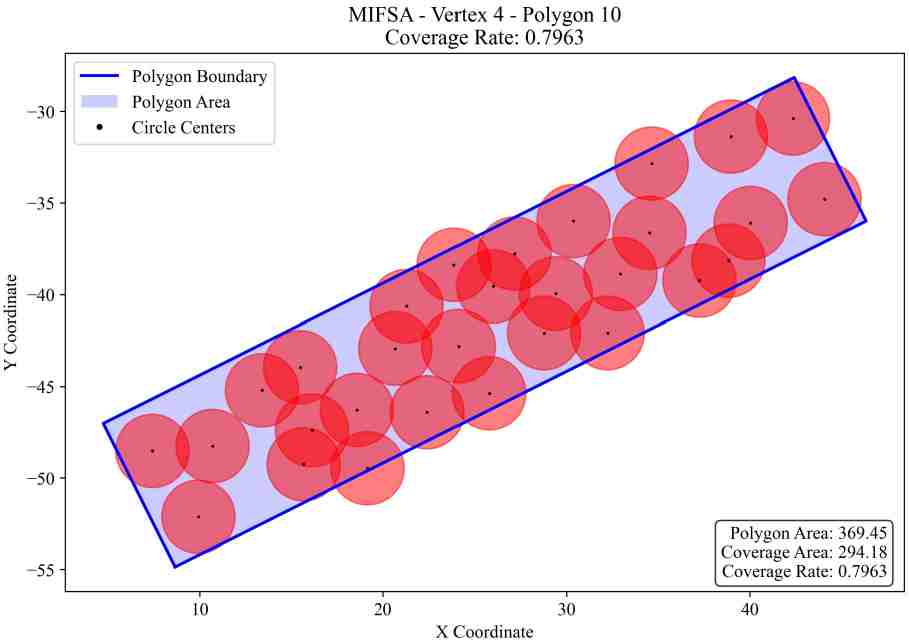}
	\end{subfigure}
		\begin{subfigure}[b]{0.18\textwidth}
		\centering
		\includegraphics[width=\textwidth]{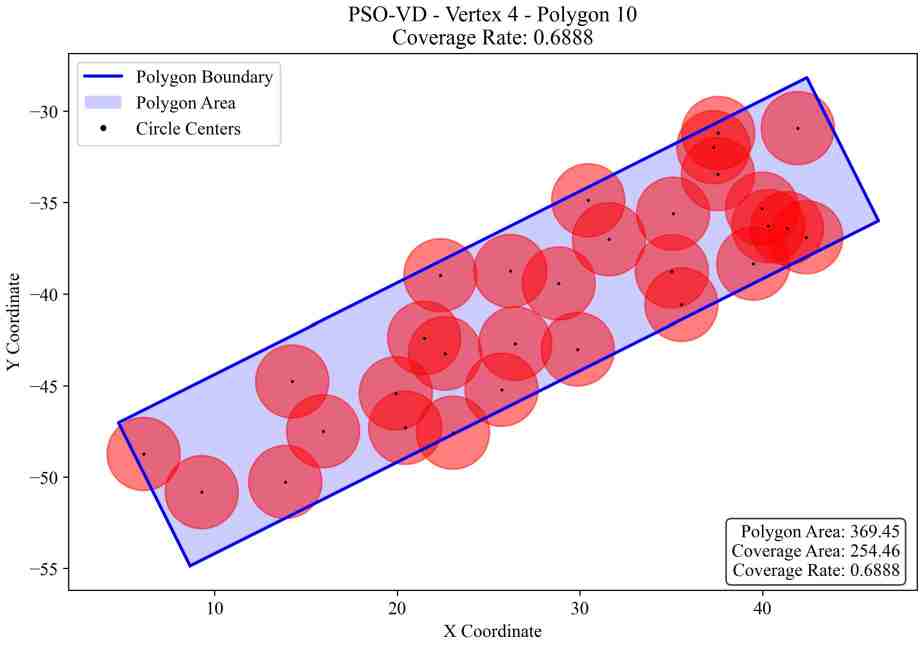}
	\end{subfigure}
		\begin{subfigure}[b]{0.18\textwidth}
		\centering
		\includegraphics[width=\textwidth]{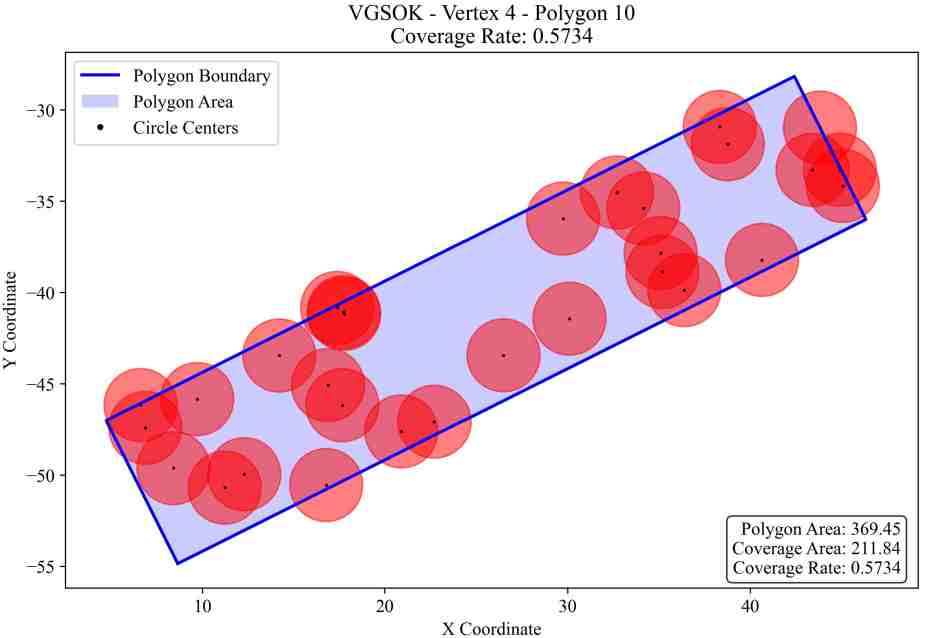}
	\end{subfigure}
		\begin{subfigure}[b]{0.18\textwidth}
		\centering
		\includegraphics[width=\textwidth]{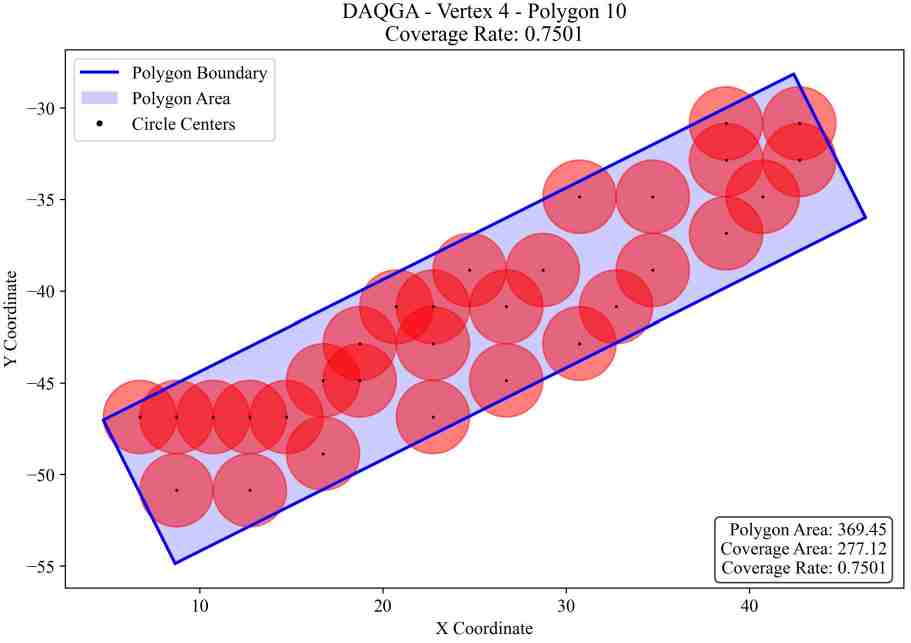}
	\end{subfigure}
		\begin{subfigure}[b]{0.18\textwidth}
		\centering
		\includegraphics[width=\textwidth]{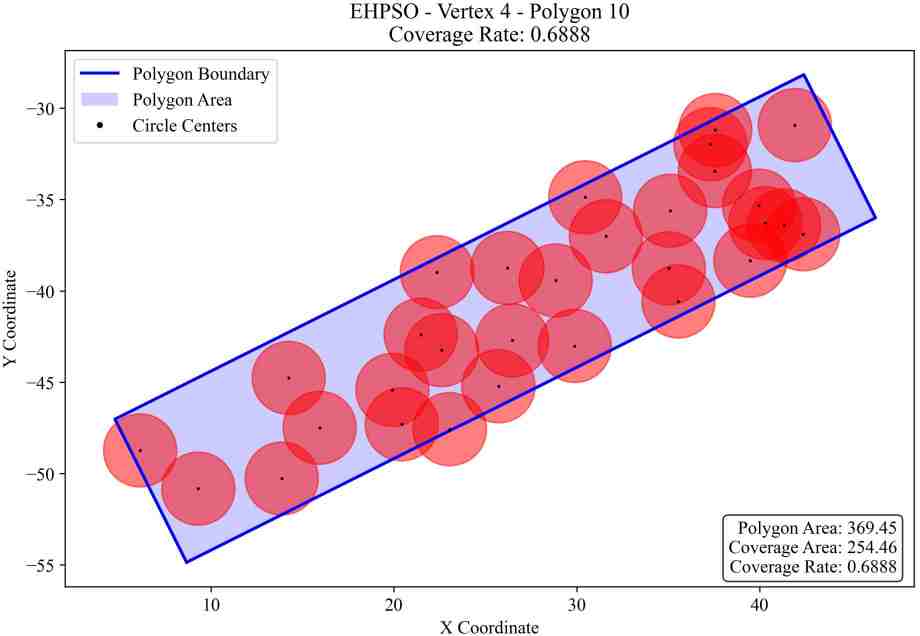}
	\end{subfigure}
		\begin{subfigure}[b]{0.18\textwidth}
		\centering
		\includegraphics[width=\textwidth]{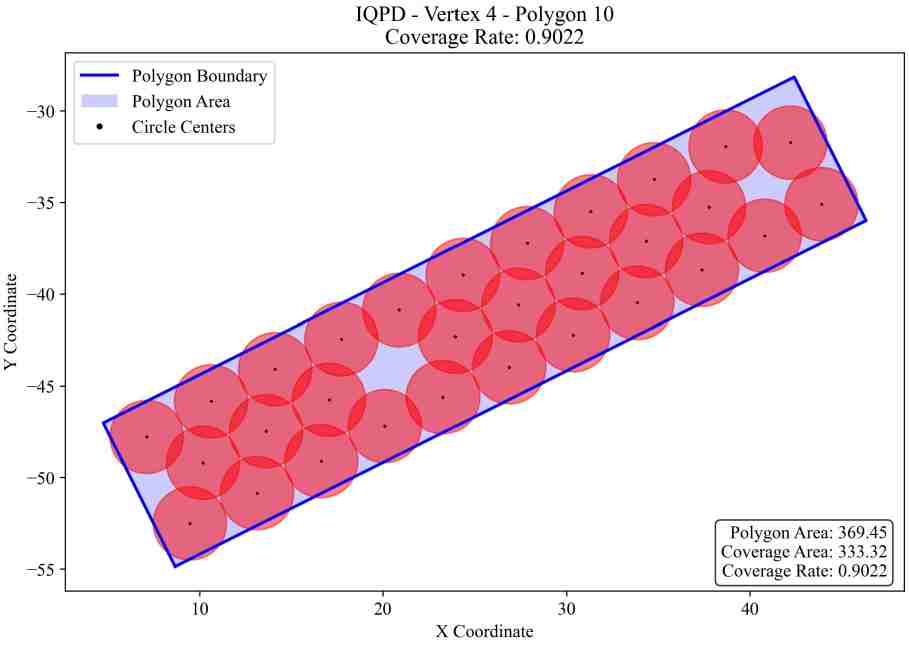}
	\end{subfigure}
		\begin{subfigure}[b]{0.18\textwidth}
		\centering
		\includegraphics[width=\textwidth]{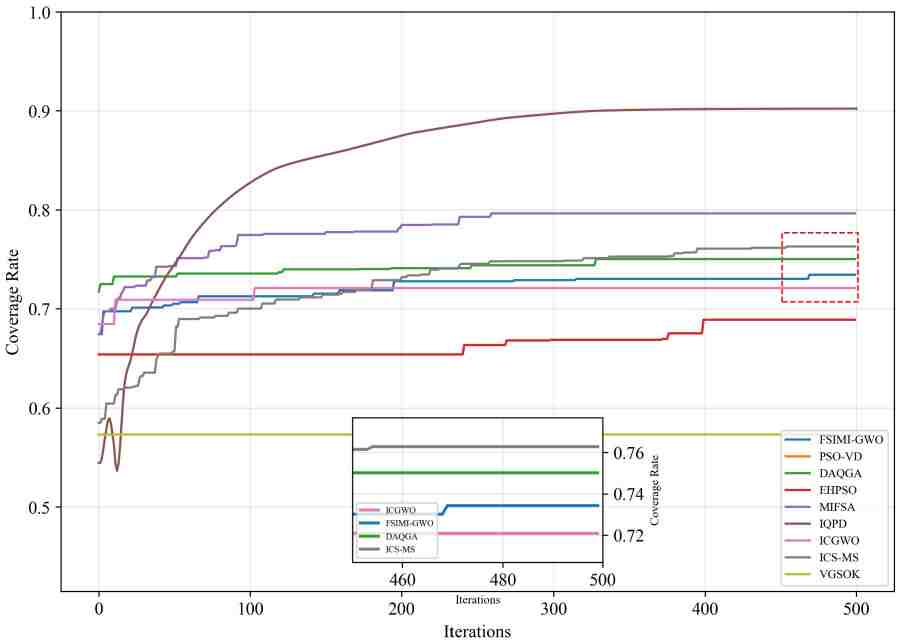}
	\end{subfigure}
	\caption{Optimized deployment layout and convergence curves for the case with 4 vertices and 30 circles.}
	\label{Fig:14}
\end{figure}

\begin{figure}
	\centering
	\begin{subfigure}[b]{0.13\textwidth}
		\centering
		\includegraphics[width=\textwidth]{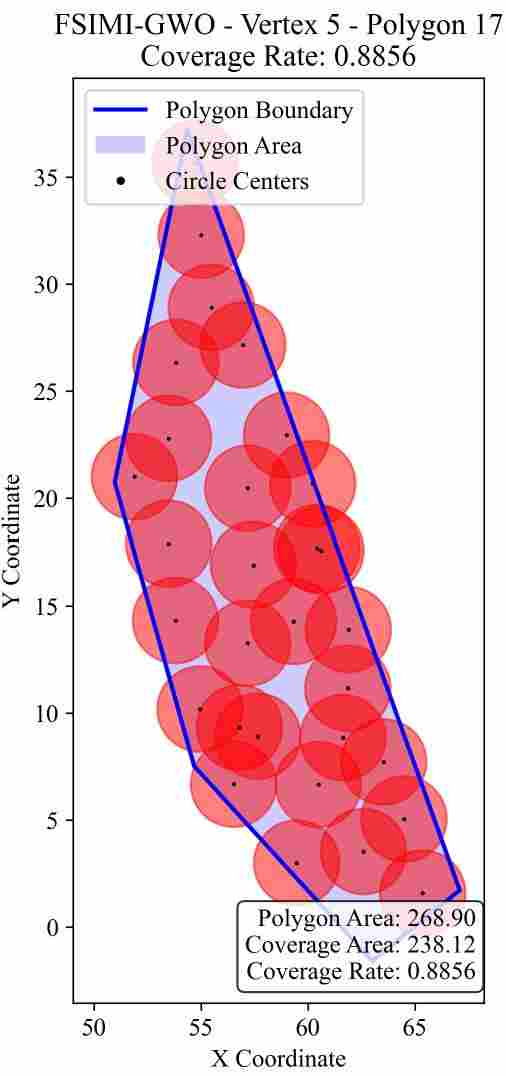}
	\end{subfigure}
	\begin{subfigure}[b]{0.125\textwidth}
		\centering
		\includegraphics[width=\textwidth]{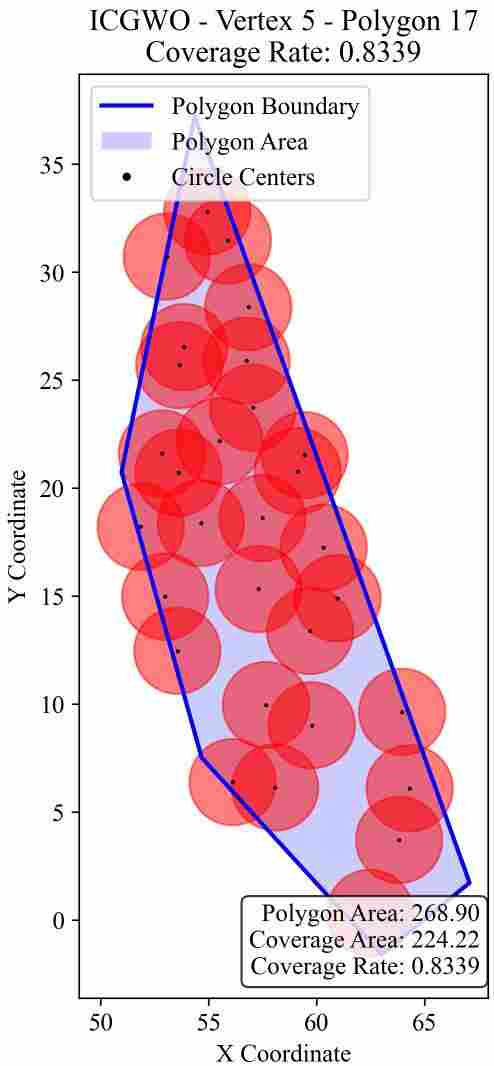}
	\end{subfigure}
	\begin{subfigure}[b]{0.125\textwidth}
		\centering
		\includegraphics[width=\textwidth]{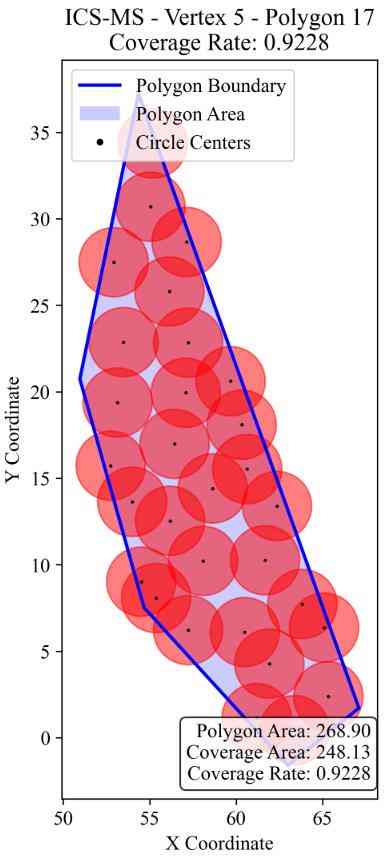}
	\end{subfigure}
	\begin{subfigure}[b]{0.13\textwidth}
		\centering
		\includegraphics[width=\textwidth]{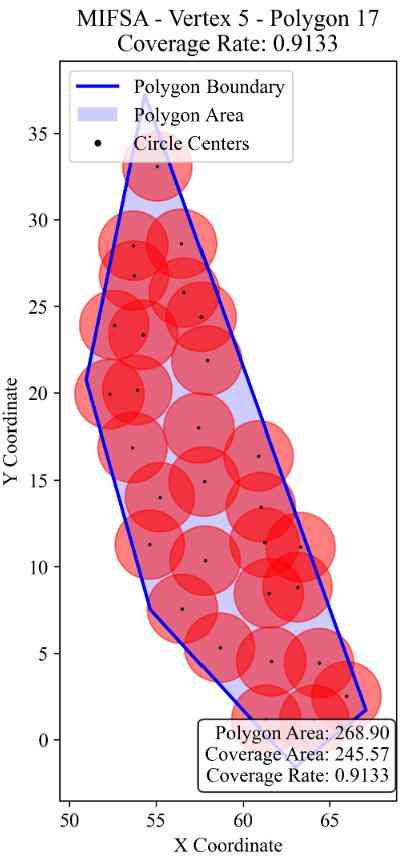}
	\end{subfigure}
	\begin{subfigure}[b]{0.13\textwidth}
		\centering
		\includegraphics[width=\textwidth]{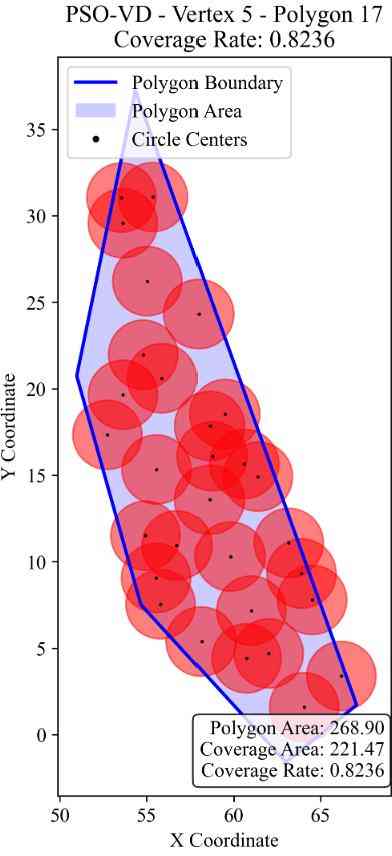}
	\end{subfigure}
	\begin{subfigure}[b]{0.13\textwidth}
		\centering
		\includegraphics[width=\textwidth]{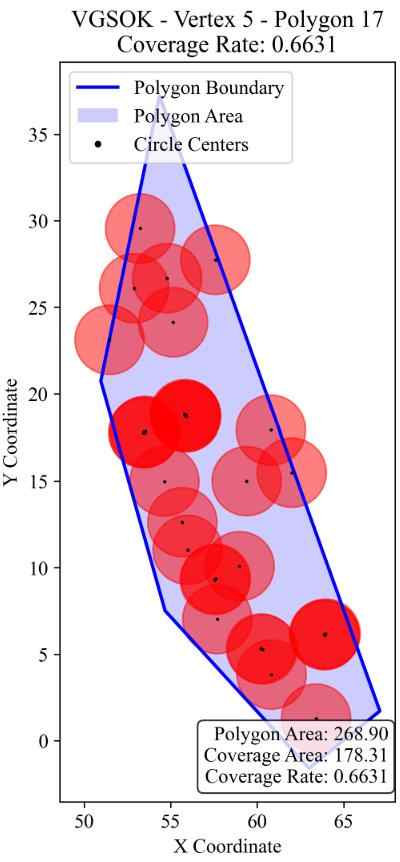}
	\end{subfigure}
	\begin{subfigure}[b]{0.125\textwidth}
		\centering
		\includegraphics[width=\textwidth]{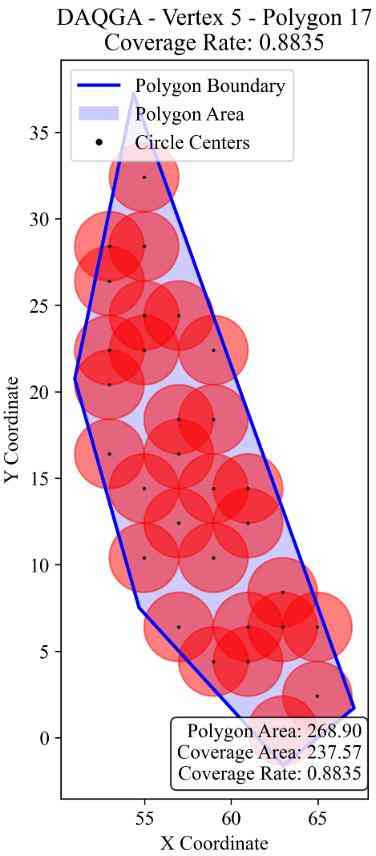}
	\end{subfigure}
	\begin{subfigure}[b]{0.125\textwidth}
		\centering
		\includegraphics[width=\textwidth]{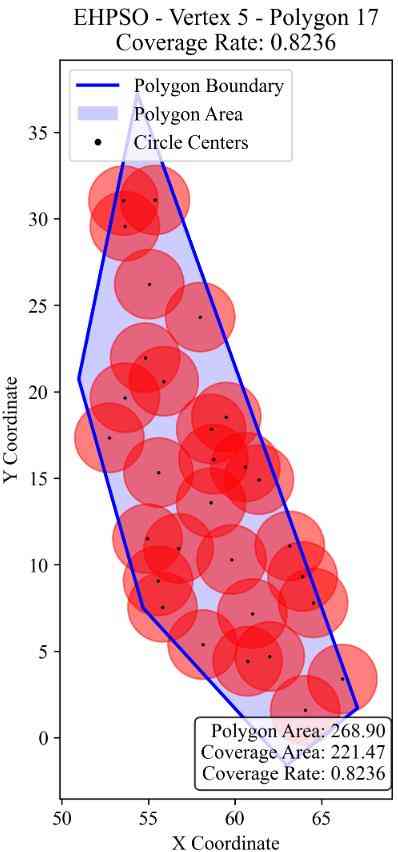}
	\end{subfigure}
	\begin{subfigure}[b]{0.12\textwidth}
		\centering
		\includegraphics[width=\textwidth]{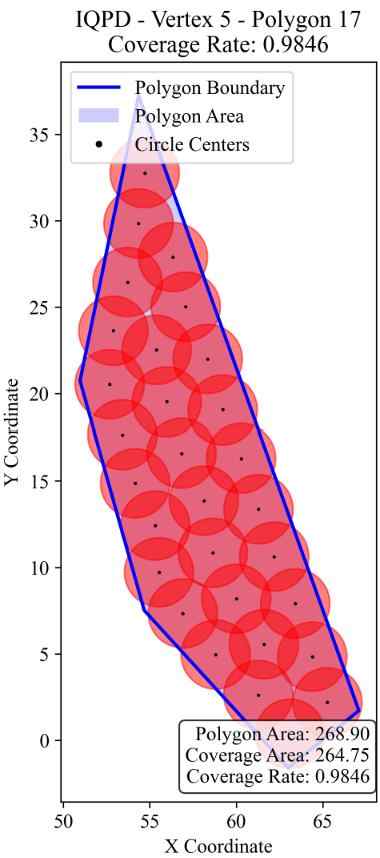}
	\end{subfigure}
	\begin{subfigure}[b]{0.25\textwidth}
		\centering
		\includegraphics[width=\textwidth]{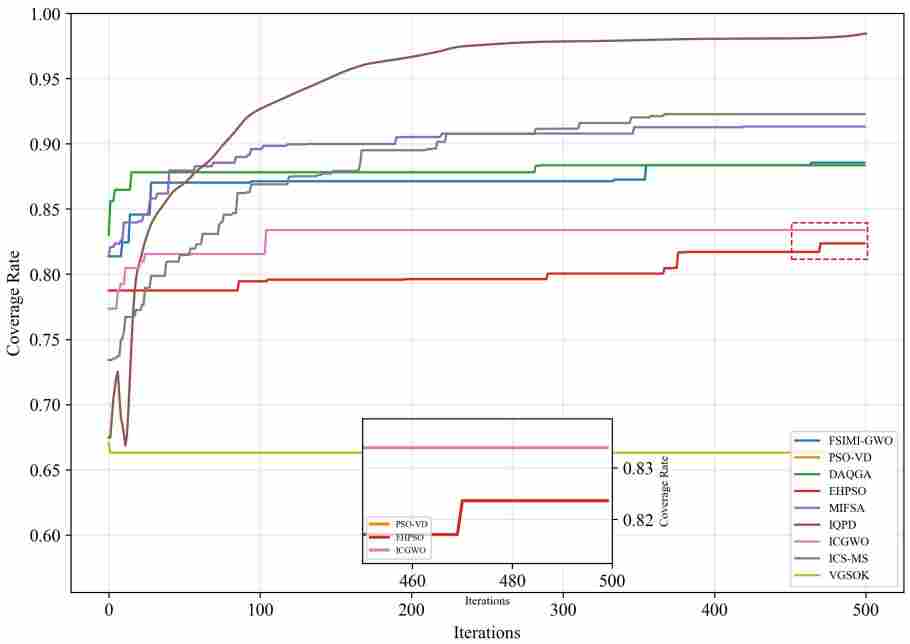}
	\end{subfigure}
	\caption{Optimized deployment layout and convergence curves for the case with 5 vertices and 30 circles.}
	\label{Fig:15}
\end{figure}

\begin{figure}
	\centering
	\begin{subfigure}[b]{0.15\textwidth}
		\centering
		\includegraphics[width=\textwidth]{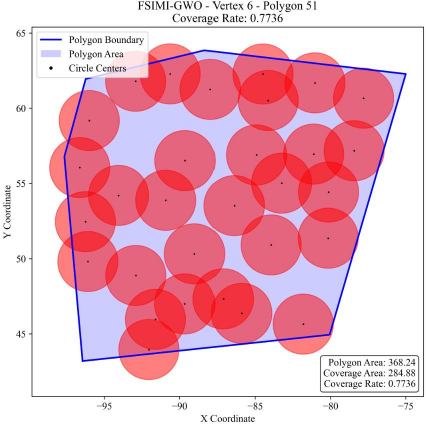}
	\end{subfigure}
	\begin{subfigure}[b]{0.15\textwidth}
		\centering
		\includegraphics[width=\textwidth]{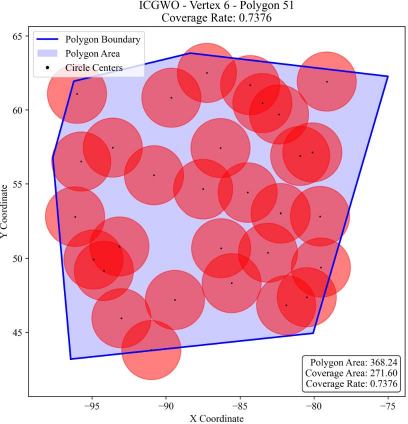}
	\end{subfigure}
	\begin{subfigure}[b]{0.15\textwidth}
		\centering
		\includegraphics[width=\textwidth]{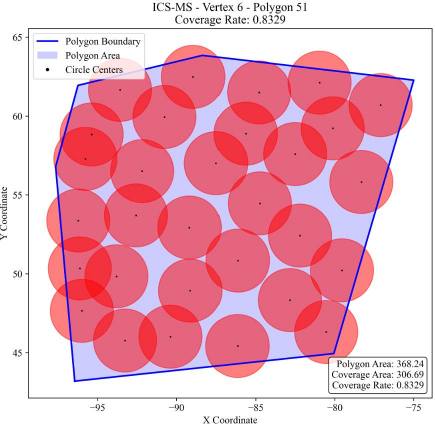}
	\end{subfigure}
	\begin{subfigure}[b]{0.15\textwidth}
		\centering
		\includegraphics[width=\textwidth]{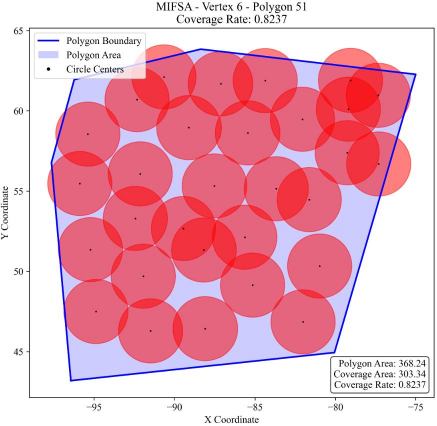}
	\end{subfigure}
	\begin{subfigure}[b]{0.15\textwidth}
		\centering
		\includegraphics[width=\textwidth]{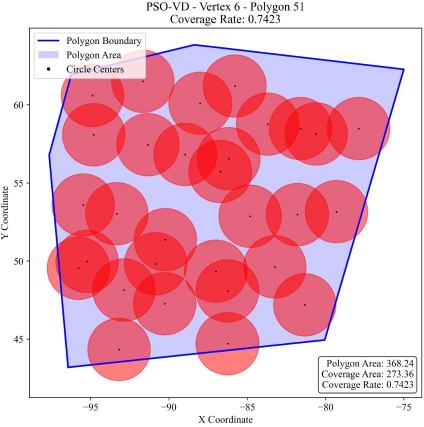}
	\end{subfigure}
	\begin{subfigure}[b]{0.15\textwidth}
		\centering
		\includegraphics[width=\textwidth]{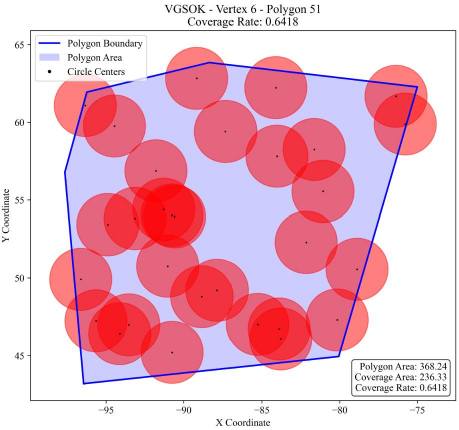}
	\end{subfigure}
	\begin{subfigure}[b]{0.15\textwidth}
		\centering
		\includegraphics[width=\textwidth]{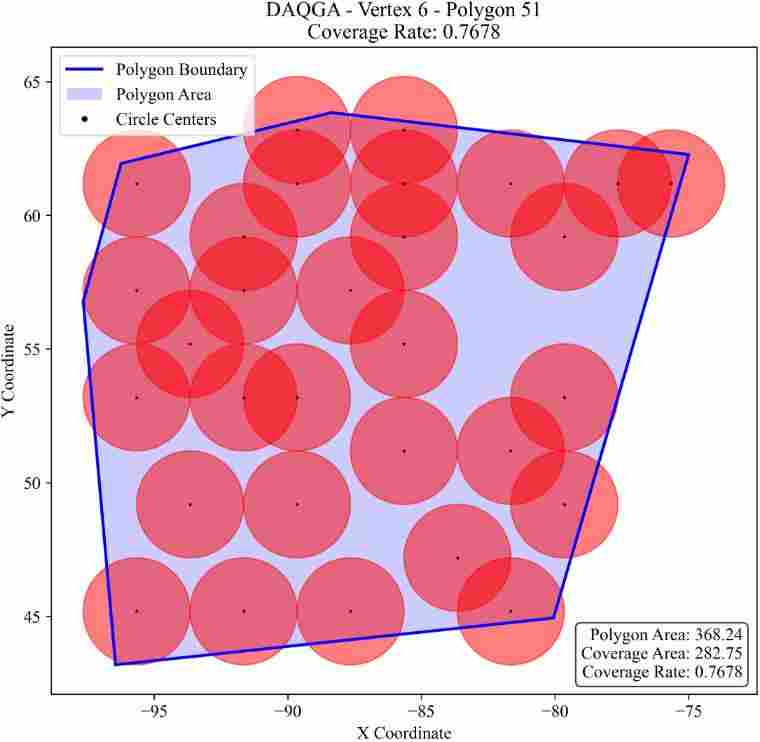}
	\end{subfigure}
	\begin{subfigure}[b]{0.15\textwidth}
		\centering
		\includegraphics[width=\textwidth]{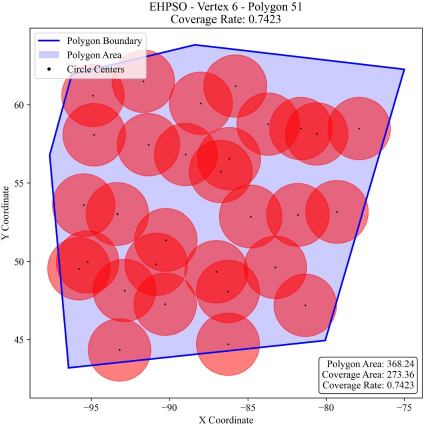}
	\end{subfigure}
	\begin{subfigure}[b]{0.15\textwidth}
		\centering
		\includegraphics[width=\textwidth]{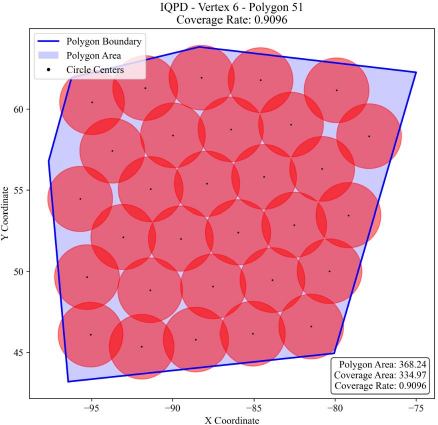}
	\end{subfigure}
	\begin{subfigure}[b]{0.2\textwidth}
		\centering
		\includegraphics[width=\textwidth]{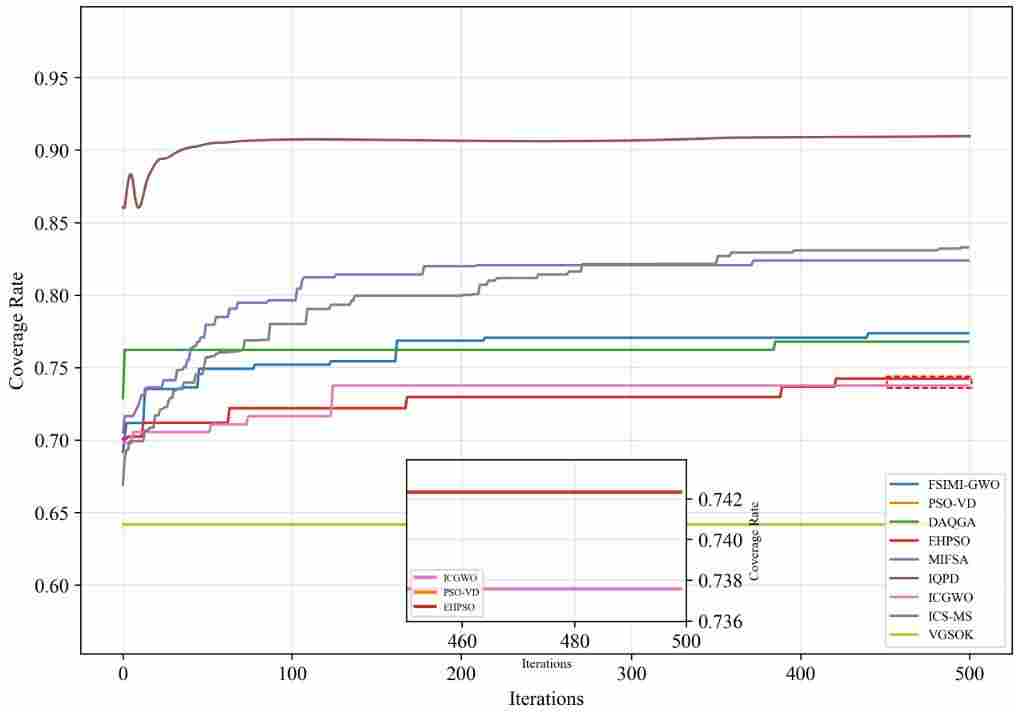}
	\end{subfigure}
	\caption{Optimized deployment layout and convergence curves for the case with 6 vertices and 30 circles.}
	\label{Fig:16}
\end{figure}

\begin{figure}
	\centering
	\begin{subfigure}[b]{0.15\textwidth}
		\centering
		\includegraphics[width=\textwidth]{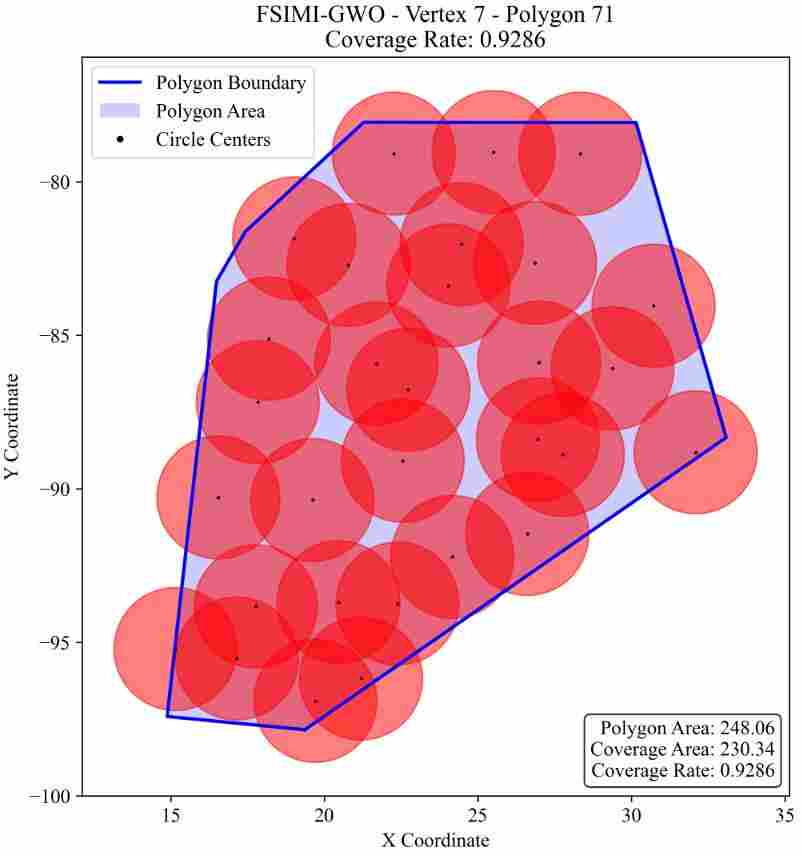}
	\end{subfigure}
	\begin{subfigure}[b]{0.15\textwidth}
		\centering
		\includegraphics[width=\textwidth]{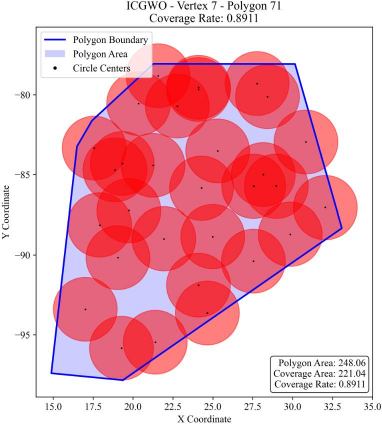}
	\end{subfigure}
	\begin{subfigure}[b]{0.15\textwidth}
		\centering
		\includegraphics[width=\textwidth]{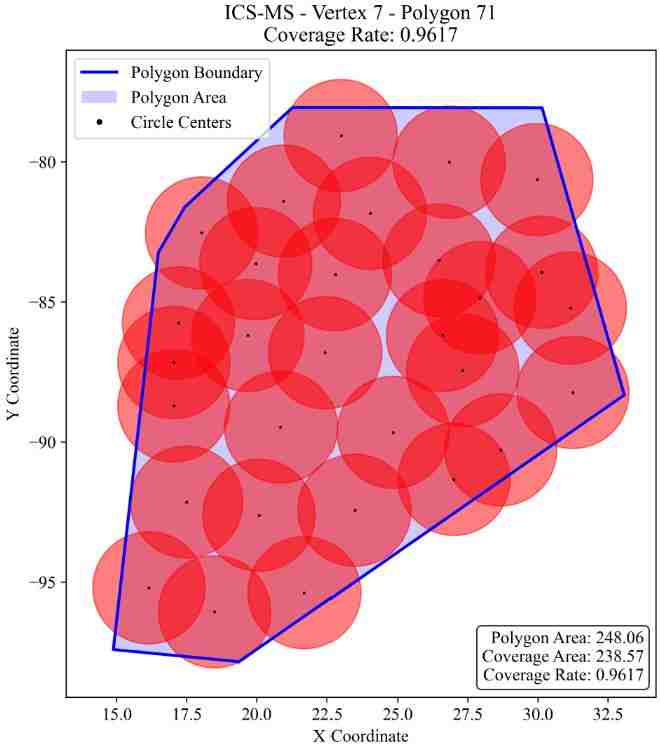}
	\end{subfigure}
	\begin{subfigure}[b]{0.16\textwidth}
		\centering
		\includegraphics[width=\textwidth]{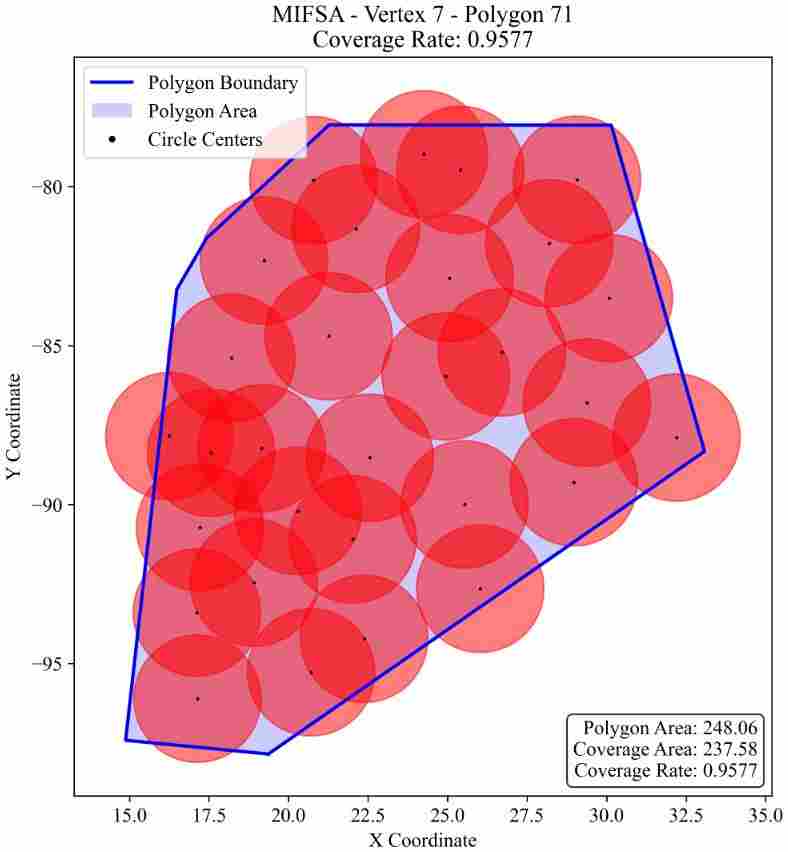}
	\end{subfigure}
	\begin{subfigure}[b]{0.15\textwidth}
		\centering
		\includegraphics[width=\textwidth]{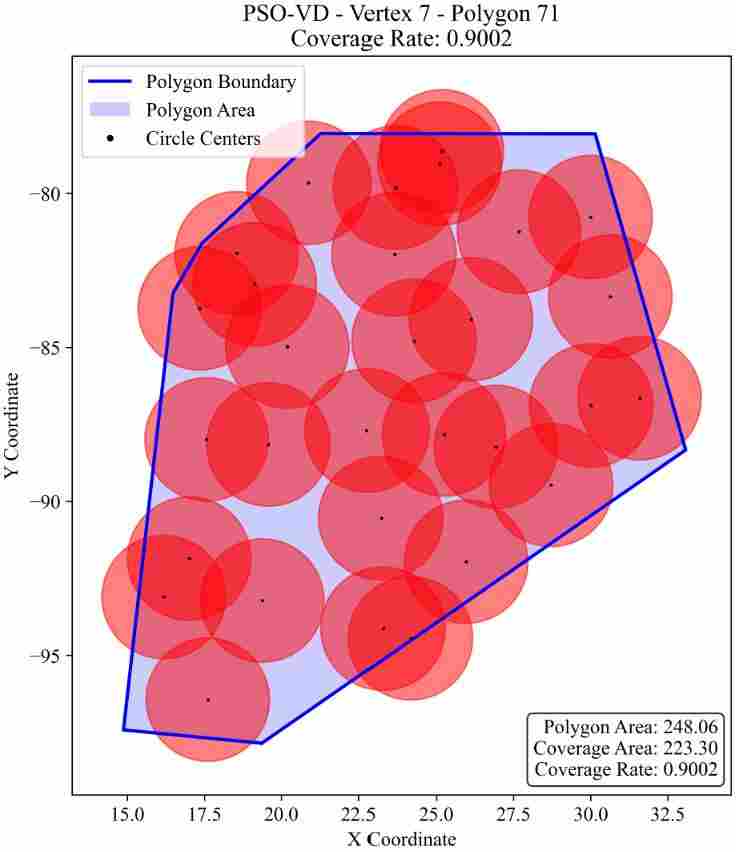}
	\end{subfigure}
	\begin{subfigure}[b]{0.15\textwidth}
		\centering
		\includegraphics[width=\textwidth]{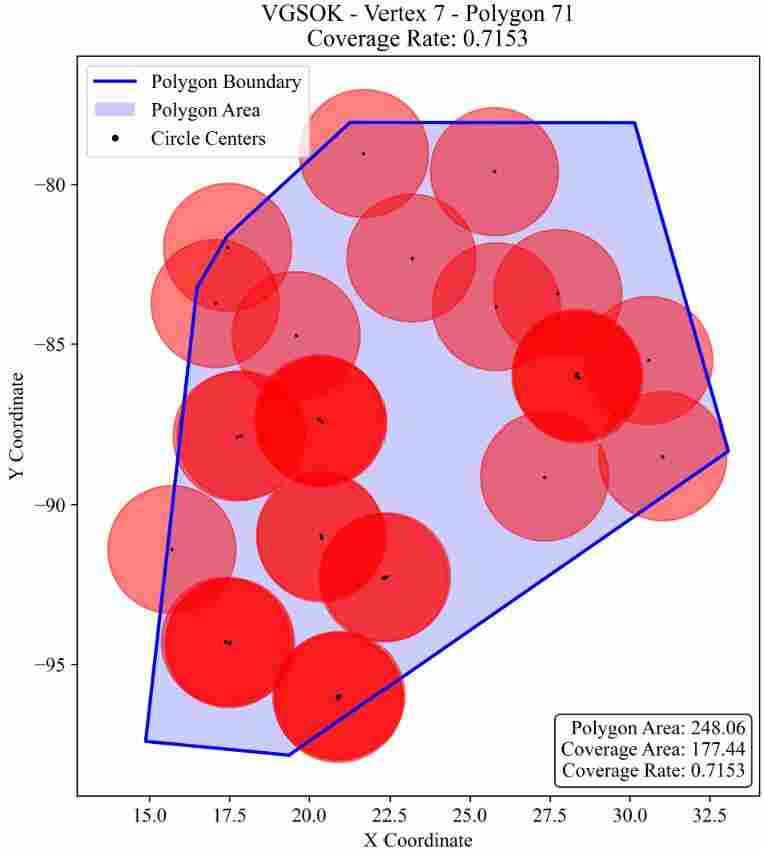}
	\end{subfigure}
	\begin{subfigure}[b]{0.15\textwidth}
		\centering
		\includegraphics[width=\textwidth]{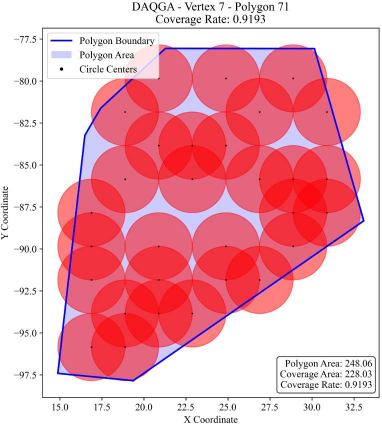}
	\end{subfigure}
	\begin{subfigure}[b]{0.15\textwidth}
		\centering
		\includegraphics[width=\textwidth]{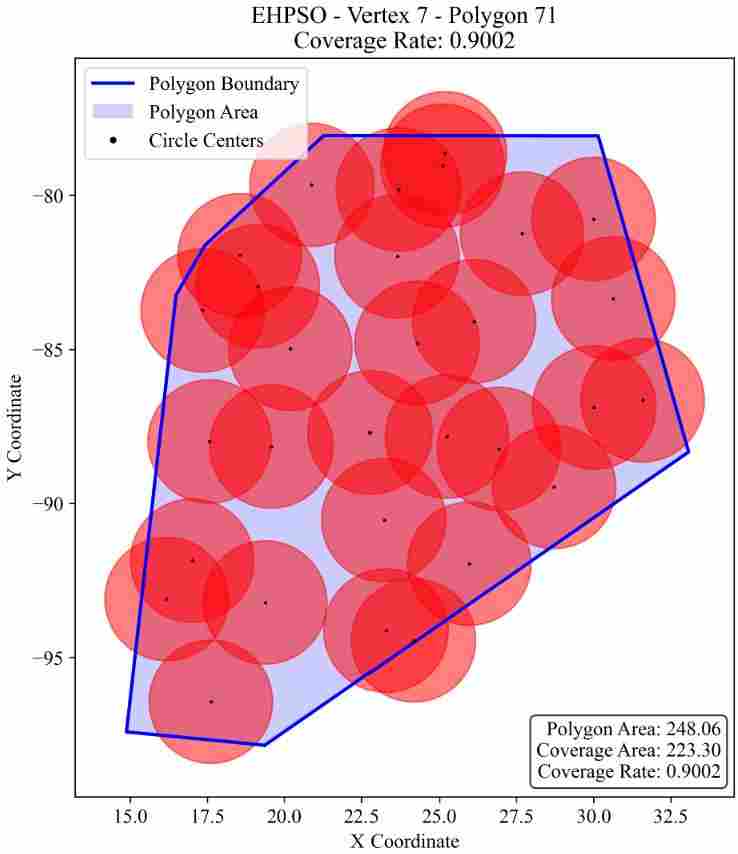}
	\end{subfigure}
	\begin{subfigure}[b]{0.15\textwidth}
		\centering
		\includegraphics[width=\textwidth]{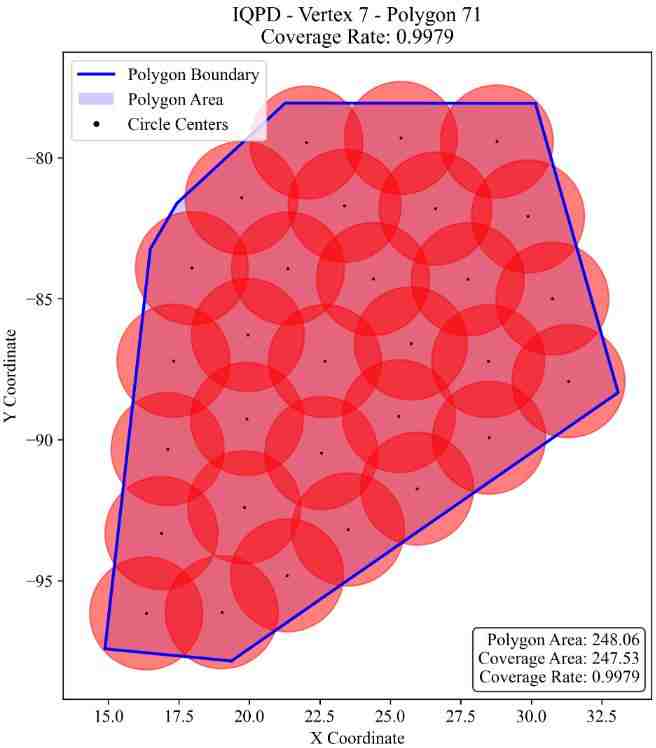}
	\end{subfigure}
	\begin{subfigure}[b]{0.2\textwidth}
		\centering
		\includegraphics[width=\textwidth]{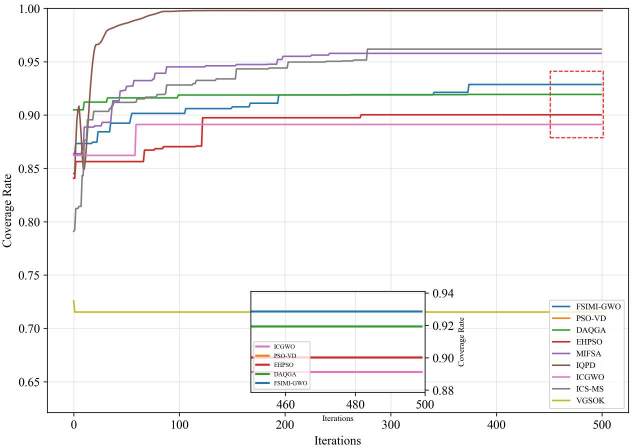}
	\end{subfigure}
	\caption{Optimized deployment layout and convergence curves for the case with 7 vertices and 30 circles.}
	\label{Fig:17}
\end{figure}

\begin{figure}
	\centering
	\begin{subfigure}[b]{0.15\textwidth}
		\centering
		\includegraphics[width=\textwidth]{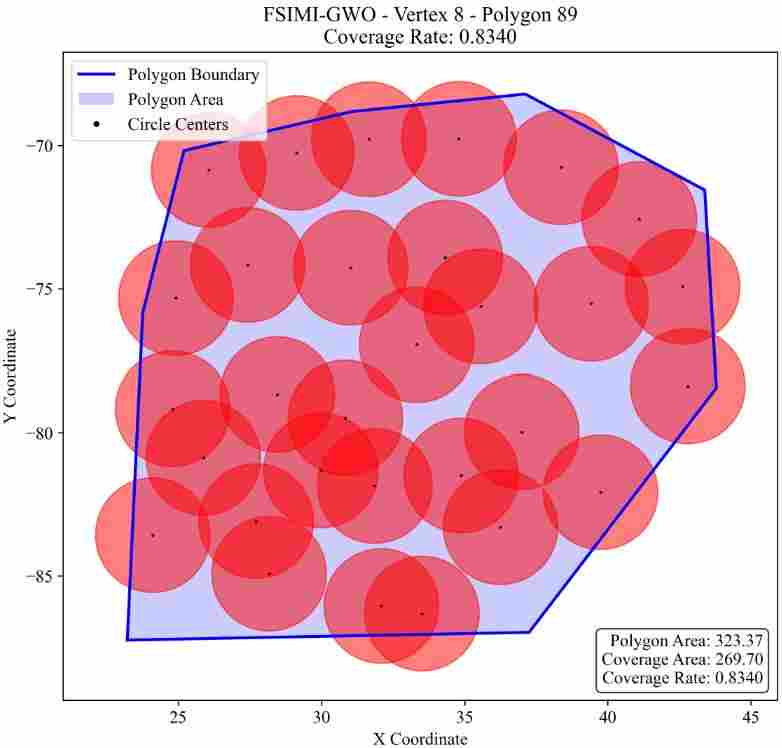}
	\end{subfigure}
	\begin{subfigure}[b]{0.15\textwidth}
		\centering
		\includegraphics[width=\textwidth]{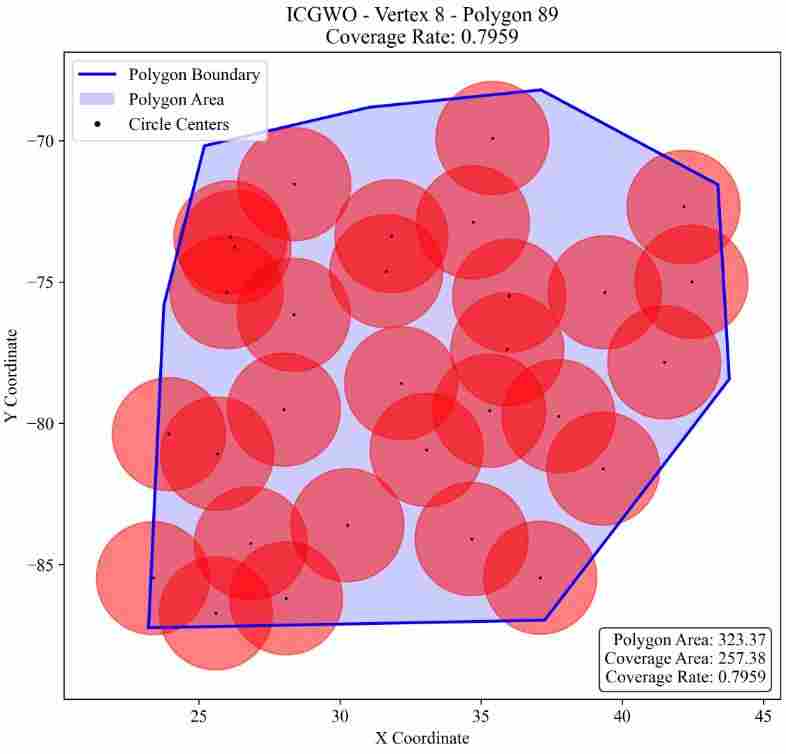}
	\end{subfigure}
	\begin{subfigure}[b]{0.15\textwidth}
		\centering
		\includegraphics[width=\textwidth]{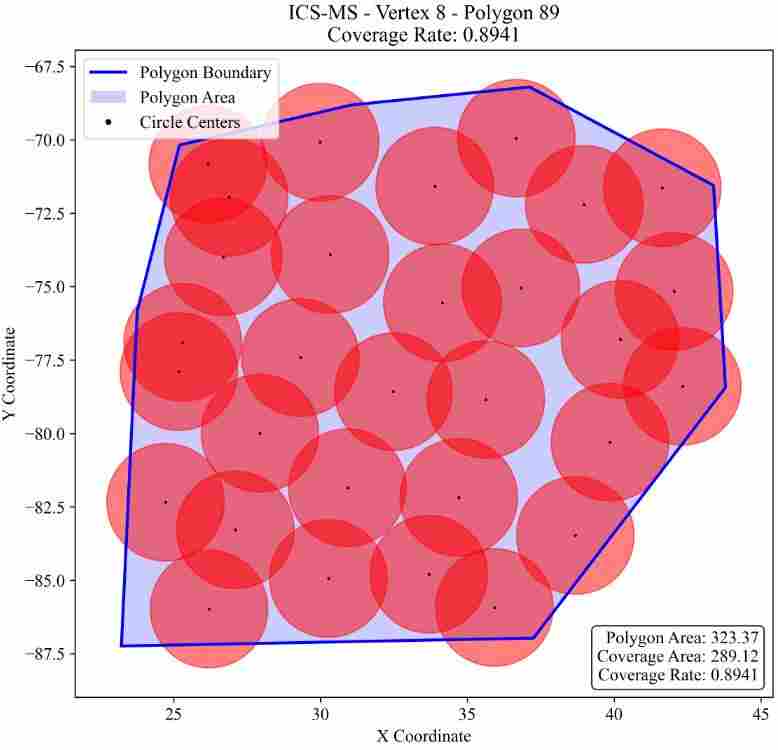}
	\end{subfigure}
	\begin{subfigure}[b]{0.15\textwidth}
		\centering
		\includegraphics[width=\textwidth]{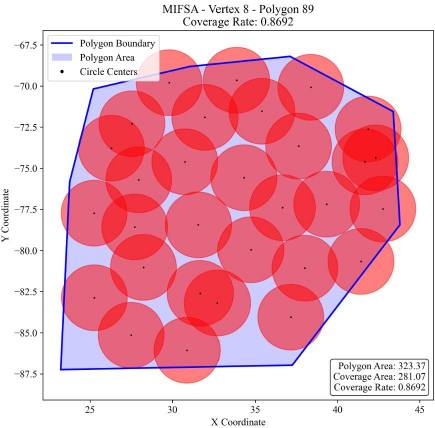}
	\end{subfigure}
	\begin{subfigure}[b]{0.15\textwidth}
		\centering
		\includegraphics[width=\textwidth]{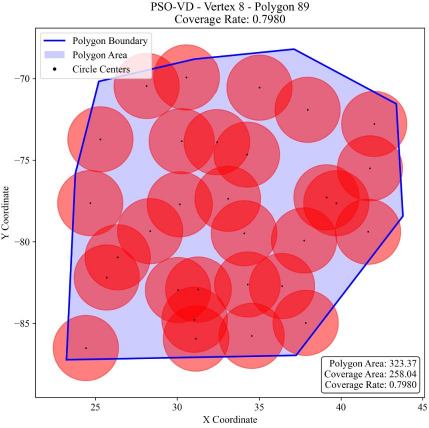}
	\end{subfigure}
	\begin{subfigure}[b]{0.15\textwidth}
		\centering
		\includegraphics[width=\textwidth]{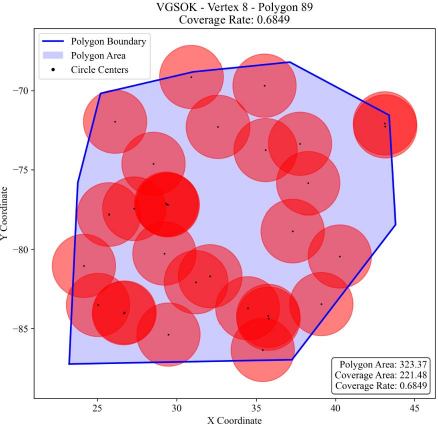}
	\end{subfigure}
	\begin{subfigure}[b]{0.15\textwidth}
		\centering
		\includegraphics[width=\textwidth]{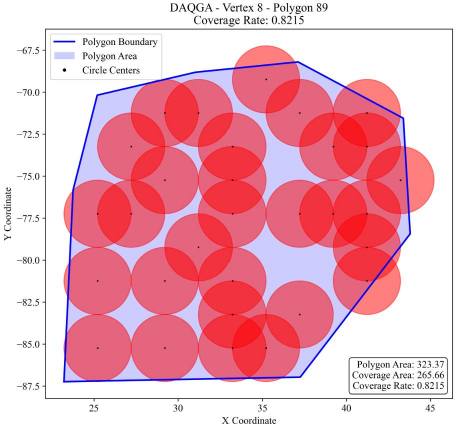}
	\end{subfigure}
	\begin{subfigure}[b]{0.15\textwidth}
		\centering
		\includegraphics[width=\textwidth]{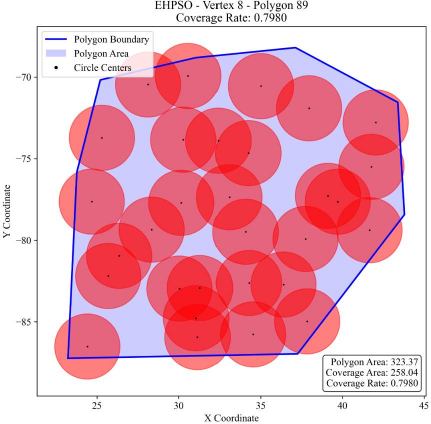}
	\end{subfigure}
	\begin{subfigure}[b]{0.15\textwidth}
		\centering
		\includegraphics[width=\textwidth]{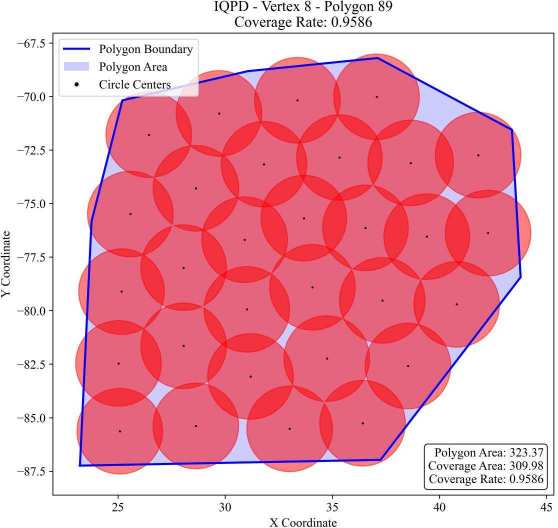}
	\end{subfigure}
	\begin{subfigure}[b]{0.2\textwidth}
		\centering
		\includegraphics[width=\textwidth]{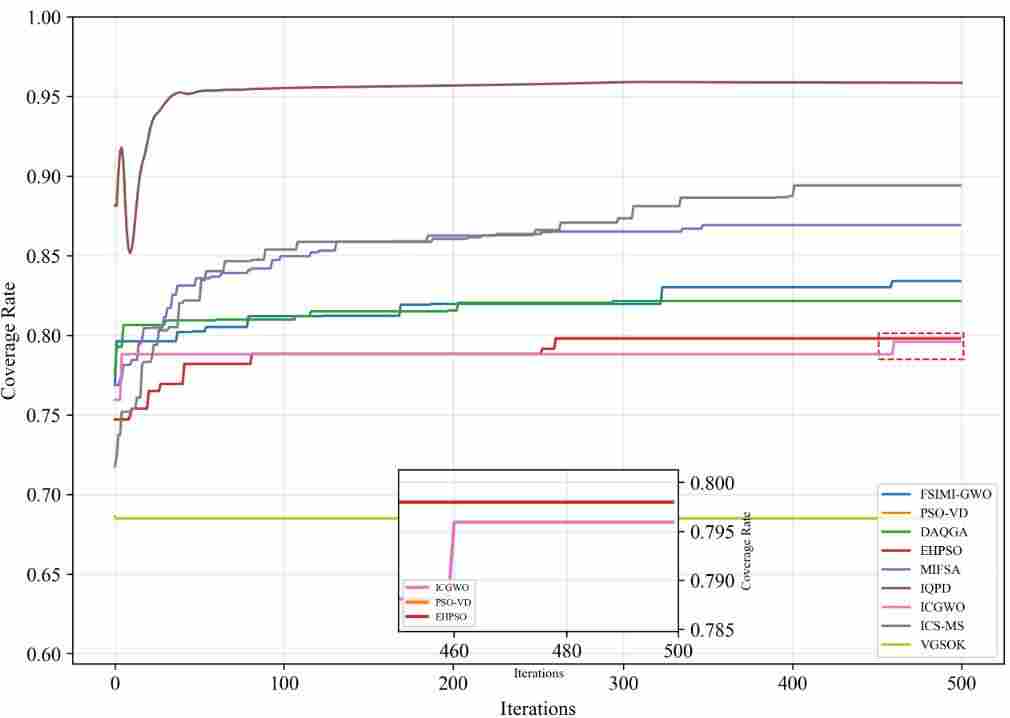}
	\end{subfigure}
	\caption{Optimized deployment layout and convergence curves for the case with 8 vertices and 30 circles.}
	\label{Fig:18}
\end{figure}

\subsection{Real-world Scenarios}
To further validate the efficiency of our proposed algorithm, we downloaded a real-world convex polygon dataset from  \href{https://zenodo.org/records/15746579}{\textit{EU Open Research Repository}}. This dataset, which consists of 100 convex polygons, is designed for testing real-world scenarios such as disaster response, UAS swarms, and wireless network coverage. We performed extensive experiments on this dataset with circle counts of {60, 80, 100} and a fixed circle radius of 222 m. 

The optimized coverage results for a part of countries and regions are presented in Figs.\ref{Fig:16}–-\ref{Fig:19} and Tables~\ref{tab:13}--\ref{tab:16}. Our algorithm clearly achieves the highest coverage rate among all metaheuristic algorithms, with runtime second only to VGSOK and far below the others. These findings confirm that our algorithm outperforms other methods in coverage efficiency on both synthetic and real-world data, offering an efficient solution to the circular coverage problem in convex polygonal domains.

\begin{figure}
	\centering
	\begin{subfigure}[b]{0.15\textwidth}
		\centering
		\includegraphics[width=\textwidth]{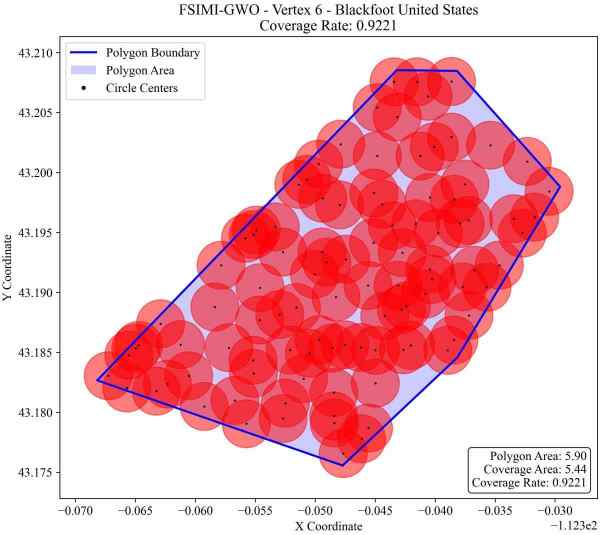}
	\end{subfigure}
	\begin{subfigure}[b]{0.15\textwidth}
		\centering
		\includegraphics[width=\textwidth]{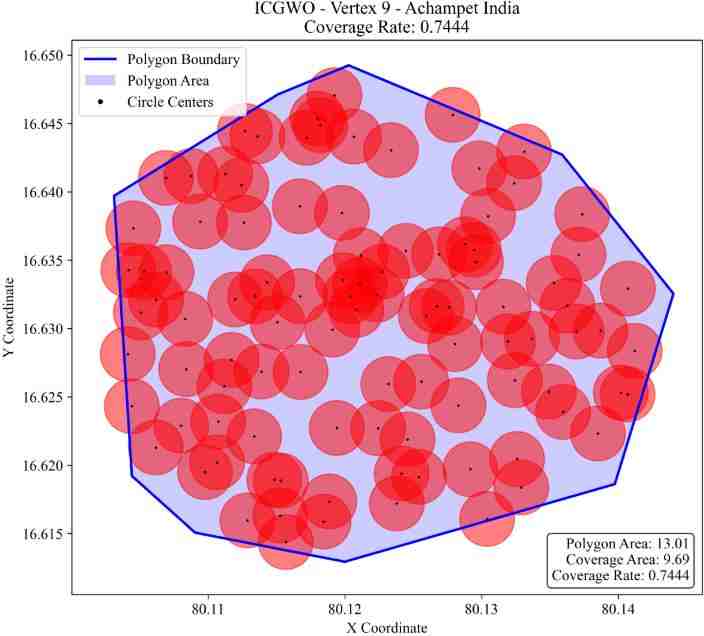}
	\end{subfigure}
	\begin{subfigure}[b]{0.15\textwidth}
		\centering
		\includegraphics[width=\textwidth]{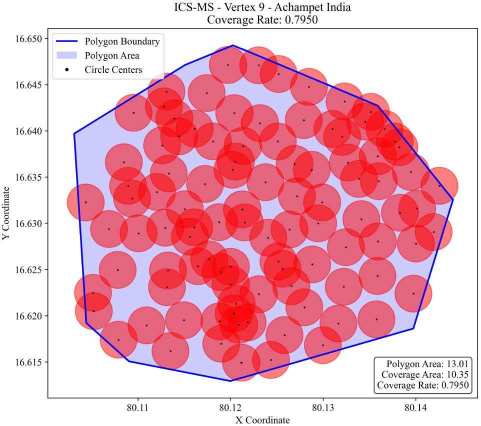}
	\end{subfigure}
	\begin{subfigure}[b]{0.15\textwidth}
		\centering
		\includegraphics[width=\textwidth]{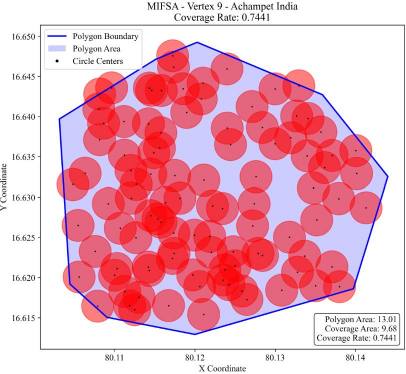}
	\end{subfigure}
	\begin{subfigure}[b]{0.15\textwidth}
		\centering
		\includegraphics[width=\textwidth]{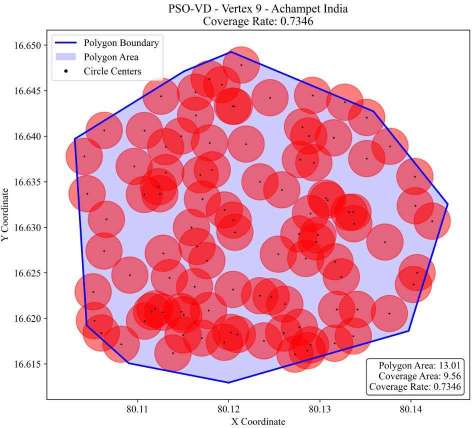}
	\end{subfigure}
	\begin{subfigure}[b]{0.15\textwidth}
		\centering
		\includegraphics[width=\textwidth]{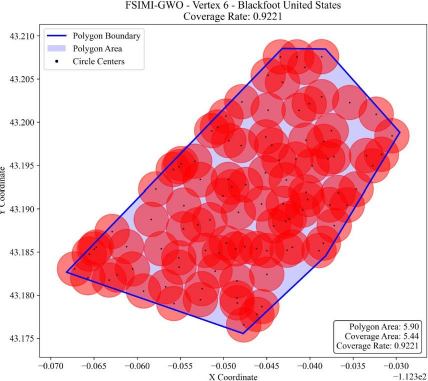}
	\end{subfigure}
	\begin{subfigure}[b]{0.15\textwidth}
		\centering
		\includegraphics[width=\textwidth]{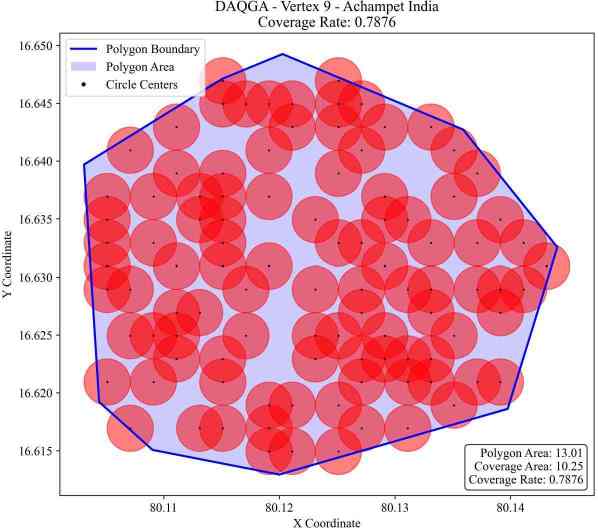}
	\end{subfigure}
	\begin{subfigure}[b]{0.15\textwidth}
		\centering
		\includegraphics[width=\textwidth]{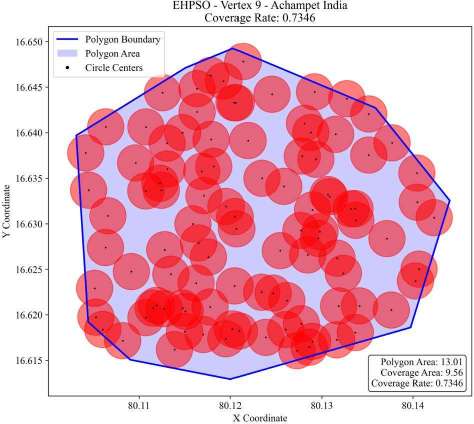}
	\end{subfigure}
	\begin{subfigure}[b]{0.15\textwidth}
		\centering
		\includegraphics[width=\textwidth]{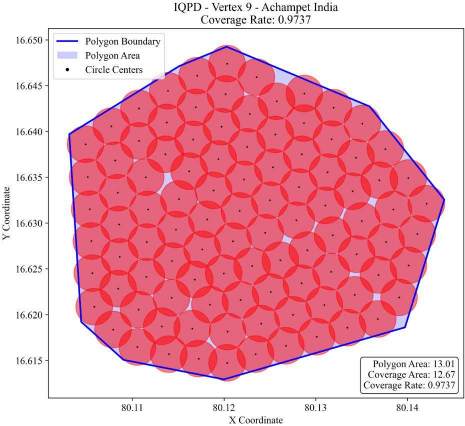}
	\end{subfigure}
	\caption{Optimized deployment for a 9-vertex polygon in Achampet, India, using different algorithms. The polygon area is 13.01 $km^2$. The IQPD algorithm covers 12.67 $km^2$, achieving the highest coverage rate at 0.9737.}
	\label{Fig:19}
\end{figure}

\begin{figure}
	\centering
	\begin{subfigure}[b]{0.15\textwidth}
		\centering
		\includegraphics[width=\textwidth]{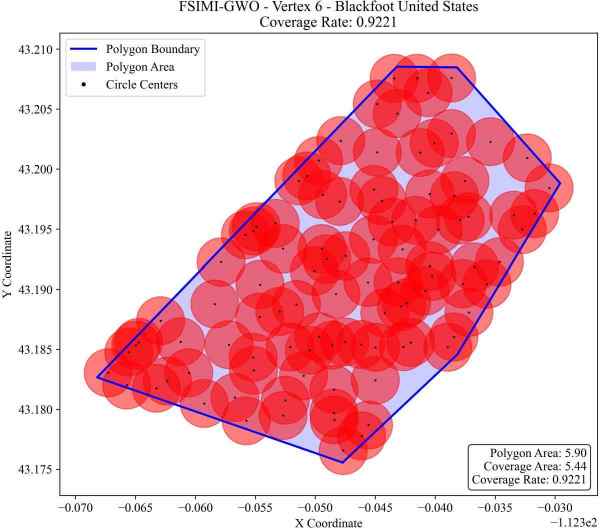}
	\end{subfigure}
	\begin{subfigure}[b]{0.15\textwidth}
		\centering
		\includegraphics[width=\textwidth]{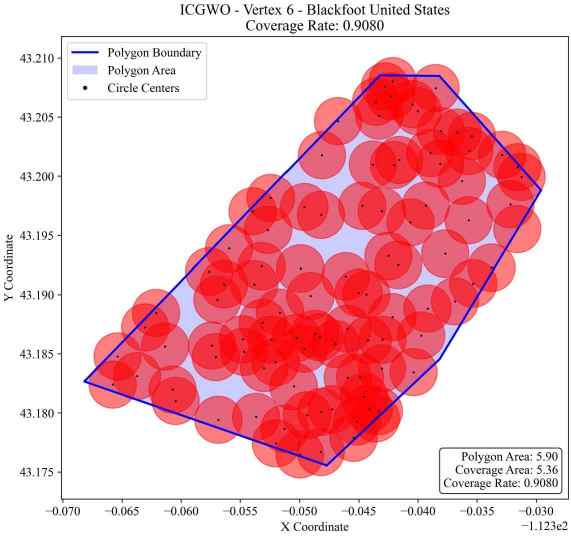}
	\end{subfigure}
	\begin{subfigure}[b]{0.15\textwidth}
		\centering
		\includegraphics[width=\textwidth]{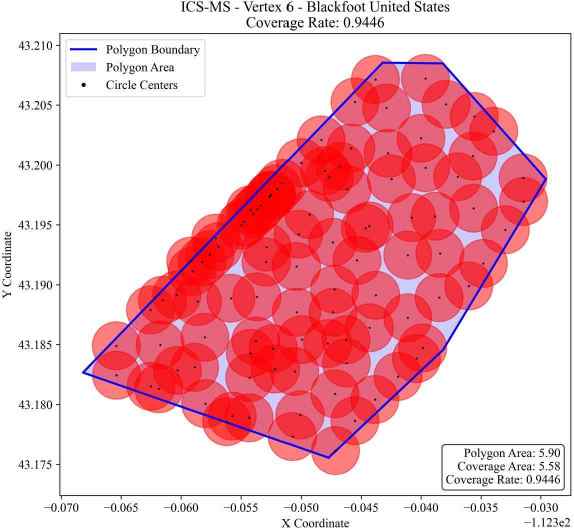}
	\end{subfigure}
	\begin{subfigure}[b]{0.15\textwidth}
		\centering
		\includegraphics[width=\textwidth]{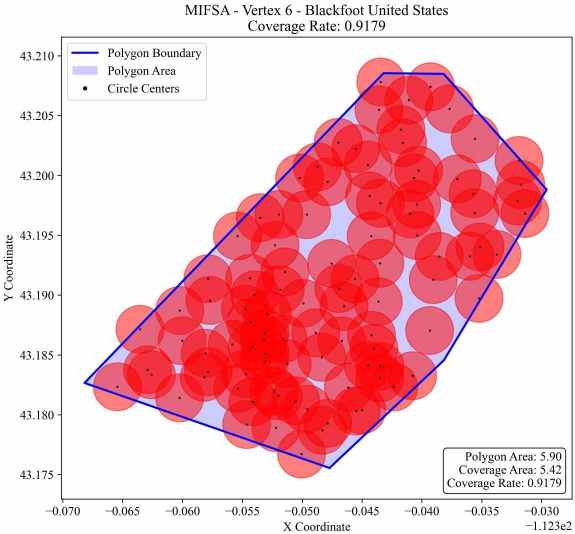}
	\end{subfigure}
	\begin{subfigure}[b]{0.15\textwidth}
		\centering
		\includegraphics[width=\textwidth]{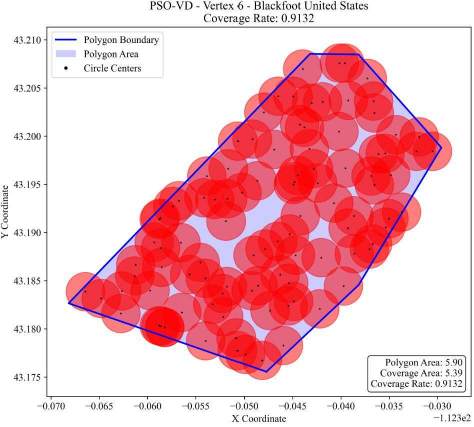}
	\end{subfigure}
	\begin{subfigure}[b]{0.15\textwidth}
		\centering
		\includegraphics[width=\textwidth]{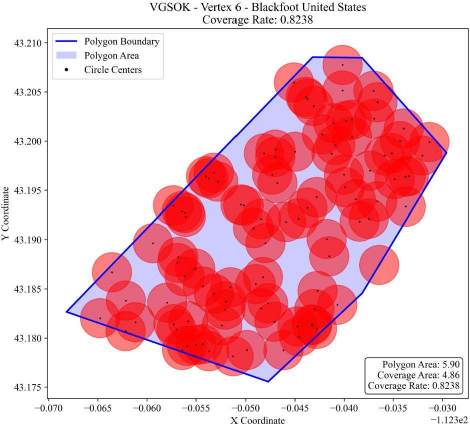}
	\end{subfigure}
	\begin{subfigure}[b]{0.15\textwidth}
		\centering
		\includegraphics[width=\textwidth]{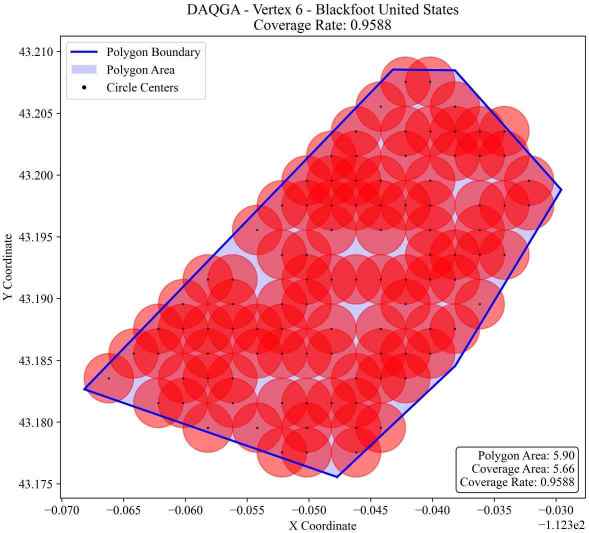}
	\end{subfigure}
	\begin{subfigure}[b]{0.15\textwidth}
		\centering
		\includegraphics[width=\textwidth]{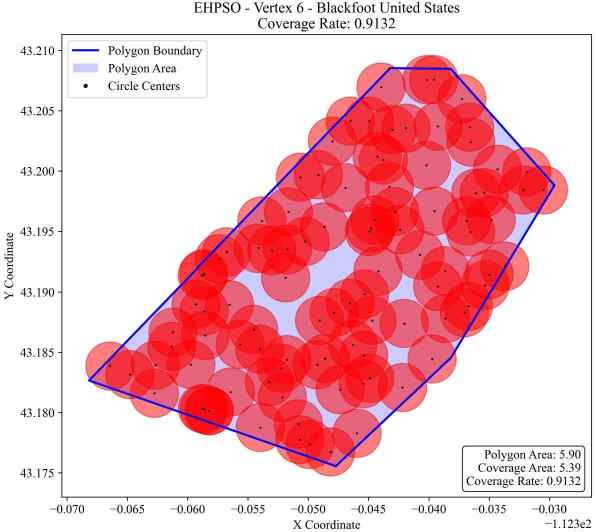}
	\end{subfigure}
	\begin{subfigure}[b]{0.15\textwidth}
		\centering
		\includegraphics[width=\textwidth]{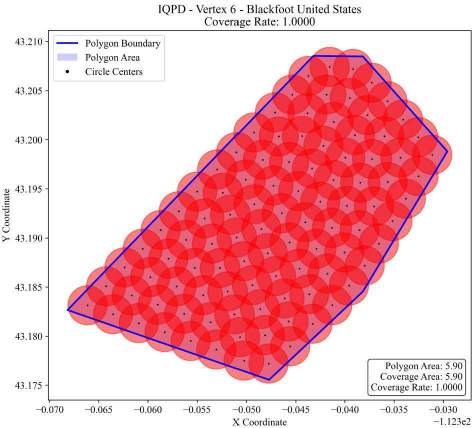}
	\end{subfigure}
	\caption{Optimized deployment for a 6-Vertex  polygon in Blackfoot, United States, using different algorithms. The polygon area is 5.90 $\mathrm{km}^2$. The IQPD algorithm covers 5.90 $\mathrm{km}^2$, achieving full coverage rate at 1.0000.}
	\label{Fig:20}
\end{figure}

\begin{figure}
	\centering
	\begin{subfigure}[b]{0.15\textwidth}
		\centering
		\includegraphics[width=\textwidth]{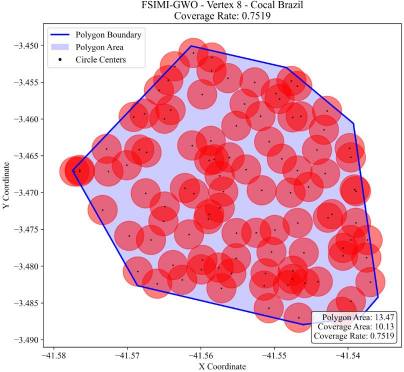}
	\end{subfigure}
	\begin{subfigure}[b]{0.15\textwidth}
		\centering
		\includegraphics[width=\textwidth]{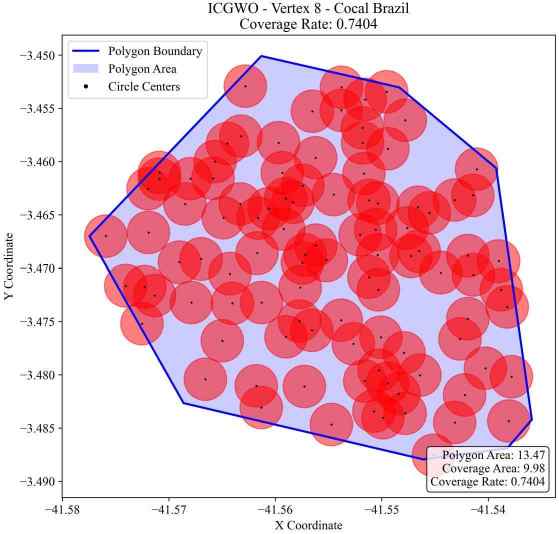}
	\end{subfigure}
	\begin{subfigure}[b]{0.15\textwidth}
		\centering
		\includegraphics[width=\textwidth]{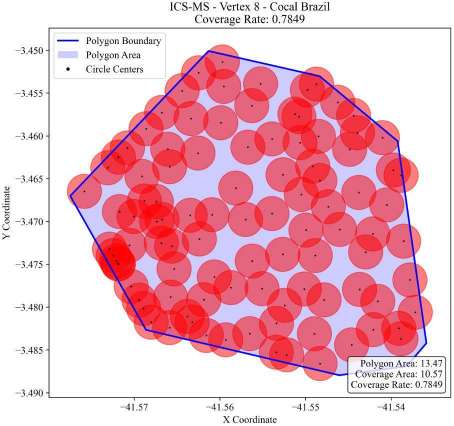}
	\end{subfigure}
	\begin{subfigure}[b]{0.15\textwidth}
		\centering
		\includegraphics[width=\textwidth]{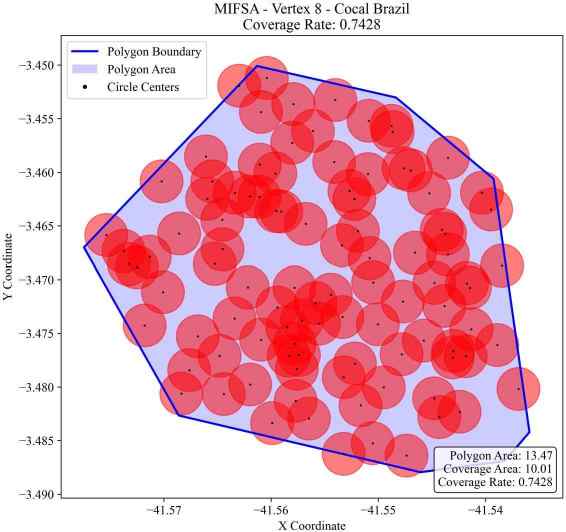}
	\end{subfigure}
	\begin{subfigure}[b]{0.15\textwidth}
		\centering
		\includegraphics[width=\textwidth]{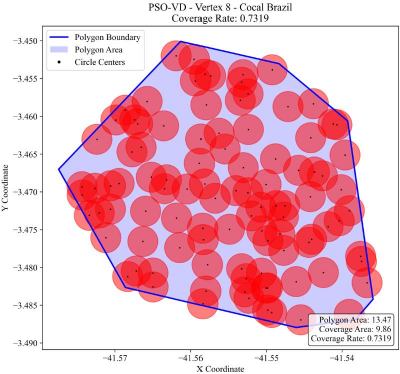}
	\end{subfigure}
	\begin{subfigure}[b]{0.15\textwidth}
		\centering
		\includegraphics[width=\textwidth]{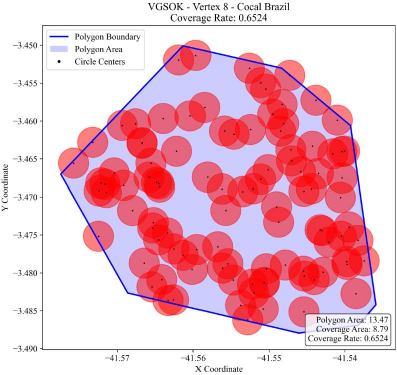}
	\end{subfigure}
	\begin{subfigure}[b]{0.15\textwidth}
		\centering
		\includegraphics[width=\textwidth]{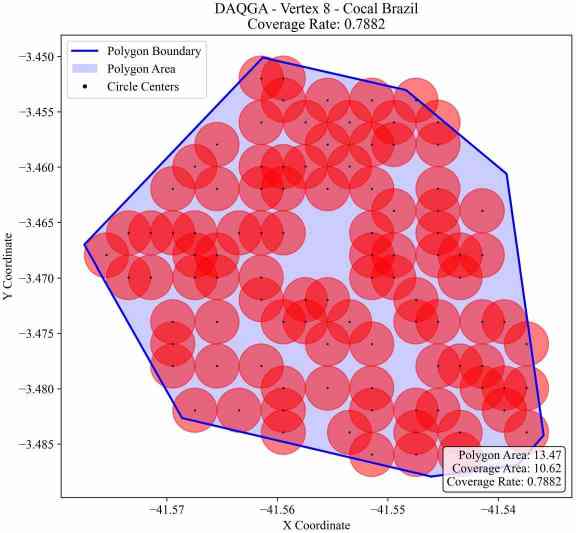}
	\end{subfigure}
	\begin{subfigure}[b]{0.15\textwidth}
		\centering
		\includegraphics[width=\textwidth]{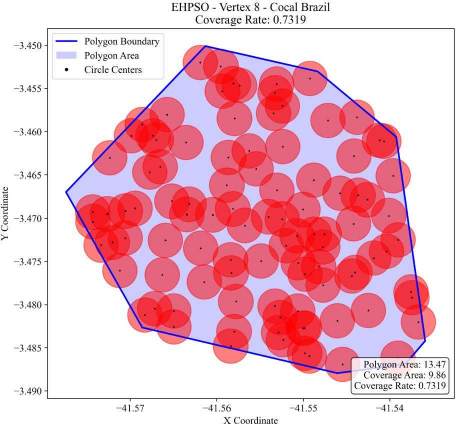}
	\end{subfigure}
	\begin{subfigure}[b]{0.15\textwidth}
		\centering
		\includegraphics[width=\textwidth]{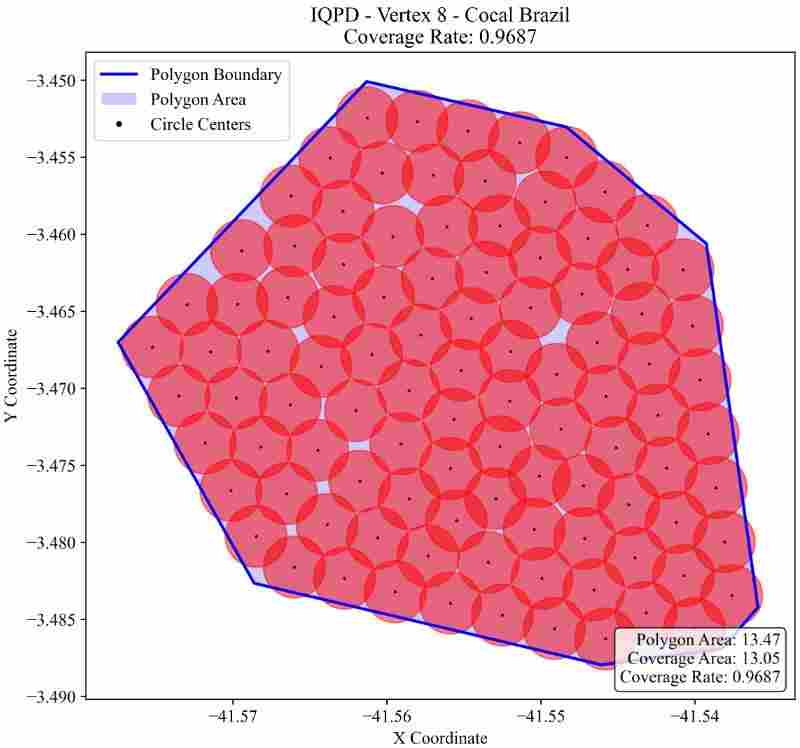}
	\end{subfigure}
	\caption{Optimized deployment for an 8-vertex polygon in Cocal, Brazil, using different algorithms. The polygon area is 13.47 $\mathrm{km}^2$. The IQPD algorithm covers 13.05 $\mathrm{km}^2$, achieving the highest coverage rate at 0.9687.}
	\label{Fig:21}
\end{figure}

\begin{figure}
	\centering
	\begin{subfigure}[b]{0.15\textwidth}
		\centering
		\includegraphics[width=\textwidth]{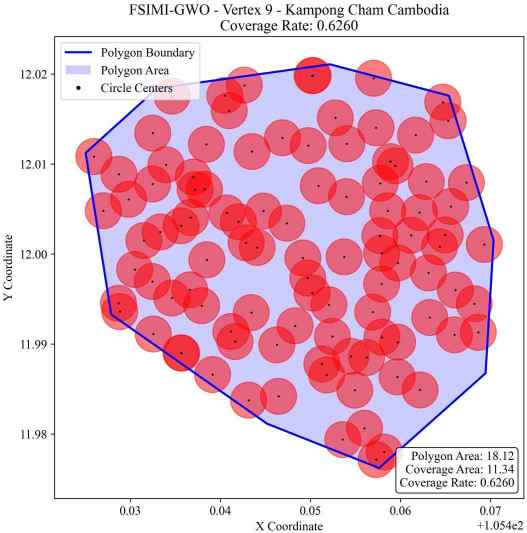}
	\end{subfigure}
	\begin{subfigure}[b]{0.15\textwidth}
		\centering
		\includegraphics[width=\textwidth]{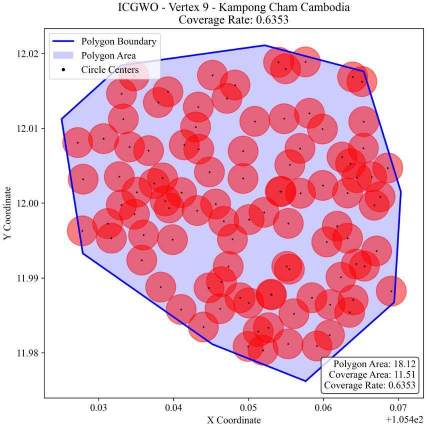}
	\end{subfigure}
	\begin{subfigure}[b]{0.15\textwidth}
		\centering
		\includegraphics[width=\textwidth]{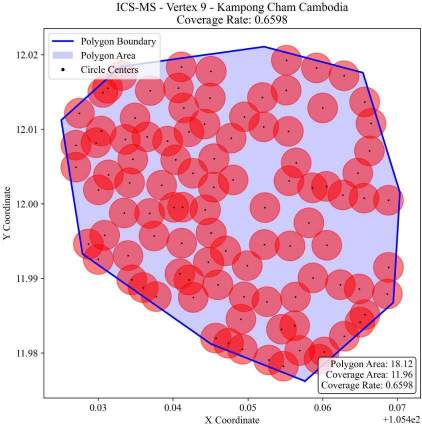}
	\end{subfigure}
	\begin{subfigure}[b]{0.15\textwidth}
		\centering
		\includegraphics[width=\textwidth]{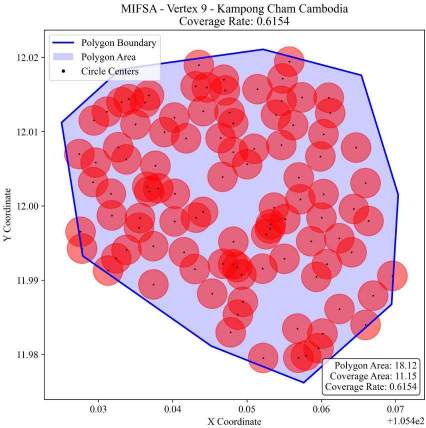}
	\end{subfigure}
	\begin{subfigure}[b]{0.15\textwidth}
		\centering
		\includegraphics[width=\textwidth]{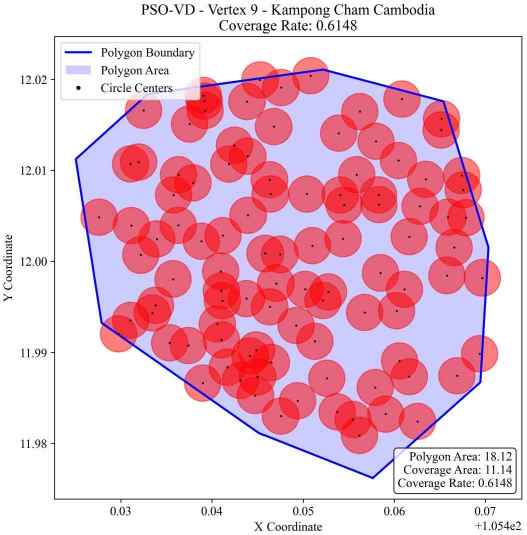}
	\end{subfigure}
	\begin{subfigure}[b]{0.15\textwidth}
		\centering
		\includegraphics[width=\textwidth]{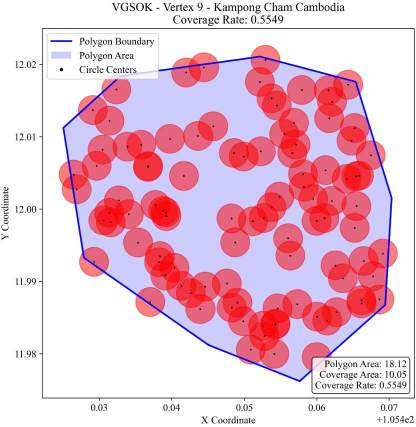}
	\end{subfigure}
	\begin{subfigure}[b]{0.15\textwidth}
		\centering
		\includegraphics[width=\textwidth]{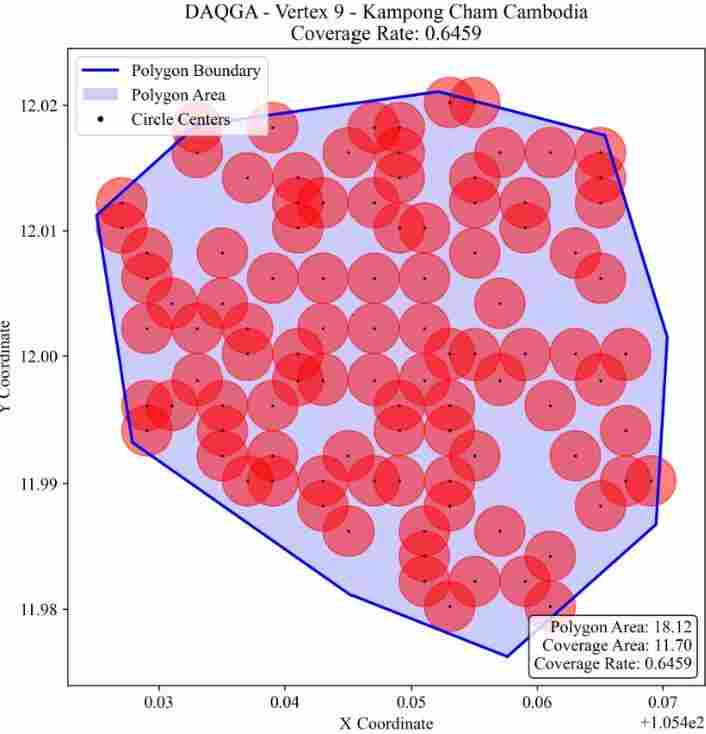}
	\end{subfigure}
	\begin{subfigure}[b]{0.15\textwidth}
		\centering
		\includegraphics[width=\textwidth]{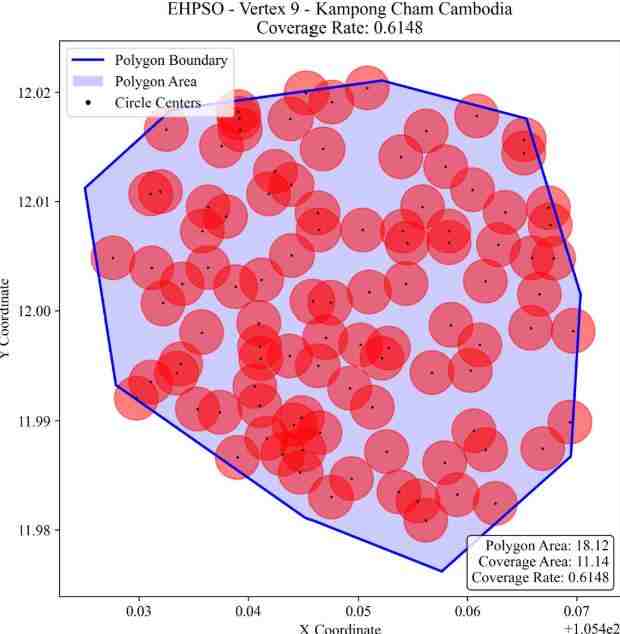}
	\end{subfigure}
	\begin{subfigure}[b]{0.15\textwidth}
		\centering
		\includegraphics[width=\textwidth]{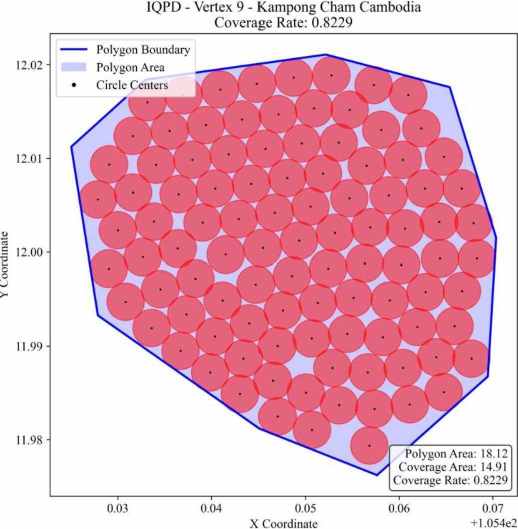}
	\end{subfigure}
	\caption{Optimized deployment for a 9-vertex polygon in Kampong Cham, Cambodia, using different algorithms. The polygon area is 18.12 $\mathrm{km}^2$. The IQPD algorithm covers 14.91 $\mathrm{km}^2$, achieving the highest coverage rate of 0.8229.}
	\label{Fig:22}
\end{figure}

\begin{table}[htbp]
	\centering
	\caption{Performance comparison of nine algorithms on the Achampet region of India.}
	\label{tab:13}
	\begin{tabular*}{\tblwidth}{@{\extracolsep{\fill}}lccccc@{}}
		\toprule
		\textbf{Algorithm} & \textbf{Coverage} & \textbf{Usage Rate} & \textbf{Uniformity Index} & \textbf{Min Gap} & \textbf{Time (s)} \\
		\midrule
		FSIMI-GWO & 0.760179 & 0.666769 & 0.655952 & -0.004000 & 15842.946 \\
		PSO-VD    & 0.734643 & 0.644371 & 0.600490 & -0.003803 & 3157.147 \\
		DAQGA     & 0.787589 & 0.690811 & 0.705230 & -0.002000 & 5754.953 \\
		EHPSO     & 0.734643 & 0.644371 & 0.600490 & -0.003803 & 11497.639 \\
		MIFSA     & 0.744149 & 0.652709 & 0.602012 & -0.003606 & 11233.510 \\
		ICGWO     & 0.744419 & 0.652354 & 0.577861 & -0.003525 & 2266.390 \\
		ICS-MS   & 0.795005 & 0.697238 & 0.598862 & -0.804282 & 3580.502 \\
		VGSOK     & 0.646428 & 0.008463 & 0.541228 & -0.032640 & 16.393 \\
		\textbf{IQPD} & \textbf{0.973662} & \textbf{0.853245} & \textbf{0.951155} & \textbf{-0.000882} & \textbf{693.121} \\
		\bottomrule
	\end{tabular*}
\end{table}

\begin{table}[htbp]
	\centering
	\caption{Performance comparison of nine algorithms on the Blackfoot region of United States.}
	\label{tab:14}
	\begin{tabular*}{\tblwidth}{@{\extracolsep{\fill}}lccccc@{}}
		\toprule
		\textbf{Algorithm} & \textbf{Coverage} & \textbf{Usage Rate} & \textbf{Uniformity Index} & \textbf{Min Gap} & \textbf{Time (s)} \\
		\midrule
		FSIMI-GWO & 0.922055 & 0.480096 & 0.652438 & -0.003672 & 8839.417 \\
		PSO-VD    & 0.913235 & 0.475504 & 0.662271 & -0.003980 & 12280.087 \\
		DAQGA     & 0.958791 & 0.499224 & 0.801136 & -0.002000 & 25096.199 \\
		EHPSO     & 0.913235 & 0.475504 & 0.662271 & -0.003980 & 4899.818 \\
		MIFSA     & 0.917863 & 0.477914 & 0.627287 & -0.003523 & 2188.611 \\
		ICGWO     & 0.908023 & 0.472068 & 0.574016 & -0.003519 & 231.851 \\
		ICS-MS   & 0.944565 & 0.491457 & 0.561035 & -0.976568 & 872.574 \\
		VGSOK     & 0.823838 & 0.007189 & 0.578550 & -0.030794 & 13.212 \\
		\textbf{IQPD} & \textbf{0.999989} & \textbf{0.519880} & \textbf{0.964816} & \textbf{-0.001704} & \textbf{140.975} \\
		\bottomrule
	\end{tabular*}
\end{table}

\begin{table}[htbp]
	\centering
	\caption{Performance comparison of nine algorithms on the Cocal region of Brazil.}
	\label{tab:15}
	\begin{tabular*}{\tblwidth}{@{\extracolsep{\fill}}lccccc@{}}
		\toprule
		\textbf{Algorithm} & \textbf{Coverage} & \textbf{Usage Rate} & \textbf{Uniformity Index} & \textbf{Min Gap} & \textbf{Time (s)} \\
		\midrule
		FSIMI\_GWO & 0.751861 & 0.656161 & 0.664567 & -0.003790 & 18367.040 \\
		PSO\_VD    & 0.731855 & 0.638701 & 0.598935 & -0.003821 & 15625.682 \\
		DAQGA     & 0.788192 & 0.687867 & 0.727899 & -0.002000 & 6552.750 \\
		EHPSO     & 0.731855 & 0.638701 & 0.598935 & -0.003821 & 3515.128 \\
		MIFSA     & 0.742780 & 0.648235 & 0.633732 & -0.003544 & 7084.266 \\
		ICGWO     & 0.740423 & 0.645590 & 0.603511 & -0.003379 & 293.290 \\
		ICS\_MS   & 0.784867 & 0.684886 & 0.604461 & -0.983687 & 856.008 \\
		VGSOK     & 0.652402 & 0.008241 & 0.537022 & -0.033144 & 16.106 \\
		\textbf{IQPD} & \textbf{0.968720} & \textbf{0.844646} & \textbf{0.935395} & \textbf{-0.001078} & \textbf{173.099} \\
		\bottomrule
	\end{tabular*}
\end{table}

\begin{table}[htbp]
	\centering
	\caption{Performance comparison of nine algorithms on the Kampong Cham region of Cambodia.}
	\label{tab:16}
	\begin{tabular*}{\tblwidth}{@{\extracolsep{\fill}}lccccc@{}}
		\toprule
		\textbf{Algorithm} & \textbf{Coverage} & \textbf{Usage Rate} & \textbf{Uniformity Index} & \textbf{Min Gap} & \textbf{Time (s)} \\
		\midrule
		FSIMI\_GWO & 0.625951 & 0.749097 & 0.643819 & -0.003912 & 9503.021 \\
		PSO\_VD    & 0.614798 & 0.735750 & 0.638610 & -0.003378 & 8459.312 \\
		DAQGA     & 0.645857 & 0.772920 & 0.666162 & -0.002000 & 6085.108 \\
		EHPSO     & 0.614798 & 0.735750 & 0.638610 & -0.003378 & 3896.826 \\
		MIFSA     & 0.615421 & 0.736495 & 0.596408 & -0.003430 & 2376.783 \\
		ICGWO     & 0.635290 & 0.759768 & 0.647040 & -0.003894 & 157.500 \\
		ICS\_MS   & 0.659809 & 0.789720 & 0.642867 & -0.784824 & 722.717 \\
		VGSOK     & 0.554859 & 0.008087 & 0.534962 & -0.036129 & 17.067 \\
		\textbf{IQPD} & \textbf{0.822864} & \textbf{0.984096} & \textbf{0.886717} & \textbf{-0.000328} & \textbf{569.155} \\
		\bottomrule
	\end{tabular*}
\end{table}

\subsection{Ablation Study}
To evaluate the contribution of each component, we conducted an ablation study on the proposed IQPD algorithm. The test scenario is still chosen from a real-world polygonal region in the \href{https://zenodo.org/records/15746579}{\textit{EU Open Research Repository}}, which corresponds to a lake shore area in the United States. This region is a convex polygon with 10 vertices. We set the circle radius to 200 $m$, which is determined based on the scale of the real-world polygon.

 Experimental results are shown in Table~\ref{tab:9} and Fig.~\ref{Fig:13}. The figure shows that removing the structure-preserving initialization strategy (i.e., adopting random initialization) increases the convergence time and slightly reduces the coverage rate. When the friction component is removed, the dynamic circles move outward indefinitely, placing multiple optimized nodes outside the target region. This significantly reduces the coverage and utilization rates and causes the algorithm to fail to converge. Removing the boundary encircling strategy prevents nodes that drift outside from being recovered into the target region, which also leads to a decrease in coverage and utilization rates. According to the ablation study, the friction component has the greatest impact on the overall coverage rate. Meanwhile, adopting random initialization increases the optimization time and reduces the algorithm's adaptability to real-world scenarios.

\begin{figure}
	\centering
	\begin{subfigure}[b]{0.15\textwidth}
		\centering
		\includegraphics[width=\textwidth]{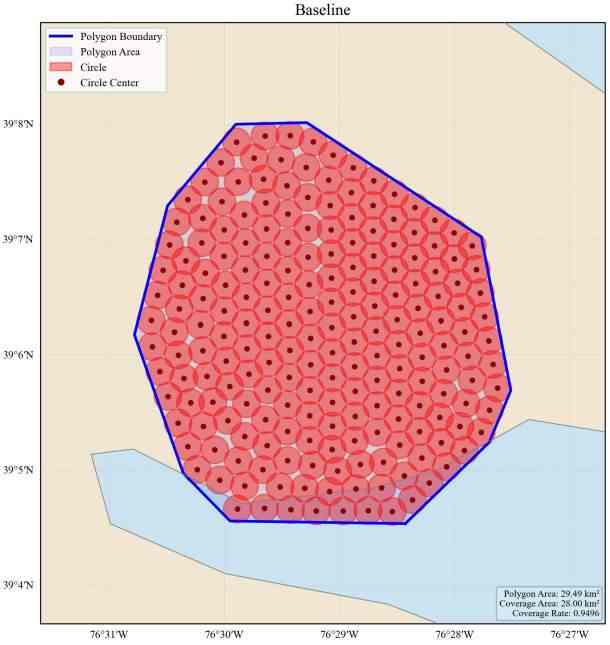}
		\caption{}
	\end{subfigure}
	\begin{subfigure}[b]{0.15\textwidth}
		\centering
		\includegraphics[width=\textwidth]{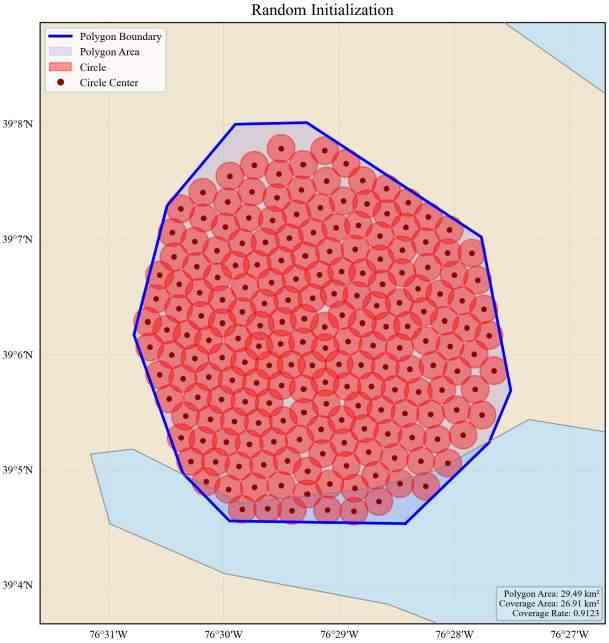}
		\caption{}
	\end{subfigure}
	\begin{subfigure}[b]{0.15\textwidth}
		\centering
		\includegraphics[width=\textwidth]{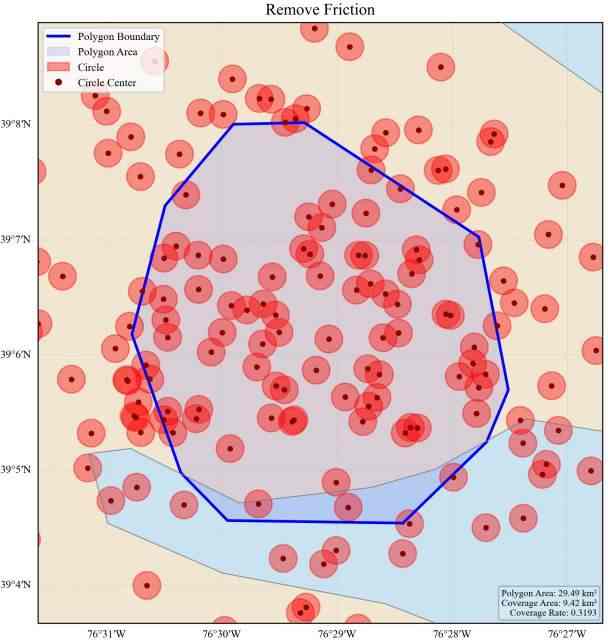}
		\caption{}
	\end{subfigure}
	\begin{subfigure}[b]{0.15\textwidth}
		\centering
		\includegraphics[width=\textwidth]{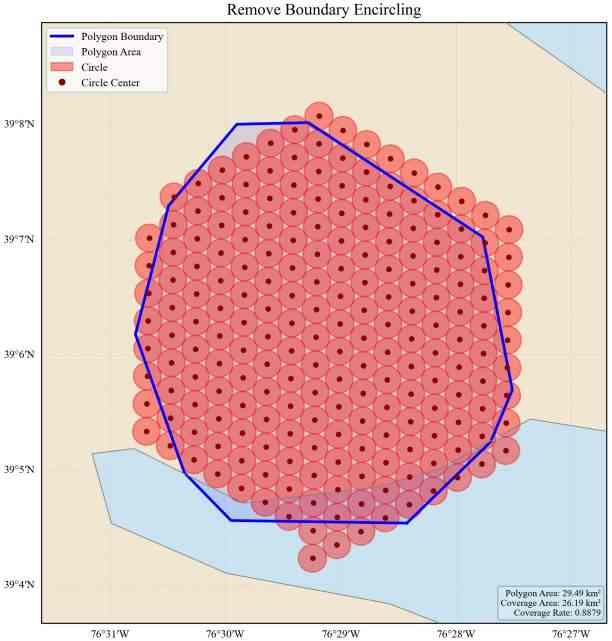}
		\caption{}
	\end{subfigure}
		\begin{subfigure}[b]{0.26\textwidth}
		\centering
		\includegraphics[width=\textwidth]{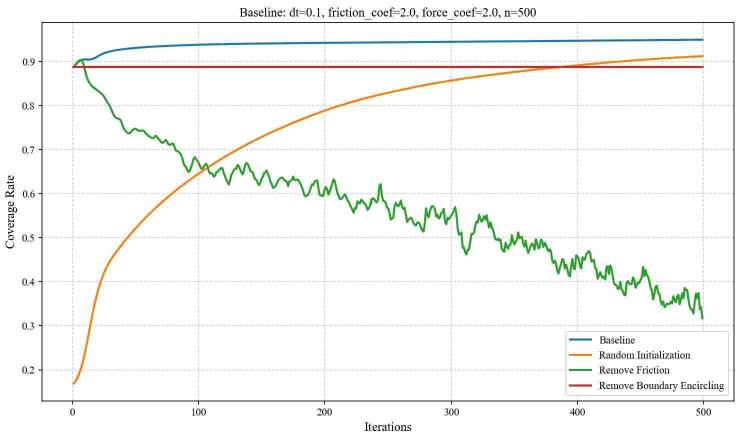}
		\caption{}
	\end{subfigure}
	\caption{Configuration diagram and convergence curves of the ablation study.}
	\label{Fig:13}
\end{figure}

\begin{table}
	\caption{Ablation study results of the proposed IQPD algorithm}
	\label{tab:9}
	\begin{tabular*}{\tblwidth}{@{\extracolsep{\fill}}lccccccc@{}}
		\toprule
		\textbf{Experiment} & \textbf{Polygon Area} & \textbf{Covered Area} & \textbf{CR} & \textbf{UR} & \textbf{UI} & \textbf{MG} & \textbf{Time (s)} \\
		\midrule
		Random Initialization & 29.49 & 26.91 & 0.9123 & 0.8686 & 0.8100 & -0.0010 & 142.74 \\
		Remove Friction & 29.49 & 9.42  & 0.3193 & 0.3039 & 0.4096 & -0.0038 & 140.71 \\
		Remove Boundary Encircling & 29.49 & 26.19 & 0.8879 & 0.8453 & 0.8157 & 0.1390 & 139.93 \\
		\textbf{Baseline} & 29.49 & 28.00 & \textbf{0.9496} & \textbf{0.9040} & \textbf{0.9268} & \textbf{-0.0008} & \textbf{163.52} \\
		\bottomrule
	\end{tabular*}
\end{table}

\subsection{Parameter analysis}

To further investigate the influence of the algorithm parameters, we conducted a sensitivity analysis. The proposed IQPD algorithm involves three key parameters: the friction coefficient \(\mu\), the attractive force coefficient \(\gamma\), and the time step \(\Delta t\). We tested \(\mu \in \{0.5, 1.0, 2.0, 5.0, 10.0\}\), \(\gamma \in \{0.01, 0.03, 0.06, 0.1, 0.2\}\), and \(\Delta t \in \{0.01, 0.03, 0.06, 0.1, 0.2\}\). Experiments were performed with three different numbers of circles \(n = 160, 180, 200\). The results are presented in Tables~\ref{tab:10}--\ref{tab:12} and Figs.~\ref{Fig:11}--\ref{Fig:12}.

The sensitivity analysis reveals that the proposed IQPD algorithm exhibits low sensitivity to the key parameters $\mu$, $\gamma$, $\Delta t$ with respect to coverage rate, utilization rate, and minimum gap, which demonstrates strong robustness and avoids complex parameter tuning. Moreover, the minimum gap stays near -0.002, where a small negative value implies minimal overlap. This indicates that the algorithm  prevents excessive coverage overlap successfully across all parameter configurations. The computational time and uniformity index are moderately affected. A larger $\gamma$ improves distribution uniformity, while a smaller $\mu$ and a moderate $\gamma$ help reduce computational time. The default parameter combination ($\mu=2.0$, $\gamma=0.03$, $\Delta t=0.1$) achieves a favorable balance between coverage effectiveness and computational efficiency across different node scales (160, 180, and 200).

\begin{table}
	\caption{Sensitivity analysis of key parameters (Friction coefficient, Attractive force coefficient, and Time step)  with 160 nodes.}
	\label{tab:10}
	\begin{tabular*}{\tblwidth}{@{\extracolsep{\fill}}lccccccccc@{}}
		\toprule
		\textbf{Parameter} & \textbf{Experiment} & \(\mu\) & \(\gamma \) & \(\Delta t\) & \textbf{CR} & \textbf{UR} & \textbf{UI} & \textbf{MG} & \textbf{Time (s)} \\
		\midrule
		\multirow{5}{*}{Friction} & friction\_coef=0.5 & 0.5 & 0.03 & 0.1 & 0.8273 & 0.9844 & 0.7826 & -0.0012 & 117.98 \\
		& friction\_coef=1.0 & 1.0 & 0.03 & 0.1 & 0.8263 & 0.9833 & 0.8005 & -0.0015 & 116.06 \\
		& baseline & 2.0 & 0.03 & 0.1 & 0.8254 & 0.9821 & 0.8026 & -0.0016 & \textbf{115.22} \\
		& friction\_coef=5.0 & 5.0 & 0.03 & 0.1 & 0.8240 & 0.9805 & 0.8059 & -0.0011 & 116.48 \\
		& friction\_coef=10.0 & 10.0 & 0.03 & 0.1 & 0.8230 & 0.9793 & 0.8143 & -0.0007 & 121.06 \\
		\midrule
		\multirow{5}{*}{Force} & force\_coef=0.01 & 2.0 & 0.01 & 0.1 & 0.8237 & 0.9802 & 0.8063 & -0.0013 & 123.95 \\
		& baseline & 2.0 & 0.03 & 0.1 & 0.8254 & 0.9821 & 0.8026 & -0.0016 & \textbf{115.22} \\
		& force\_coef=0.06 & 2.0 & 0.06 & 0.1 & 0.8240 & 0.9833 & 0.8005 & -0.0015 & 121.13 \\
		& force\_coef=0.1 & 2.0 & 0.10 & 0.1 & 0.8271 & 0.9842 & 0.7829 & -0.0013 & 124.60 \\
		& force\_coef=0.2 & 2.0 & 0.20 & 0.1 & 0.8281 & 0.9853 & 0.7830 & -0.0011 & 126.66 \\
		\midrule
		\multirow{4}{*}{Time step} & dt=0.05 & 2.0 & 0.03 & 0.05 & 0.8243 & 0.9809 & 0.8053 & -0.0013 & 115.34 \\
		& dt=0.2 & 2.0 & 0.03 & 0.20 & 0.8264 & 0.9833 & 0.8005 & -0.0015 & 115.38 \\
		& dt=0.5 & 2.0 & 0.03 & 0.50 & 0.8277 & 0.9849 & 0.7826 & -0.0011 & 118.95 \\
		& baseline & 2.0 & 0.03 & 0.10 & 0.8254 & 0.9821 & 0.8026 & -0.0016 & \textbf{115.22} \\
		\bottomrule
	\end{tabular*}
\end{table}

\begin{table}
	\caption{Sensitivity analysis of key parameters (Friction coefficient, Attractive force coefficient, and Time step) with 180 nodes.}
	\label{tab:11}
	\begin{tabular*}{\tblwidth}{@{\extracolsep{\fill}}lccccccccc@{}}
		\toprule
		\textbf{Parameter} & \textbf{Experiment} & \(\mu\) & \(\gamma\) & \(\Delta t\) & \textbf{CR} & \textbf{UR} & \textbf{UI} & \textbf{MG} & \textbf{Time (s)} \\
		\midrule
		\multirow{5}{*}{Friction} & friction\_coef=0.5 & 0.5 & 0.03 & 0.1 & 0.8907 & 0.9421 & 0.8390 & -0.0022 & 150.34 \\
		& friction\_coef=1.0 & 1.0 & 0.03 & 0.1 & 0.8863 & 0.9375 & 0.8352 & -0.0026 & 172.77 \\
		& baseline & 2.0 & 0.03 & 0.1 & 0.8822 & 0.9332 & 0.8245 & -0.0022 & 148.10 \\
		& friction\_coef=5.0 & 5.0 & 0.03 & 0.1 & 0.8778 & 0.9284 & 0.8220 & -0.0015 & 151.17 \\
		& friction\_coef=10.0 & 10.0 & 0.03 & 0.1 & 0.8749 & 0.9255 & 0.8101 & -0.0009 & 148.79 \\
		\midrule
		\multirow{5}{*}{Force} & force\_coef=0.01 & 2.0 & 0.01 & 0.1 & 0.8770 & 0.9276 & 0.8204 & -0.0013 & 146.25 \\
		& baseline & 2.0 & 0.03 & 0.1 & 0.8822 & 0.9332 & 0.8245 & -0.0022 & 148.10 \\
		& force\_coef=0.06 & 2.0 & 0.06 & 0.1 & 0.8864 & 0.9376 & 0.8355 & -0.0025 & 151.91 \\
		& force\_coef=0.1 & 2.0 & 0.10 & 0.1 & 0.8896 & 0.9410 & 0.8377 & -0.0023 & 148.51 \\
		& force\_coef=0.2 & 2.0 & 0.20 & 0.1 & 0.8935 & 0.9451 & 0.8483 & -0.0015 & 156.35 \\
		\midrule
		\multirow{4}{*}{Time step} & dt=0.05 & 2.0 & 0.03 & 0.05 & 0.8787 & 0.9294 & 0.8275 & -0.0017 & 138.16 \\
		& baseline & 2.0 & 0.03 & 0.10 & 0.8822 & 0.9332 & 0.8245 & -0.0022 & 148.10 \\
		& dt=0.2 & 2.0 & 0.03 & 0.20 & 0.8864 & 0.9376 & 0.8357 & -0.0025 & 141.50 \\
		& dt=0.5 & 2.0 & 0.03 & 0.50 & 0.8922 & 0.9437 & 0.8435 & -0.0021 & 141.39 \\
		\bottomrule
	\end{tabular*}
\end{table}

\begin{table}
	\caption{Sensitivity analysis of key parameters (Friction coefficient, Attractive force coefficient, and Time step) with 200 nodes.}
	\label{tab:12}
	\begin{tabular*}{\tblwidth}{@{\extracolsep{\fill}}lccccccccc@{}}
		\toprule
		\textbf{Parameter} & \textbf{Experiment} & \(\mu\) & \(\alpha\) & \(\Delta t\) & \textbf{CR} & \textbf{UR} & \textbf{UI} & \textbf{MG} & \textbf{Time (s)} \\
		\midrule
		\multirow{5}{*}{Friction} & friction\_coef=0.5 & 0.5 & 0.03 & 0.1 & 0.9278 & 0.8833 & 0.8133 & -0.0022 & 161.61 \\
		& friction\_coef=1.0 & 1.0 & 0.03 & 0.1 & 0.9195 & 0.8754 & 0.7917 & -0.0026 & 206.80 \\
		& baseline & 2.0 & 0.03 & 0.1 & 0.9122 & 0.8685 & 0.7769 & -0.0026 & 173.47 \\
		& friction\_coef=5.0 & 5.0 & 0.03 & 0.1 & 0.9044 & 0.8610 & 0.7690 & -0.0022 & 241.68 \\
		& friction\_coef=10.0 & 10.0 & 0.03 & 0.1 & 0.8996 & 0.8565 & 0.8047 & -0.0015 & 207.14 \\
		\midrule
		\multirow{5}{*}{Force} & force\_coef=0.01 & 2.0 & 0.01 & 0.1 & 0.9030 & 0.8597 & 0.7655 & -0.0022 & 164.64 \\
		& baseline & 2.0 & 0.03 & 0.1 & 0.9122 & 0.8685 & 0.7769 & -0.0026 & 173.47 \\
		& force\_coef=0.06 & 2.0 & 0.06 & 0.1 & 0.9196 & 0.8755 & 0.7922 & -0.0025 & 163.76 \\
		& force\_coef=0.1 & 2.0 & 0.10 & 0.1 & 0.9258 & 0.8814 & 0.8071 & -0.0023 & 162.85 \\
		& force\_coef=0.2 & 2.0 & 0.20 & 0.1 & 0.9342 & 0.8894 & 0.8395 & -0.0015 & 162.89 \\
		\midrule
		\multirow{4}{*}{Time step} & dt=0.05 & 2.0 & 0.03 & 0.05 & 0.9048 & 0.8614 & 0.7701 & -0.0023 & 189.68 \\
		& baseline & 2.0 & 0.03 & 0.10 & 0.9122 & 0.8685 & 0.7769 & -0.0026 & 173.47 \\
		& dt=0.2 & 2.0 & 0.03 & 0.20 & 0.8952 & 0.8522 & 0.7890 & -0.0008 & 162.87 \\
		& dt=0.5 & 2.0 & 0.03 & 0.50 & 0.9311 & 0.8864 & 0.8247 & -0.0021 & 188.83 \\
		\bottomrule
	\end{tabular*}
\end{table}

\begin{figure}
	\centering
	\begin{subfigure}[b]{0.26\textwidth}
		\centering
		\includegraphics[width=\textwidth]{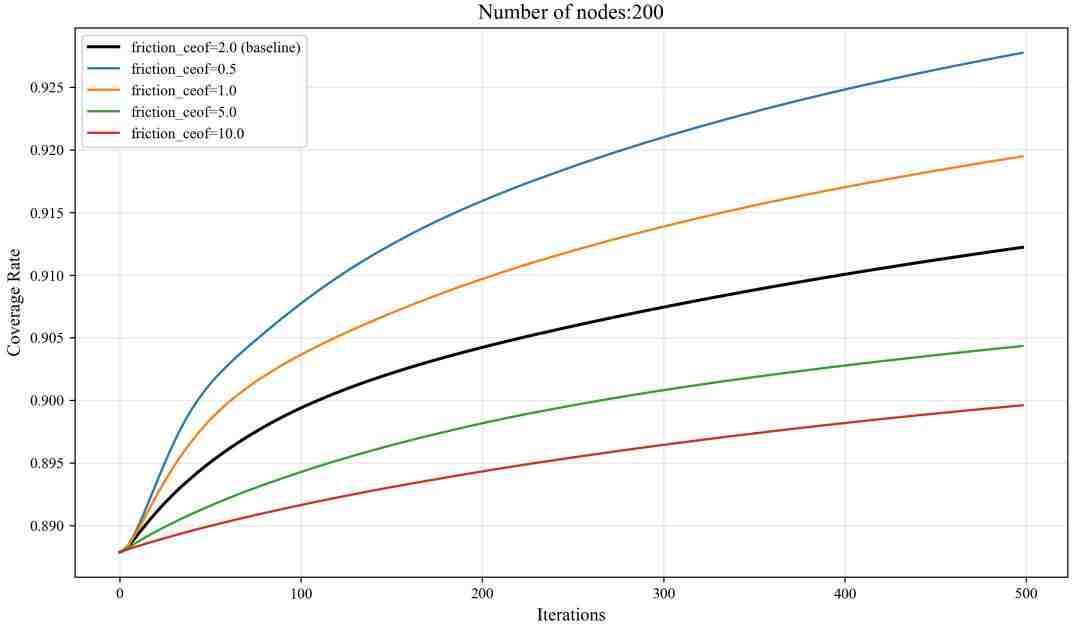}
		\caption{}
	\end{subfigure}
	\begin{subfigure}[b]{0.26\textwidth}
		\centering
		\includegraphics[width=\textwidth]{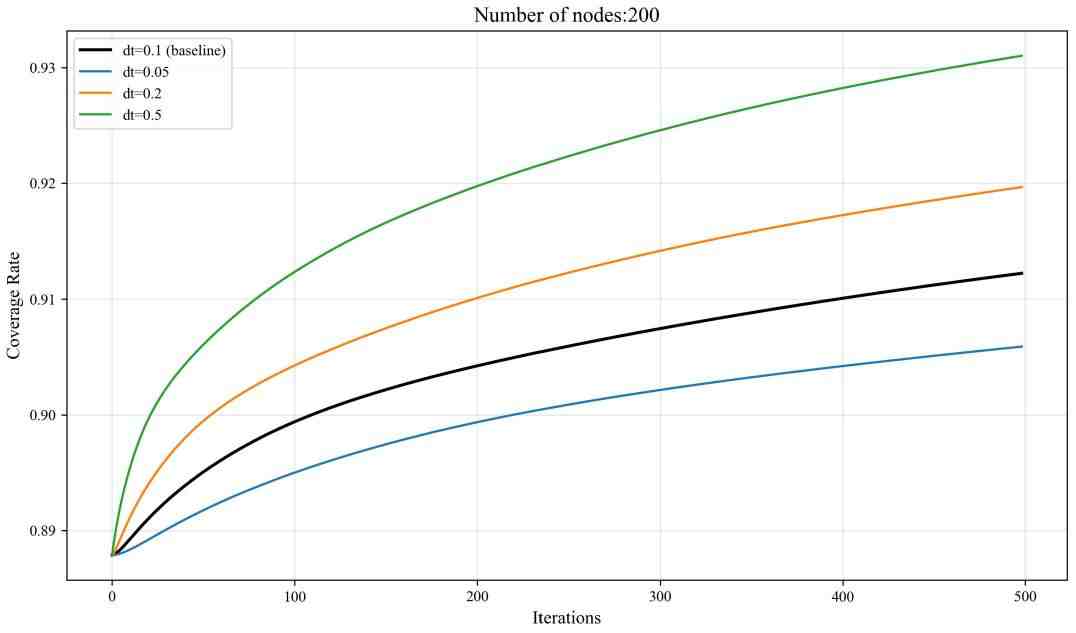}
		\caption{}
	\end{subfigure}
		\begin{subfigure}[b]{0.26\textwidth}
		\centering
		\includegraphics[width=\textwidth]{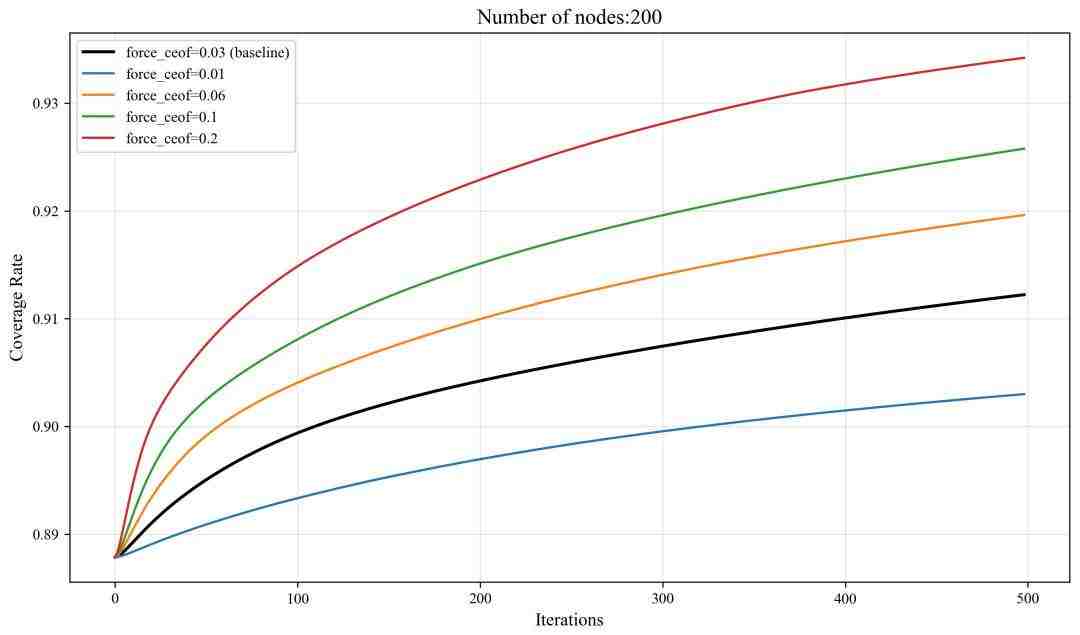}
		\caption{}
	\end{subfigure}
	\begin{subfigure}[b]{0.26\textwidth}
		\centering
		\includegraphics[width=\textwidth]{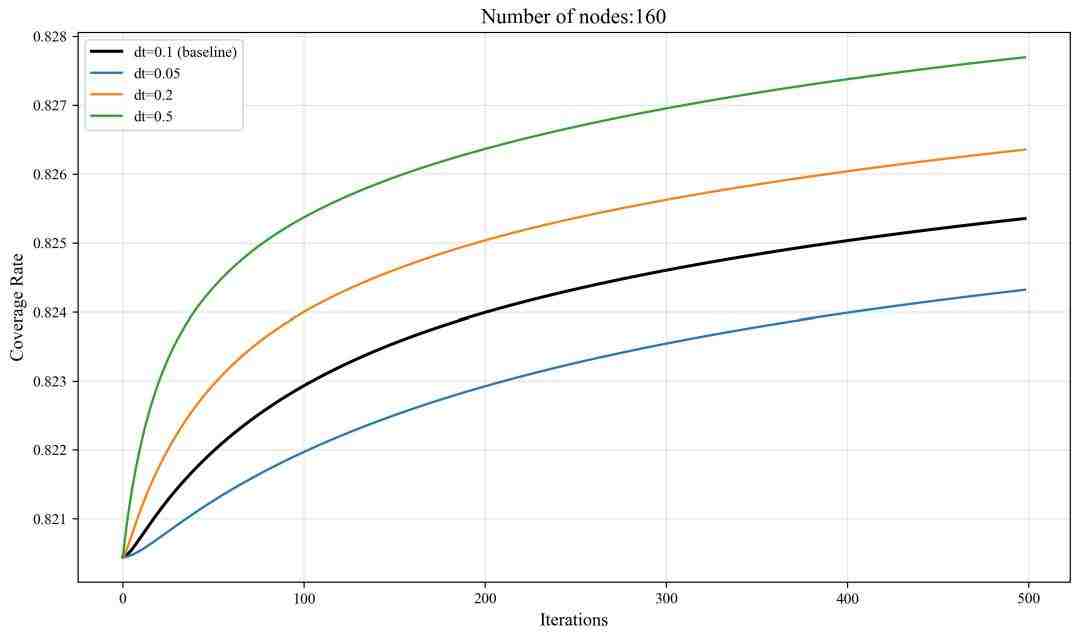}
		\caption{}
	\end{subfigure}
	\begin{subfigure}[b]{0.26\textwidth}
		\centering
		\includegraphics[width=\textwidth]{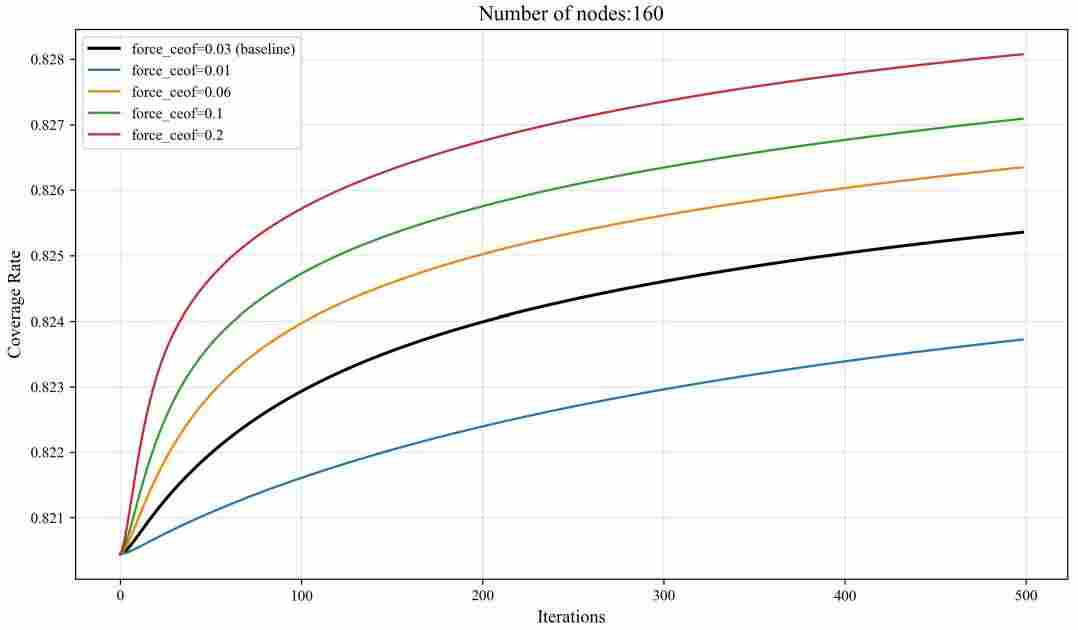}
		\caption{}
	\end{subfigure}
	\begin{subfigure}[b]{0.26\textwidth}
		\centering
		\includegraphics[width=\textwidth]{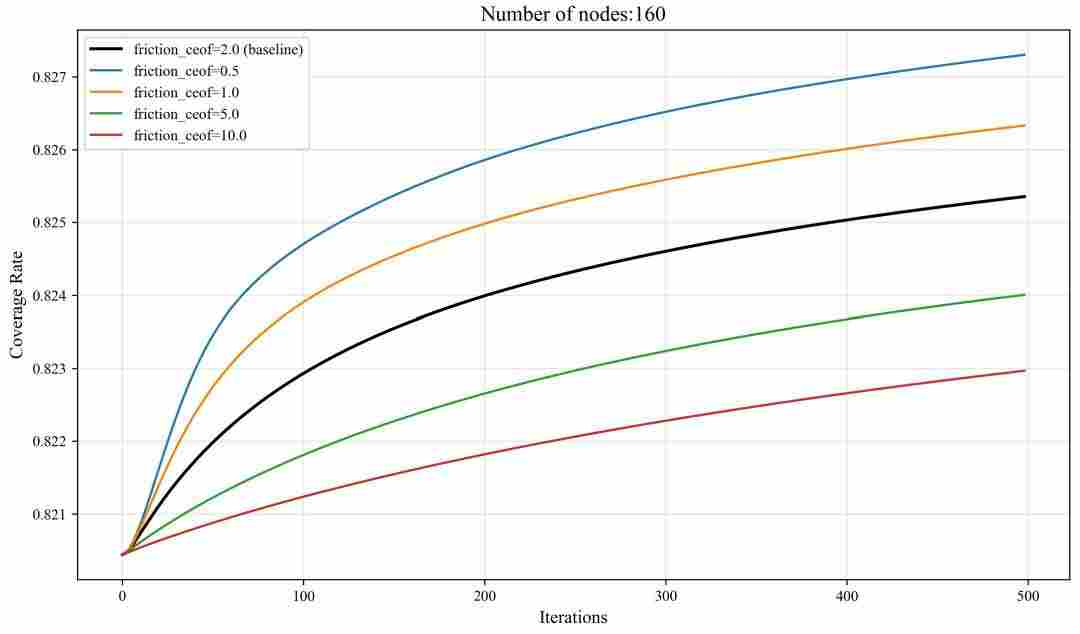}
		\caption{}
	\end{subfigure}
	\begin{subfigure}[b]{0.26\textwidth}
		\centering
		\includegraphics[width=\textwidth]{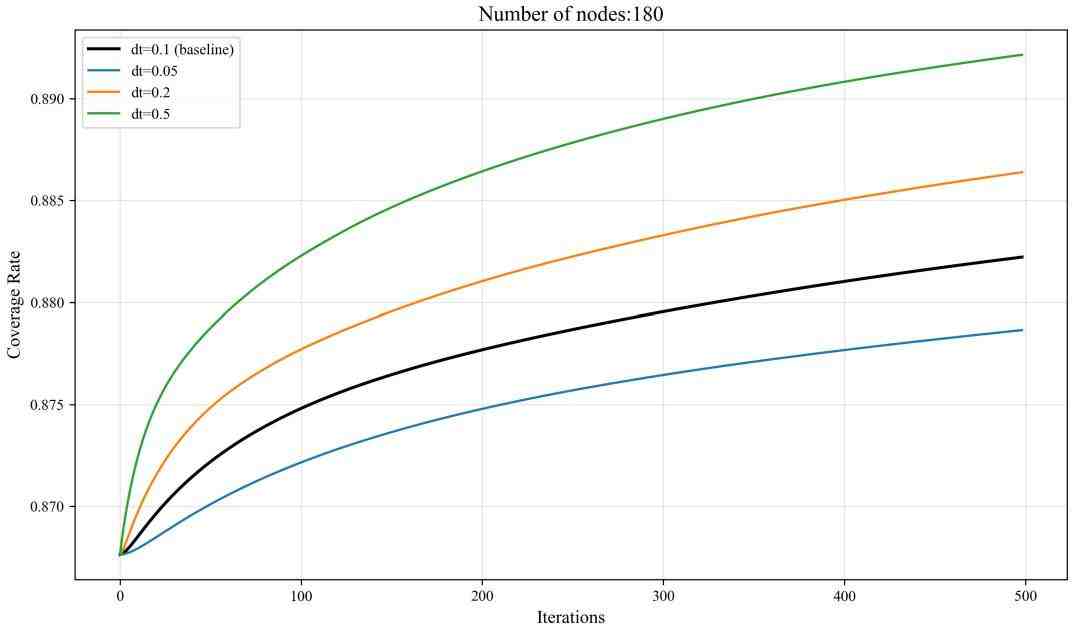}
		\caption{}
	\end{subfigure}
	\begin{subfigure}[b]{0.26\textwidth}
		\centering
		\includegraphics[width=\textwidth]{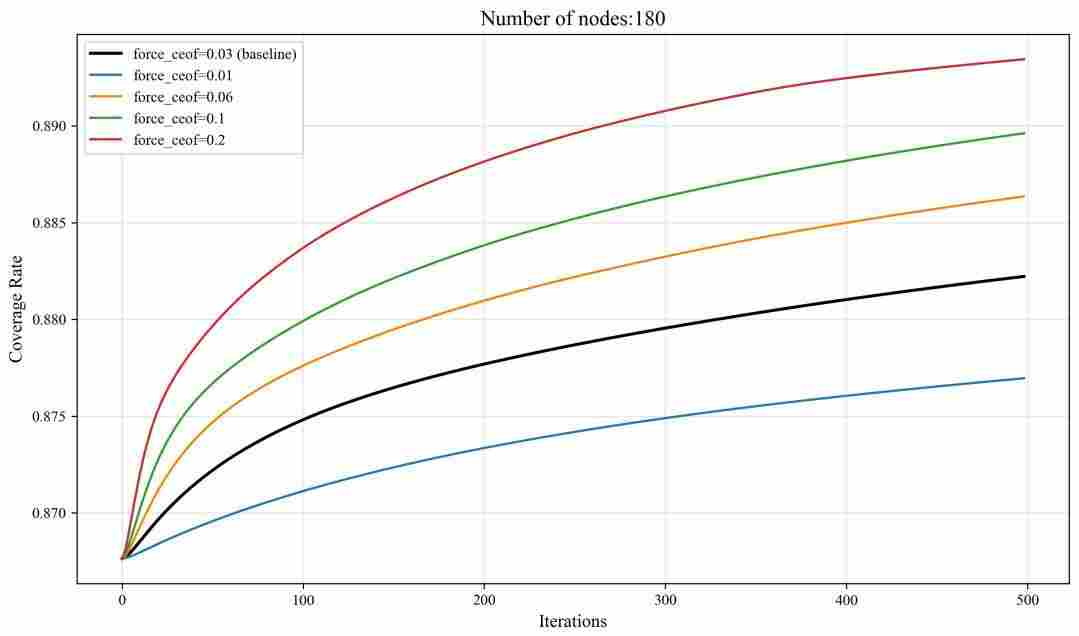}
		\caption{}
	\end{subfigure}
	\begin{subfigure}[b]{0.26\textwidth}
		\centering
		\includegraphics[width=\textwidth]{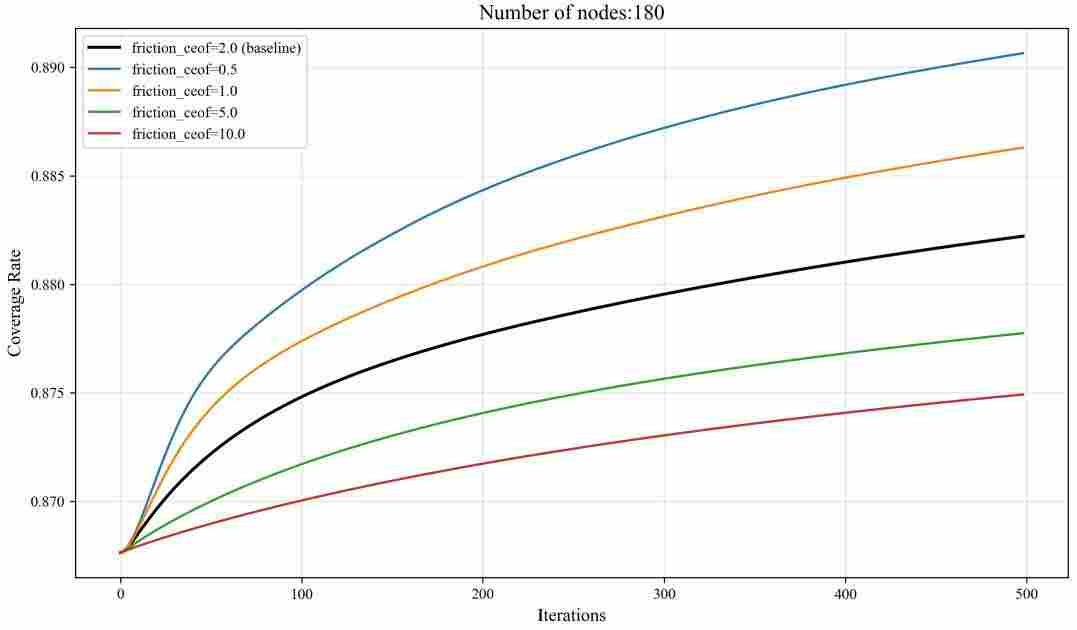}
		\caption{}
	\end{subfigure}
	\caption{Sensitivity analysis of key parameters (Friction coefficient, Attractive force coefficient, and Time step) with different nodes.}
	\label{Fig:11}
\end{figure}

\begin{figure}
	\centering
	\begin{subfigure}[b]{0.14\textwidth}
		\centering
		\includegraphics[width=\textwidth]{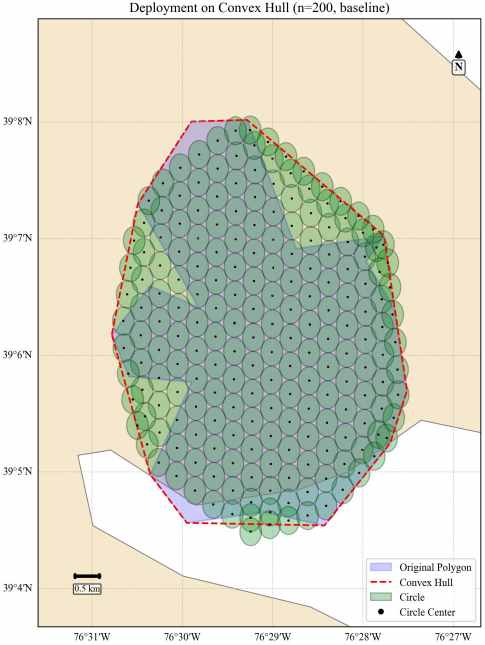}
		\caption{}
	\end{subfigure}
	\begin{subfigure}[b]{0.14\textwidth}
		\centering
		\includegraphics[width=\textwidth]{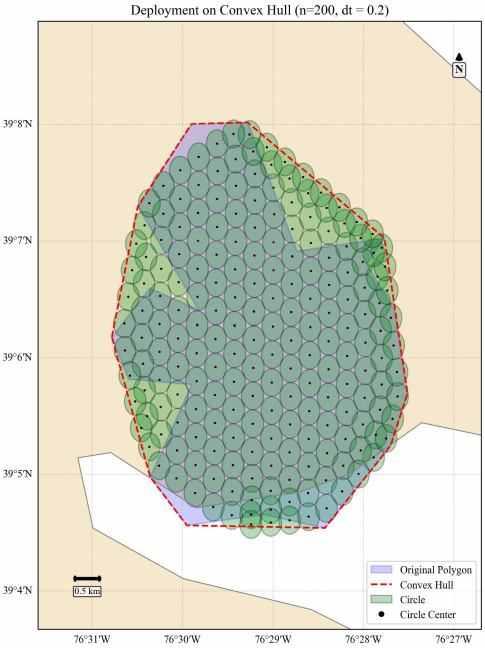}
		\caption{}
	\end{subfigure}
	\begin{subfigure}[b]{0.14\textwidth}
		\centering
		\includegraphics[width=\textwidth]{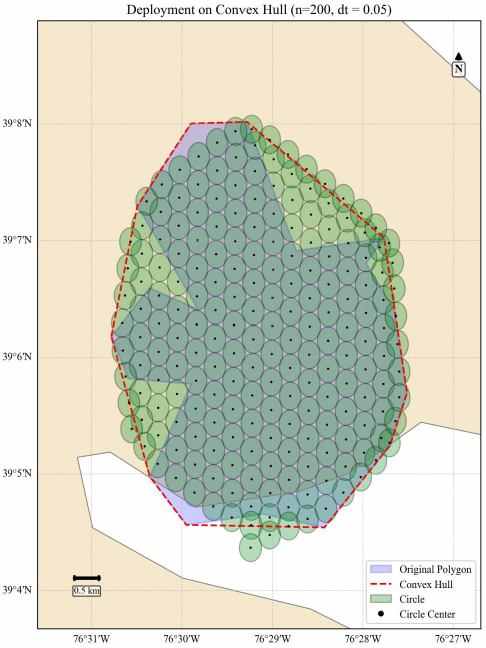}
		\caption{}
	\end{subfigure}
	\begin{subfigure}[b]{0.14\textwidth}
		\centering
		\includegraphics[width=\textwidth]{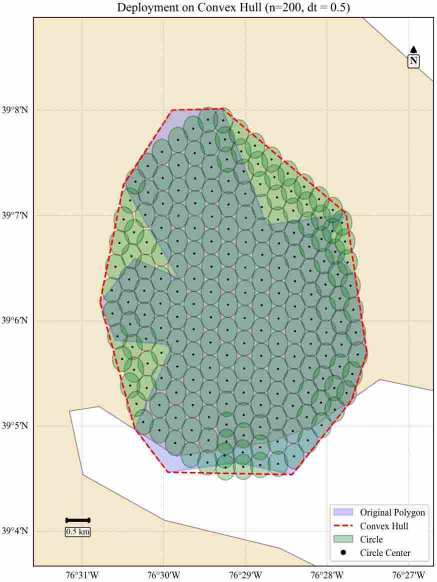}
		\caption{}
	\end{subfigure}
	\begin{subfigure}[b]{0.14\textwidth}
		\centering
		\includegraphics[width=\textwidth]{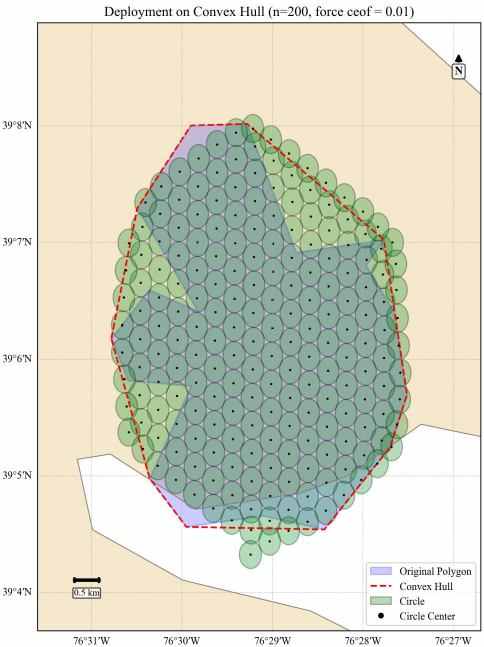}
		\caption{}
	\end{subfigure}
	\begin{subfigure}[b]{0.14\textwidth}
		\centering
		\includegraphics[width=\textwidth]{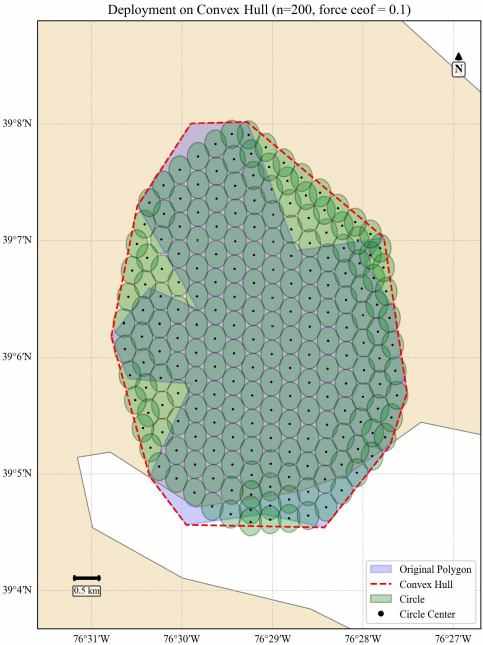}
		\caption{}
	\end{subfigure}
	\begin{subfigure}[b]{0.14\textwidth}
		\centering
		\includegraphics[width=\textwidth]{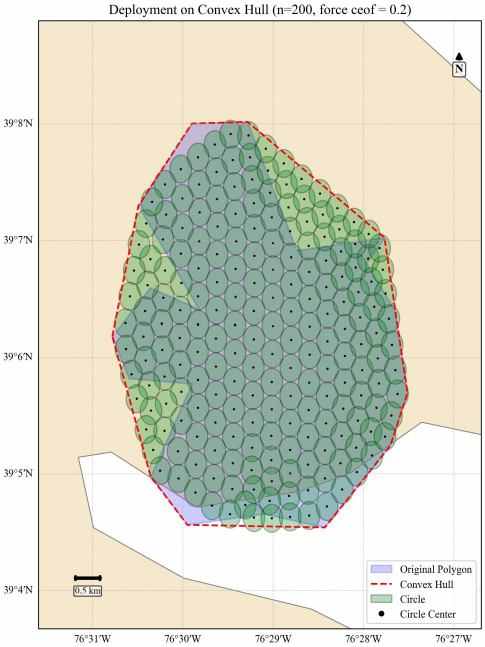}
		\caption{}
	\end{subfigure}
	\begin{subfigure}[b]{0.14\textwidth}
		\centering
		\includegraphics[width=\textwidth]{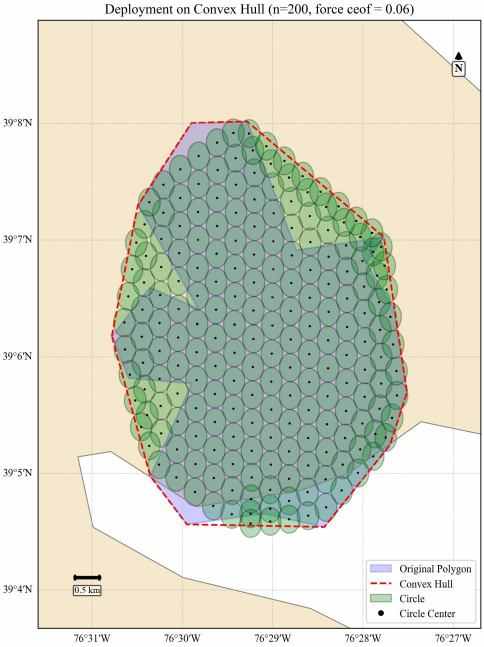}
		\caption{}
	\end{subfigure}
	\begin{subfigure}[b]{0.14\textwidth}
		\centering
		\includegraphics[width=\textwidth]{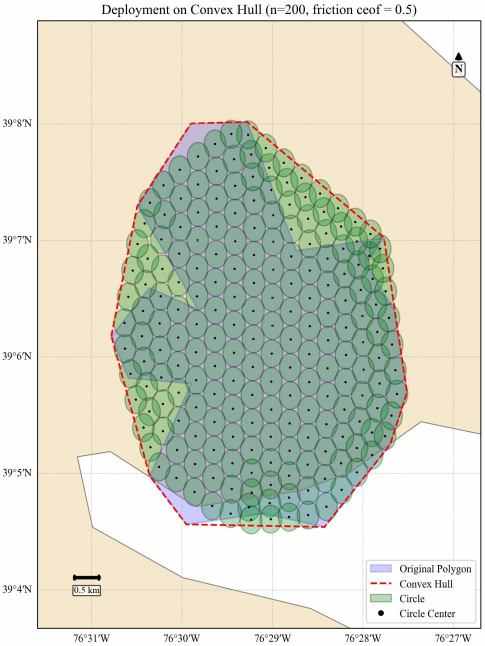}
		\caption{}
	\end{subfigure}
		\begin{subfigure}[b]{0.14\textwidth}
		\centering
		\includegraphics[width=\textwidth]{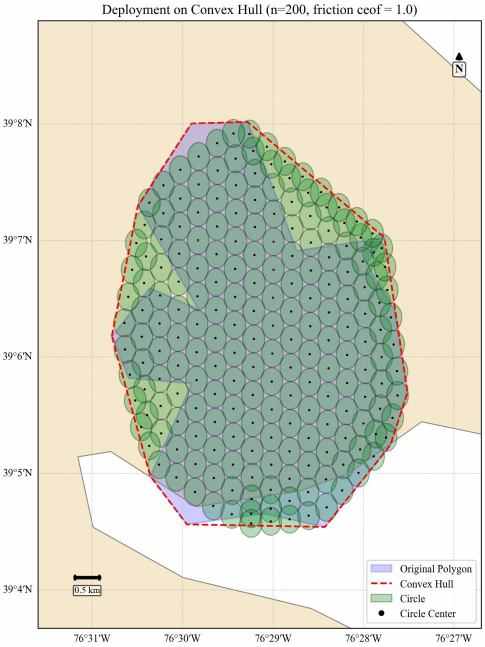}
		\caption{}
	\end{subfigure}
		\begin{subfigure}[b]{0.14\textwidth}
		\centering
		\includegraphics[width=\textwidth]{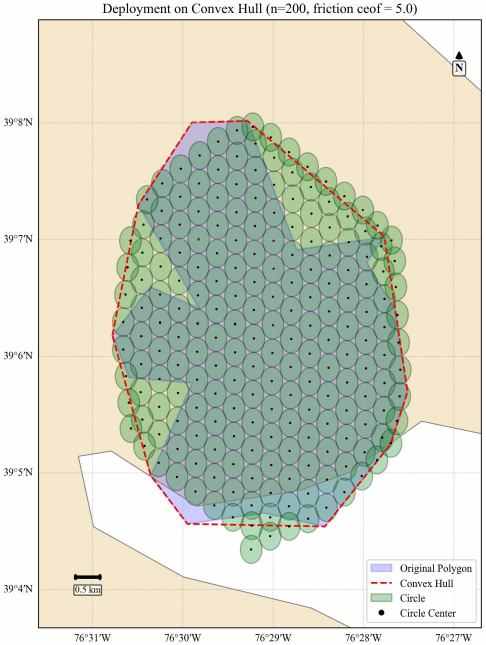}
		\caption{}
	\end{subfigure}
		\begin{subfigure}[b]{0.14\textwidth}
		\centering
		\includegraphics[width=\textwidth]{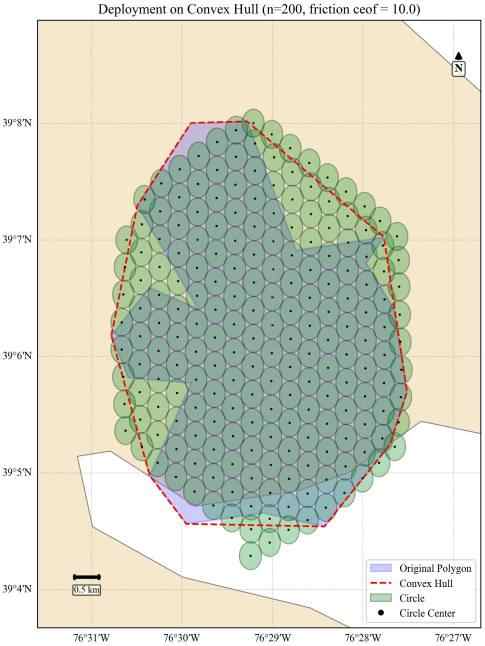}
		\caption{}
	\end{subfigure}
	\caption{Optimized configurations under different parameter combinations for 200 nodes.}
	\label{Fig:12}
\end{figure}

\section{Conclusion and future work}
This paper proposes a fast improved quasi-physical algorithm for solving the optimal wireless coverage in arbitrary convex polygons. The improvements are mainly summarized as follows:

(1) A structure-preserving initialization strategy based on regular hexagonal tiling via scaling and affine transformations is designed, which effectively enhances the quality of the initial solution and reduces the time required for subsequent iterative optimization;  

(2) A virtual force field incorporating friction and a radius-expansion optimization iteration model are constructed, expanding the global search space, reducing the degree of overlap among circles, and improving the quality of the final solution ultimately;  

(3) A boundary-surrounding strategy based on normal and tangential gradients is proposed. This strategy is used to retrieve circles that overflow the polygon boundary, allowing originally overflowing circles to re-enter the interior of the polygon for effective coverage, thereby significantly increasing the usage rate of circles and the coverage rate.  

We guarantee the convergence of the proposed algorithm through rigorous mathematical analysis and provide a thorough computational complexity study. Extensive numerical experiments, conducted on both synthetic data (simulating real-world scenarios) and real-world datasets, demonstrate that our algorithm  outperforms other metaheuristic algorithms consistently in terms of coverage rate and utilization rate that are two critical performance indicators. Notably, on extremely narrow convex domains, IQPD significantly outperforms its competitors while ensuring that all coverage circles remain within the target region. Furthermore, our algorithm substantially reduces optimization time, with runtime second only to VGSOK among all compared methods.

Ablation studies confirm that the integration of the three key components effectively improves coverage. Among these, the friction component exerts the most significant influence on overall coverage; its removal causes the algorithm to lose convergence entirely. The structure-preserving initialization effectively reduces convergence time, while the boundary encircling strategy optimizes the distribution of circles near edges, thereby enhancing overall coverage rate.

Sensitivity analysis reveals that IQPD exhibits low sensitivity to the key parameters \(\mu\), \(\gamma\), and \(\Delta t\) with respect to coverage rate, utilization rate, and minimum gap, demonstrating strong robustness and eliminating the need for complex parameter tuning. The minimum gap consistently remains around \(-0.002\), indicating that the algorithm successfully avoids excessive coverage overlap under all parameter configurations. Computational time and uniformity index are moderately affected. In particular, a larger \(\gamma\) improves distribution uniformity, while a smaller \(\mu\) combined with a moderate \(\gamma\) reduces computational time. The default parameter combination (\(\mu = 2.0\), \(\gamma = 0.03\), \(\Delta t = 0.1\)) achieves a favorable balance between coverage effectiveness and computational efficiency across different node scales (\(n = 160, 180, 200\)).

Although the proposed IQPD algorithm surpasses existing methods in three important metrics (CR,UR,CT), the research in this paper is limited to convex polygons. Real-world scenarios are also involving non-convex regions with holes or concave corners. Therefore, constructing a general circle coverage mechanism applicable to arbitrary two-dimensional planar regions is an important direction for future work. If these issues are successfully addressed, further research could be conducted on coverage optimization problems with dynamic adjustment of circle radii and numbers, optimization of unequal circle coverage, and the densest packing problem in three-dimensional space. It is anticipated that the IQPD algorithm can be applied to practical problems like wireless network coverage. Further research in this direction would contribute valuable solutions to operational optimization and resource allocation.

\newpage

\section*{Acknowledgments}
The work was supported by National Natural Science Foundation of China (Grant No. 12371016, 11871083) and National Key R \& D Program of China (Grant No. 2020YFE0204200).

\bibliographystyle{elsarticle-num-names}
\bibliography{cas-refs}
\end{document}